\newif\iflong
\newif\ifshort

\longtrue

\iflong
\else
\shorttrue
\fi
\documentclass[11pt,fleqn]{article}
\usepackage[utf8]{inputenc}

\usepackage{amsmath,amssymb,graphicx,amsthm}

\usepackage{fullpage}

\usepackage{booktabs}

\usepackage[numbers]{natbib}

\usepackage[pagebackref]{hyperref}

\usepackage{xifthen}

\usepackage{tikz}
\usetikzlibrary{decorations,arrows,petri,topaths,backgrounds,shapes,positioning,fit,calc,decorations.pathreplacing,patterns,intersections,decorations.pathmorphing,matrix}
\def \xsc {4}
\def \dist {.9}

\usepackage{float}
\usepackage{paracol}
\usepackage{mathtools}
\usepackage{dsfont}

\usepackage{multirow}

   \renewcommand{\citet}[1]{\citeauthor{#1}~\cite{#1}}
  
\makeatletter
\newcommand{\gettikzxy}[3]{%
  \tikz@scan@one@point\pgfutil@firstofone#1\relax
  \edef#2{\the\pgf@x}%
  \edef#3{\the\pgf@y}%
}
\makeatother

\tikzstyle{blueline} = [thick, blue, dotted]
\tikzstyle{redline} = [thick, red, dashed]
\tikzstyle{greenline} = [ultra thick, darkgreen, decorate,decoration=snake]
\tikzstyle{blackline} = [thick, black]
\tikzstyle{orangeline} = [thick, orange,  dash pattern={on 7pt off 2pt on 1pt off 3pt}]
\tikzstyle{concept} = [minimum height=4ex, inner sep=1pt, text centered, align=center]

\tikzstyle{agent} = [draw, circle, fill=black, minimum size=1.4ex, inner sep=1pt, text centered, align=center]
\pgfdeclarelayer{background}
\pgfdeclarelayer{foreground}

\DeclareMathOperator{\argmin}{argmin}
\newcommand{\rank}{\mathsf{rk}}

\usepackage[linecolor=green!70!black, backgroundcolor=green!10, bordercolor=black, textsize=small]{todonotes}

\usepackage{xcolor}
\definecolor{dargray}{rgb}{0.18, 0.18, 0.18}
\definecolor{darkgreen}{rgb}{0.01,0.6,0.1}
\definecolor{lightrose}{rgb}{0.996,0.75,0.793}
\definecolor{rose}{cmyk}{0.75, 0.75, 0,0}
\definecolor{winered}{rgb}{0.6,0.1,0.1}
\definecolor{lightyellow}{rgb}{1, 1, 0.6}
\definecolor{transparent}{rgb}{1,1,1}
\definecolor{lightlightgray}{rgb}{0.88, 0.88, 0.88}
\definecolor{lightgray}{rgb}{0.8, 0.8, 0.8}
\definecolor{lightblue}{rgb}{0.527,0.805,0.977}
\definecolor{lightgreen}{rgb}{.74,1,0}
\definecolor{darkblue}{rgb}{0,0,0.4}

\usepackage{paralist}

\newcommand{\pGNSPMl}{\probname{Globally Nearly Stable Perfect Matching}}
\newcommand{\pGNSPM}{\probname{Global-Near+Perf}}
\newcommand{\pLNSPMl}{\probname{Locally Nearly Stable Perfect Matching}}
\newcommand{\pLNSPM}{\probname{Local-Near+Perf}}

\newcommand{\pGLNSPMl}{\probname{Globally ({\normalfont or} Locally) Nearly Stable Perfect Matching}}

\newcommand{\pLNSEMl}{\probname{Locally Nearly Stable Egalitarian Matching}}
\newcommand{\pLNSEM}{\probname{Local-Near+Egal}}

\newcommand{\pGNSEMl}{\probname{Globally Nearly Stable Egalitarian Matching}}
\newcommand{\pGNSEM}{\probname{Global-Near+Egal}}

\newcommand{\pGLNSEMl}{\probname{Globally ({\normalfont or} Locally) Nearly Stable Egalitarian Matching}}

\newcommand{\nspp}[2]{#1 #2-nearly stable\xspace}
\newcommand{\nsp}[1]{#1 nearly stable\xspace}
\newcommand{\gns}[1]{%
  \nspp{globally}{#1}%
}

\newcommand{\lns}[1]{%
  \nspp{locally}{#1}%
}

\newcommand{\gnsnopa}[1]{\nsp{globally}}
\newcommand{\lnsnopa}[1]{\nsp{locally}}

\newcommand{\GNely}{Globally Nearly\xspace}
\newcommand{\LNely}{Locally Nearly\xspace}

\newcommand{\nstabilityp}[2]{%
  #1 #2-near stability\xspace}

\newcommand{\gnstability}[1]{%
 \nstabilityp{global}{#1}}

\newcommand{\lnstability}[1]{%
  \nstabilityp{local}{#1}%
}

\newcommand{\Nstability}{Near stability\xspace}
\newcommand{\Nstable}{Nearly stable\xspace}
\newcommand{\NStable}{Nearly Stable\xspace}

\newcommand{\nstability}{near stability\xspace}
\newcommand{\nstable}{nearly stable\xspace}

\newcommand{\gnstabilitynopa}{%
 global \nstability}

\newcommand{\lnstabilitynopa}{%
 local \nstability}

\newcommand{\pRMl}{\probname{Robust Matching}}

\sloppy

\usepackage{colortbl}

 \hypersetup{menucolor=orange!40!black,filecolor=winered,urlcolor=green!40!black, pdfencoding=auto, psdextra,
colorlinks=true,citecolor=green!60!black,linkcolor=winered,
pagebackref=true}

\newcommand{\myemph}[1]{{\color{darkgreen!70!black}\emph{#1}}}
\newcommand{\myparagraph}[1]{\smallskip
  
\noindent  \textbf{#1}}

\newtheorem{theorem}{Theorem}[section]
\newtheorem{corollary}[theorem]{Corollary}
\newtheorem{definition}[theorem]{Definition}
\newtheorem{lemma}[theorem]{Lemma}
\newtheorem{observation}[theorem]{Observation}
\newtheorem{proposition}[theorem]{Proposition}
\newtheorem{claim}{Claim}

\newtheoremstyle{defn}%
  {}%
  {}%
  {}%
  {}%
  {\bfseries}%
  {.}%
  { }%
  {}%
  \theoremstyle{defn}
  \newtheorem{example}[theorem]{Example}

\DeclareMathOperator{\sm}{\mathcal{S\!M}}
\DeclareMathOperator{\bp}{\mathcal{B\!P}}
\newcommand{\shifts}{\mathcal{S\!H}}
\newcommand{\shiftss}{\mathsf{shl}}

\newcommand{\polyone}{\ensuremath{\mathsf{poly}_1}}
\newcommand{\polytwo}{\ensuremath{\mathsf{poly}_2}}

\newcommand{\poly}{\ensuremath{\mathsf{poly}}}

\newcommand{\opt}{\ensuremath{\mathsf{opt}}}

\newcommand{\locald}{{\mathsf{d_L}}}
\newcommand{\globald}{{\mathsf{d_G}}}

\newcommand{\globaldapprox}{{\mathsf{d^{*}_G}}}

\newcommand{\globaldapproxbound}{{\polyone(\globald)\!\cdot\!\globald}}

\newcommand{\localdapproxbound}{{\polyone(\locald)\!\cdot\!\locald}}

\newcommand{\sucw}{\mathsf{succ}}

\newcommand{\egalcostn}{egalitarian cost}

\newcommand{\egalcost}{\ensuremath{\eta}}
\newcommand{\egalcostapprox}{\ensuremath{\egalcost^*}}
\newcommand{\egalcostapproxform}{\ensuremath{\polytwo(\egalcost)\cdot \egalcost+2}}
\newcommand{\egalcostapproxbound}{\ensuremath{\polytwo(\egalcost)\!\cdot\!\egalcost}}
\newcommand{\unmatched}{\ensuremath{\mathsf{n_{u}}}}
\newcommand{\matched}{\ensuremath{\mathsf{n_{m}}}}

\usepackage{needspace}

\newcommand{\prob}[6]{%
  \needspace{3\baselineskip}
    \begin{description} %
      \setlength\topsep{-.15ex} \setlength\itemsep{-.2ex}
    \item[#1]
    \item[\emph{#2}]#3
    \item[\emph{#4}]#5
    \end{description}%
}

\newcommand{\probname}[1]{{\normalfont\textsc{#1}}}

\newcommand{\probdef}[3]{\prob{\probname{#1}}{Input:}{#2}{Question:}{#3}{as}}

\newcommand{\proofparagraph}[1]{\par\smallskip\noindent\fbox{\textit{#1}}}

\usepackage[noend,ruled,linesnumbered]{algorithm2e}
\usepackage[ruled]{algorithm2e} %

\SetAlFnt{\small}
\SetAlCapFnt{\small}
\SetAlCapNameFnt{\small}
\SetAlCapHSkip{0pt}
\IncMargin{-\parindent}

\usepackage[sort&compress,nameinlink,capitalize]{cleveref}
\crefname{algorithm}{Algorithm}{Algorithms}
\crefname{proposition}{Proposition}{Propositions}
\crefname{observation}{Observation}{Observations}
\crefname{corollary}{Corollary}{Corollaries}
\crefname{theorem}{Theorem}{Theorem}
\crefname{lemma}{Lemma}{Lemmas}
\crefname{claim}{Claim}{Claims}
\crefname{section}{Section}{Section}
\crefalias{AlgoLine}{line}
\crefname{example}{Example}{Examples}
\crefname{definition}{Definition}{Definitions}
\crefname{figure}{Figure}{Figures}
\crefname{appendix}{Appendix}{Appendix}

\usepackage{etoolbox} %

\newcommand{\appref}[1]{{\hyperref[#1]{\appsymb}}}

\newcommand{\toappendix}[1]{%
     {#1}
 }

\newcommand{\toappendixalter}[3]{%
     #2 {#3}
  }
 
\newcommand{\appendixproofwithstatement}[3]{%
    #3
}

\newcommand{\appendixcorrectnessproofwithstatement}[4]{%
    #4
}

\newcommand{\appendixsection}[1]{%
}

\newcommand{\rotationexample}{
  \begin{tikzpicture}
     \def \xsc {4}
     \def \dist {.65}
     \foreach \j/\i/\p/\o/\s/\a/\b/\c/\d  in
     {u/1/0/left/w/2/3/1/4,
    u/2/0/left/w/3/4/2/1, u/3/0/left/w/4/1/3/2, u/4/0/left/w/1/2/4/3,%
    w/1/1/right/u/1/2/3/4,
    w/2/1/right/u/2/3/4/1,
    w/3/1/right/u/3/4/1/2,
    w/4/1/right/u/4/1/2/3} {
    \node[minimum size=3ex, inner sep=5pt] at (\p*\xsc, -\i*\dist) (n\j\i) {$\j_\i\colon  \s_\a \s_\b\s_\c \s_\d$};
  }

  \foreach \s/\t in {u1/w2,u2/w3,u3/w4,u4/w1} {
    \draw[redline] (n\s.east) -- (n\t.west);
  }

  \foreach \s/\t in {u1/w1,u2/w2,u3/w3,u4/w4} {
    \draw[blackline] (n\s.east) -- (n\t.west);
  }

\end{tikzpicture}
}

\newcommand{\rotatione}{(u_3,w_2,u_4, w_1)}

\newcommand{\appendixexample}[4]{%
    {#4}
}

\newcommand{\appendixfigure}[4]{%
      #4
}

\title{Matchings under Preferences: Strength of Stability and~Trade-Offs}

\author{Jiehua Chen \and Piotr Skowron \and Manuel Sorge\\[2ex] {\small University of Warsaw, Warsaw, Poland}\\ {\footnotesize \texttt{jiehua.chen2@gmail.com} \quad \texttt{p.skowron@mimuw.edu.pl}\quad \texttt{manuel.sorge@gmail.com}}} 

\date{}

\begin{document}

\maketitle

\begin{abstract}
  We propose two solution concepts for matchings under preferences: \emph{robustness} and \emph{near stability}.
  The former strengthens while the latter relaxes the classic definition of stability by Gale and Shapley (1962). Informally speaking, robustness requires that a matching must be stable in the classic sense, even if the agents slightly change their preferences. Near stability, on the other hand, imposes that a matching must become stable (again, in the classic sense) provided the agents are willing to adjust their preferences a bit.
Both of our concepts are quantitative; together they provide means for a fine-grained analysis of the stability of matchings.
Moreover, our concepts allow the exploration of trade-offs between stability and other criteria of social optimality, such as the egalitarian cost and the number of unmatched agents. We investigate the computational complexity of finding matchings that implement certain predefined trade-offs.
We provide a polynomial-time algorithm that, given agent preferences, returns a socially optimal robust matching, and we prove that finding a socially optimal and nearly stable matching is computationally hard.
\end{abstract}

\section{Introduction}\label{sec:intro}

In the \textsc{Stable Marriage} problem~\cite{GaleShapley1962} we are given two disjoint sets of agents, $U$ and $W$. %
Each agent from one set has a strict preference list that ranks a subset of the agents from the other set.
The sets of agents and their preference lists are collectively called \myemph{preference profile}.
The goal is to find a matching---i.e., a bijection between $U$ and $W$---that does not contain a \myemph{blocking pair}, i.e., a pair of agents who prefer each other over their matched partners.  
A matching with no blocking pairs is called a \myemph{stable matching}.

The classic definition of stability is qualitative: %
A matching can be either stable or not, and there are no other states in between or beyond. In this paper, by contrast, we take a quantitative approach.
We propose and study two solution concepts:~\myemph{robustness} and \myemph{\nstability}, where the former strengthens and the latter relaxes the notion of stability.
Intuitively, a robust matching is more than stable; it remains stable even if agents change their preferences slightly. In contrast, a \nstable matching needs not be stable for the original profile,
but it becomes so after some minor changes in the preferences. 
Below we give more precise definitions of robust and \nstable matchings and motivate their study through a number of observations.

\looseness=-1
\paragraph{Robust matchings.}
Our first main observation is that the preference lists provided by the agents do not always reflect their true preferences. This can happen, for instance, because the agents do not have full information about their potential partners, or because formulating accurate preferences is a hard task that requires substantial cognitive effort~\cite{Mil56}.
It is also typical that the agents change their preferences over time, for instance, in response to changes in their operating environment%
. Thus, a matching that is stable in the classic sense (with respect to the preferences expressed by the agents at the beginning) can in fact contain two or more agents who already have or will likely have incentives to drop their assigned partners and be matched with each other. In other words, there are situations where the classic definition of stability can turn out to be too weak. %
In a different setting, a third party may want to destabilize a
matching by bribing certain agents to change their preferences. In that case,
we are interested in stable matchings which defy such
attacks.

For the above reasons, we introduce and
study %
\myemph{$d$-robustness}, a strengthened notion of stability. A
matching is \myemph{$d$-robust} for a given preference profile if
this matching is stable and remains stable after performing an arbitrary sequence of $d$ swaps.
Here, a \myemph{swap} is the reversal of two consecutive agents in a
preference list. Intuitively, if a matching is $d$-robust for some
reasonably large~$d$, then it will not become unstable even if the
agents specified slightly inaccurate preferences, nor will it become
unstable even if the agents change their preferences by a little.
\Cref{ex:intro_example} below illustrates the concept of robustness.%

\begin{example}\label[example]{ex:intro_example}
 Consider the profile~$P$ below with $4$ men and $4$ women, where the preference lists are to the right of the corresponding agents; preferences are represented as horizontal lists where more preferred agents are put to the left of the less preferred ones.

 \begin{center}
   \rotationexample
 \end{center}
\noindent This profile admits five stable matchings:

\begin{compactenum}[(1)]
\item The $U$-opt.\ stable
  matching~$M_1\!=\!\{\{u_1, w_2\}, \{u_2, w_3\}, \{u_3, w_4\},
  \{u_4,w_1\}\}$~(red~dashed~lines),

\item the $W$-opt.\ stable matching~$M_2\!=\!\{\{u_1, w_1\}, \{u_2, w_2\}, \{u_3, w_3\}, \{u_4,w_4\}\}$~(black~solid~lines), 

\item $M_3=\{\{u_1, w_3\}, \{u_2, w_2\}, \{u_3, w_1\},
  \{u_4,w_4\}\}$, %

\item $M_4\!=\!\{\{u_1, w_1\}, \{u_2, w_4\}, \{u_3, w_3\}, \{u_4,w_2\}\}$, and %
\item $M_5\!=\!\{\{u_1, w_3\}, \{u_2, w_4\}, \{u_3, w_1\}, \{u_4,w_2\}\}$.
\end{compactenum}
\noindent
In terms of robustness, $M_2$ is superior to~$M_1$ since $M_2$ is $1$-robust but $M_1$ is not. To see that $M_2$ is $1$-robust we observe that, to make $M_2$ unstable, we need to perform one swap in the preference list of an agent in~$W$. However, no such single swap will make $M_2$ unstable.
Stable matching~$M_1$ is not $1$-robust since one can swap in the preference list of any agent~$u$ from $U$ the two agents $M_1(u)$ and $w$ in the first and the second positions 
to obtain a profile where~$\{u,w\}$ is a blocking pair for~$M_1$.
\hfill $\diamond$
\end{example}

\paragraph{\Nstable matchings.}
Our second main observation is that there
exist other factors, apart from the preferences, that can discourage
the agents to break their relations with their matched partners. Such
factors may include social pressure and additional costs incurred by
changing the partner, for example. Thus, in some situations even weaker forms
of stability may guarantee a sufficient level of
resilience to agents changing their minds.
We express this as the \myemph{\lnstability{$d$}} of a matching, which stipulates that there is a
sequence of swaps such that the matching becomes stable, and in each agent's preference list, at
most $d$~swaps are made.

\looseness=-1
This concept has an intuitive interpretation similar to the
$\epsilon$-Nash-equilibrium~\cite[Section 2.6.6]{Papadimitriou07} in game theory: 
In a \lns{$d$} matching no agent can improve
its satisfaction by more than $d$ through rematching (see also the
equivalent definitions in \Cref{prop:r-maximal-d-bp-local-d}). This is analogous to
$\epsilon$-Nash-equilibria, where no agent can improve their outcome by
more than $\epsilon$. In this sense, \lnstabilitynopa also measures
the strength of the incentive for two agents in a blocking pair to change their partners.\footnote{There are some differences between the two concepts since we deal
with ordinal preferences. Yet, our concepts generalize to cardinal utilities, where the similarities are more transparent.}

Our third main observation is that, when there are constraints on
other factors of the matching like social welfare (see below), it may
not be possible to find a stable matching satisfying these
constraints. Thus, it may be necessary
to balance between the social welfare and the costs incurred by agents
that want to switch partners. This cost is captured by the \myemph{\gnstability{$d$}} of a matching~$M$, stating that there is a
sequence of at most~$d$ swaps in total such that~$M$ becomes stable.
In order to achieve the desired social welfare, we may thus provide
proportionate compensation to the agents affected by the swaps. 

Taking \nstable matchings into consideration may indeed allow
us to find a matching that is significantly better from the
perspective of the society as a whole, than if we restricted ourselves
to stable matchings only. This is illustrated in the following
example:

\begin{example}\label[example]{ex:egalitarian_cost}
Let $U = \{a_0, \ldots, a_{n-1}, x_1, \ldots, x_{n}\}$ and $W = \{b_0, \ldots, b_{n-1}, y_1, \ldots, y_{n}\}$, and consider the following preference profile $P$ of the agents; the index ``$i+1$'' is taken modulo $n$.

\medskip
\begin{tabular}{lllll}
  & $a_0\colon b_0 \; b_1$ %
                  \qquad     && $b_0\colon a_0 \; a_{n-1}$  \\ 
& $a_i\colon b_i \; b_{i + 1}$  %
                                                                                                 \qquad     && $b_i\colon a_{i - 1} \; x_1 \; \ldots  \; x_n \; a_i$  \qquad &(\text{for all $i \in \{1, \ldots, n - 1\}$})\text{.} \\ 
  &$x_i  \colon y_i  \; b_1 \;\ldots\; b_{n-1}$                                               \qquad && $y_i\colon x_i$                        \qquad &(\text{for all $i \in \{1, \ldots, n\}$})\text{.}\\
\end{tabular}

\medskip
                                                                                                      
\noindent\looseness=-1 In every stable matching of $P$ agent~$x_i$ must be matched with $y_i$ for all $i \in \{1, \ldots, n\}$, and $a_0$ with~$b_0$.
Consequently, $a_{1}$ needs to be matched with with $b_{1}$ and, by an inductive argument, we can infer that, for each $i \in \{1, \ldots, n-1\}$, $a_i$ must be matched with $b_i$.
Thus, if we look at the agents from $B= \{b_1, \ldots, b_n\}$, we observe that, except for $b_0$, each of them is matched with a partner ranked at the $(n+2)$th position.
Yet, if we consider the profile obtained from $P$ by swapping $a_0$ and $a_{n-1}$ in the preference list of $b_0$, then $M = \{\{a_0, b_1\}, \{a_1, b_2\}, \ldots, \{a_{n-1}, b_{0}\}\}\cup \{ \{x_i, y_i\} \mid 1\le i \le n\}$ would be a stable matching. In this matching everyone is matched to one of her two most preferred agents. 
\hfill $\diamond$
\end{example}

Intuitively, \Cref{ex:egalitarian_cost} shows that with a relatively small loss of stability, one can significantly improve the social cost of a matching $M$---in this example, this cost is defined as the sum of ranks that an agent $a$ has in the preference list of its matched partner~$M(a)$. In the literature this measure is often referred to as the egalitarian cost of a matching~\cite{IrLeGu1987}. We also consider another metric that counts the number of agents that are assigned a partner in a matching. Recall that we assume that the preference rankings of the agents are incomplete: the agents \emph{do not} rank those from the opposite set that they would not agree to be matched to. In such a case a stable matching does not need to be \emph{perfect}, i.e., it is possible that some agents will not be matched at all. The effect of stability on the number of matched agents is illustrated in \Cref{ex:perfectness}.

\begin{example}\label[example]{ex:perfectness}
  Consider a profile with $2$ men, $a_1$ and $a_2$, and $2$ women, $b_1$ and $b_2$, with preference lists:
  $a_1\colon b_1;\quad b_1\colon a_2 \; a_1; \quad a_2\colon b_1 \; b_2;\quad b_2\colon a_2\text{.}$
For this profile, the only stable matching is $\{\{a_2, b_1\}\}$. However, if we swapped $b_1$ and $b_2$ in the preference list of $a_2$, then $\{\{a_1, b_1\}, \{a_2, b_2\}\}$ would be a stable matching, i.e., we would obtain a stable matching where more agents have partners.
\hfill $\diamond$
\end{example}
  
\Cref{ex:egalitarian_cost,ex:perfectness} suggest that there is a
(possibly non-linear) trade-off between stability and other criteria
of social optimality. Our definition of \nstability provides a
formalism necessary to describe the trade-offs; yet, in order to take
advantage of them, one needs to be able to identify situations where a
large improvement of social welfare is possible with a relatively
small sacrifice of stability. We formalize this question as a
computational problem (see \cref{subsec:def:robust+nearstable} for formal definitions) and study its complexity. 

\paragraph{Our contributions.}

We introduce the concepts of robustness and \nstability, and explore the trade-off between stability and the egalitarian cost and between stability and the number of matched agents.
We provide a polynomial-time algorithm that, given a preference profile and a number~$d$, finds a matching which is $d$-robust if it exists (\cref{thm:d-robust-poly}).
We achieve this by providing a polynomial-size characterization of the profiles (\cref{sec:profile-decomposition}) which are close to the input profile
and by heavily exploiting the structural properties of so-called rotations adherent to a preference profile~\cite{GusfieldIrving1989}. Moreover, we provide a polynomial-time algorithm that finds a $d$-robust matching with minimum \egalcostn\ if one exists (\cref{cor:d-robust-perfect-egal-poly}). However, when ties are present, we show that finding a robust matching is NP-hard (\cref{thm:robust-ties-np-hard-d-unbounded}). 

In contrast to the polynomial-time algorithms for robust matchings, we
show that the problem of finding a matching that implements a certain
trade-off between the near stability and the egalitarian cost, or
between the near stability and the perfectness of the matching is
NP-hard, and it is NP-hard to approximate
(\cref{thm:nearly-stable-inapproximable}). Motivated by this general
hardness result, we study the parameterized complexity, mainly with
respect to the parameter number of allowed swaps (for details on parameterized complexity we refer to the books of \citet{CyFoKoLoMaPiPiSa2015}, \citet{DF13},  \citet{FG06}, and \citet{Nie06}). See
\cref{tab:summary} for a summary. Unfortunately, we mostly obtain
further hardness results. While for \lnstabilitynopa\ even only one
allowed swap leaves the problem NP-hard
(\cref{thm:nearly-stable-inapproximable}), for \gnstabilitynopa\ there is
a polynomial-time algorithm for each constant number~$\globald$ of
allowed swaps~(\cref{prop:xp-swaps}). The exponent in the running time
depends on $\globald$, however, and this dependency cannot be removed
unless the unlikely complexity-theoretic collapse FPT${}={}$W[1]
happens~(\cref{thm:global-w1hard-swaps}). We also study the complexity
in the cases where there are small numbers of unmatched or matched
agents in a classically stable matching of the input profile.

\begin{table}[t!]
\caption{Summary of our results, where $d$ denotes the number of swaps for robust matchings, $\locald$ (resp.\ $\globald$) denotes the number of swaps for \gnstabilitynopa (resp.\ \lnstabilitynopa), $\egalcost$ denotes the egalitarian cost of the desired matching, and %
  $\unmatched$ %
  denotes the number of %
  unmatched
  agents in any stable matching of the initial profile without ties.}\label{tab:summary}
\resizebox{\textwidth}{!}{
\begin{tabular}{|p{1.8cm}|c|c|c|c|}
  \toprule
 Social& {Robust} & {Robust}&{\GNely Stable} & {\LNely Stable} \\
  criteria & (without ties)& (with ties)& (without ties)& (without ties)\\\midrule
  No further & P &  {NP-h ($d=1$)} & \multicolumn{2}{c|}{Always exists even for $\globald=\locald=0$}\\
  restrictions & [Thm~\ref{thm:d-robust-poly}] &  {[Thm~\ref{thm:robust-ties-np-hard-d-unbounded}]} & \multicolumn{2}{c|}{and can be found in $O(n^2)$ time~\cite{GaleShapley1962,GusfieldIrving1989}}\\\hline
  &&&&\\[-2ex]
  Perfect & P& NP-h ($d=0$) & XP for $\globald$, W[1]-h for $\unmatched$~[Cor~\ref{cor:w1h-unmatched}]  & NP-h ($\locald=1$)~[Thm~\ref{thm:nearly-stable-inapproximable}], W[1]-h for $\unmatched$~[Cor~\ref{cor:w1h-unmatched}] \\
  matching  & [Thm~\ref{cor:d-robust-perfect-egal-poly}] & \cite{MaIrIwMiMo2002} & \emph{No} poly-approximation%
                   ~[Thm~\ref{thm:nearly-stable-inapproximable}] & \emph{No} poly-approximation %
                                                                   ~[Thm~\ref{thm:nearly-stable-inapproximable}] \\\hline
  &&&&\\[-2ex]
  Egalitarian &  P &  NP-c ($d=0$)& XP for $\globald$, W[1]-h for $\unmatched$ & NP-h ($\locald=1$)~[Thm~\ref{thm:nearly-stable-inapproximable}], W[1]-h for $\unmatched$\\
  cost~$\egalcost$ & [Thm~\ref{cor:d-robust-perfect-egal-poly}] & \cite{MaIrIwMiMo2002} & \emph{No} poly-approximation %
                          ~[Thm~\ref{thm:nearly-stable-inapproximable}] & \emph{No} poly-approximation %
                                                                          ~[Thm~\ref{thm:nearly-stable-inapproximable}] \\
\bottomrule
\end{tabular}
}
\end{table}
\paragraph{Related work.}
For an overview on the \textsc{Stable Marriage} and related problems, we refer to the books of \citet{Knuth1976}, \citet{GusfieldIrving1989}, and \citet{Manlove2013}.

First, we review work related to our concept of robustness.
As we mentioned in the beginning of this section, one of the observations that motivates our study of robust matchings is that the preferences of the agents may be uncertain.
In this regard, \citet{AzBiGaHaMaRa2016}, \citet{MiOk2017}, and \citet{CheNieSkoECmstable2018} study a variant of \textsc{Stable Marriage} where  
there is a collection of ``possible'' preference profiles given as input, and
they look for a matching that is stable in each of the given profiles 
(the corresponding computational problem is NP-hard even for constant number of input profiles).
Our work starts with the assumption that the preferences provided by the agents are a good approximation of their true preferences.
Thus, our robustness concept respects every profile that is close to the preferences provided by the agents.
This makes a crucial difference---finding a robust matching if one exists, according to our definition, is solvable in polynomial time.

\looseness=-1
Our robustness concept is related to the works of~\citeauthor{MaiVaz2018}~\cite{MaiVaz2018,MaiVaz2018-birkhoff-arxiv}. 
They introduced a probabilistic model, where there are \emph{polynomially} many preference profiles given in the input,
each differing from the original one by a \emph{single} agent's preference list.
While they do not assume this difference to be small,
they assume there is a probabilistic distribution over these polynomially many preference profiles, and 
the goal is to find a stable matching that stays stable with the highest probability.
In contrast, in our definition of robustness,
we require that the sought matching must be stable in \emph{every} profile which is close to the original one,
but which can differ from the original profile in preference lists by more than one agent.
Furthermore, we do not assume that the distribution of the profiles is given,
but rather infer the ``relevant'' (close) profiles directly from the original profile. Our approach, based on the concept of distances, induces
a quantitative measure of the strength of stability; we further extend it in the converse direction by considering matchings that are \nstable,
getting a full set of tools that allow to reason about the strength of stability for any matching.

\citeauthor{MaiVaz2018}~\cite{MaiVaz2018,MaiVaz2018-birkhoff-arxiv} proved that a matching
that stays stable with the largest probability for a given probability
distribution can be found in polynomial time as long as only a single agent changes her preferences in each profile. Their
techniques crucially rely on the fact that, for each of the preference profiles~$P'$ that has nonzero probability, the set of matchings
that are stable for both $P'$ and the input profile~$P$ has a certain type of sublattice structure of the lattice of stable matchings of~$P$.
In our model, this is not the case; the set of stable matchings for the preference profiles obtained by $d$ arbitrary
swaps may not be a sublattice anymore. 
Thus, this approach is not directly applicable in our scenario.
Moreover, in the general case, where we allow arbitrary changes between profiles, it is impossible to obtain a polynomial-time algorithm unless P${}={}$NP: \citet{MiOk2017} and \citet{CheNieSkoECmstable2018} showed that finding a matching which is stable for even only two profiles is NP-hard.
Nevertheless, we provide a compact characterization of all close profiles,
based on which, and partly inspired by the techniques of \citet{MaiVaz2018}, we provide a polynomial-time algorithm for our robustness model.
In this regard, our algorithmic techniques can be considered as a
generalization of the ones by
\citet{MaiVaz2018}.

Finally, let us mention a relation between robustness and strategy-proofness. We say that a matching algorithm is \myemph{strategy-proof}~(see \cite[Chapter 4]{GusfieldIrving1989}, \cite[Chapter~1.7]{RothSotomayor1992}, \cite[Chapter~2.9]{Manlove2013}) if no agent can obtain a better partner by misreporting her preferences; it is known that there exists no strategy-proof matching mechanism. Robustness implies a very weak form of strategy-proofness, where the set of agents' strategies is limited---the agents are willing to report only those rankings that are not significantly different from their true preferences. 
Even more closely related, robustness implies resilience to certain forms of bribery---the problem of bribery, originally defined for single-winner elections~\citep{FalRot16}, can be naturally adapted to matchings.         

\looseness=-1
Now we turn to work that is more related to our \nstability{} concept.
Another interpretation of a \lns{$d$} matching is that in each blocking pair there is an agent whose rank improvement by switching partners would be at most~$d$. \citet{DruBou13} use this rank improvement approach to study \textsc{Stable Marriage} problem under \emph{partially ordered} preferences. They introduced the notion of an \myemph{$r$-maximally stable} matching, i.e.,
a matching such that for each linear completion of the input profile
and for each unmatched pair at least one agent in the pair ranks the other higher than its matched partner by at most $r$ positions.
When restricting the input to linear preferences, as is our focus here, $r$-maximal stability is equivalent to \lnstability{r} for each $r \ge 0$.
We prove this formally in \cref{prop:r-maximal-d-bp-local-d}. 
Here, in contrast, we do not deal with partial preferences, but instead we want to achieve a given social welfare in addition to $r$-maximal stability.

\citet{pini_stability_2013} and \citet{anshelevich_anarchy_2013} studied a concept called (additive)
$\alpha$-stability that measures the degree of instability for
utility-based preferences.
For ordinal preferences, their concept is equivalent to our \lnstability{$\alpha$}.
\citet{anshelevich_anarchy_2013} studied the trade-off between the total utility of a matching and its $\alpha$-stability for restricted structures of utility scores (which
cannot model ordinal preferences). \citet{pini_stability_2013} showed
that a certain kind of lexicographically optimal $\alpha$-stable
matching can be found in polynomial time and
they considered manipulation issues.

\looseness=-1
Finally, we review further related work, not necessarily directly
related to our notions of robustness or \nstability. Recently, \citet{MenLar18} proposed a different robustness concept to deal with uncertain preferences--the authors assume that each agent has preferences with ties on the agents of the opposite set and look for a perfect matching so as to minimize the maximum number of blocking pairs among all linear extensions of the input preferences.
In contrast to our approach, however, these blocking pairs may represent an arbitrarily large rank improvement, i.e. an arbitrarily large number of swaps needed to make the matching stable.
Finding a solution as above is equivalent to finding a \emph{perfect} matching with minimum number of so-called \emph{super-blocking pairs}, a concept introduced by \citet{Irving1994} to cope with preferences with ties, i.e.\ weak orders  (also see \cite{GusfieldIrving1989}).
\citet{MenLar18} mainly obtained inapproximability results.

\citeauthor{GenSiaSimSul17}~\cite{GenSiaSimSul17,GenSiaSimSul17b} provide yet another view on robustness in the context of stable matchings. They define an $(x, y)$-supermatch as a stable matching that satisfies the following property: If any $x$~agents break up, it is possible to rematch these $x$ agents so that the new matching is again stable; further, this re-matching must be done by breaking at most $y$~other pairs. Hence, an $(x,y)$-supermatch
may not be robust in our sense, but it needs to be easy to repair.

\looseness=-1
In the second part of this paper we study trade-offs between the stability (of various strength) and other criteria of optimality such as the egalitarian cost and the number of unmatched agents. This is related to the studies on the price of stability in matching markets~\citep{BouKno13}.
Concepts similar to our robustness have been also studied in other contexts, for instance for single-winner~\citep{ShiYuElk13} and multi-winner elections~\citep{BreFalKacNieSkoTal18}. 
\section{Basic Definitions, Notations, and Our Stability Concepts}
\label{sec:defi}
\looseness=-1
For each natural number $t$ by $[t]$ we denote the set~$\{1, 2, \ldots, t\}$.
Let $U = \{u_1, \ldots, u_n\}$ and $W = \{w_1, \ldots, w_n\}$ be two $n$-element disjoint sets of agents.
A \myemph{preference profile}~$P=((\succ_u^P)_{u\in U}, (\succ_w^P)_{w\in W})$ is a collection of the \myemph{preference lists} of the agents from $U$ and $W$.
Here, for each agent~$u \in U$, the notation~$\succ_u^P$ denotes a linear order on a subset~$W'$ of~$W$ that represents the ranking of agent $u$ over all agents from $W'$ in profile~$P$. The agents in~$W'$ are also called \myemph{acceptable} to~$u$. The candidates \myemph{not ranked} by $u$ are those in $W \setminus W'$, that is, those that $u$ does not agree to be matched to; we also call them \myemph{unacceptable}. If $w \succ_u^P w'$ then we say that $w$ is \myemph{preferred} to~$w'$ by $u$ in~$P$.
Analogously, for each agent~$w \in W$, $\succ_w$ represents a linear order on (a subset of)~$U$ that represents the ranking of~$w$ in profile~$P$ and we likewise use the notions of preference list, preferred, (un-)acceptable, and (not) ranked.
We say that $P$ has \myemph{complete} preferences if each agent finds all agents from the opposite set acceptable.

Given an agent~$x$ with her preference list~$\succ_x$ and given an agent~$y$ from the opposite set,
the \myemph{rank~$\rank_{x}(y, \succ_x)$} of~$y$ in the preference list of~$x$ is equal to the number of agents %
that are preferred to~$y$ by~$x$. If $y$ is not acceptable to~$x$ then
we let $\rank_x(y, \succ_x)$ be equal to the number of agents acceptable to $x$.
We usually omit the symbol~$\succ_x$ in $\rank_x(y,\succ_x)$ and write only $\rank_x(y)$ whenever the preference list of~$x$ is clear from the context.
For instance, the rank of $y_3$ in the preference list~$\succ_x\colon y_1 \succ y_3 \succ y_2$ is one.
We say that \myemph{$x$ ranks $y$ higher than $z$}, if $\rank_x(y) < \rank_x(z)$.

Throughout, except in \cref{sec:Robust+Ties}, by ``$x\succeq y$'' for two agents~$x$ and $y$ we mean ``$x=y$ or $x\succ y$''.

\looseness=-1
\paragraph{Blocking pairs and stable matchings.}
Given two disjoint sets of agents,~$U$ and $W$, a \myemph{matching}~$M$ is a set of pairwise disjoint pairs, each pair containing one agent from $U$ and one agent from $W$, {i.e.}\ 
$M\subseteq \{\{u,w\}\mid u \in U \wedge w \in W\}$ and for each two pairs~$p, p'\in M$ it holds that $p \cap p' = \emptyset$.
Given a pair~$\{u,w\}$ with $u\in U$ and $w\in W$,  
if it holds that $\{u,w\}\in M$, then we use \myemph{$M(u)$} to refer to $w$ and $M(w)$ to refer to $u$,
and we say that $u$ and $w$ are their respective \myemph{partners} under $M$;
otherwise we say that $\{u,w\}$ is an \myemph{unmatched pair}.
We say that $\{u,w\}$ is \myemph{blocking} (or \myemph{a blocking pair of}) \myemph{$M$} if both $u$ and $w$ would prefer to be matched together than to stay with their current partners. Formally, $\{u,w\}$ is a blocking pair if the following holds:
\begin{inparaenum}[(1)]
  \item $u$ and $w$ find each other acceptable but are not matched together,
  \item $u$ is either unmatched by $M$ or $\rank_u(w) < \rank_u(M(u))$, and
  \item $w$ is either unmatched by $M$ or $\rank_w(u) < \rank_w(M(w))$.
\end{inparaenum}
We say that a matching~$M$ is \myemph{stable} if no unmatched pair forms a blocking pair for $M$.
\cref{ex:intro_example} in the introduction illustrates stable matchings.

We use \myemph{$\sm(P)$} to denote the set of all stable matchings for a preference profile~$P$.
Given a matching~$M$, we use \myemph{$\bp(P,M)$} to denote the set of all unmatched pairs that are blocking $M$ in profile~$P$.
Obviously, for each stable matching~$M\in \sm(P)$, it holds that $\bp(P,M)=\emptyset$.

\subsection{Our Spectrum of Stability Notions and Problems}\label{subsec:def:robust+nearstable}
Let us now define our concepts of robustness and \nstability, informally introduced in \cref{sec:intro}. %

First of all, we need the notion of \myemph{swaps}, which describes the operation of taking two consecutive agents~$x$ and $y$ in a preference list of a third agent~$z$ and switching their relative order in order to obtain a new preference list.
We also use $(z,\{x, y\})$ to denote such a swap.
Given two preference lists~$\succ$ and $\succ'$, the \myemph{swap distance} (also known as the Kendall~$\tau$ distance~\cite{Kendall1938}) between~$\succ$
and $\succ'$ 
is defined as the number of differently ordered pairs in the two lists; if the two lists are defined on two different acceptable sets, then the swap distance is infinity.
Intuitively, the swap distance is equal to the minimum number of swaps that are required to turn $\succ$ into $\succ'$.
Accordingly, the \myemph{swap distance} between two preference profiles $P_1$ and $P_2$, denoted as $\tau(P_1, P_2)$, is defined as
the sum of swap distances between the two preference lists of each agent in profiles~$P_1$ and $P_2$.

\begin{definition}[Robustness]\label[definition]{def:robustness}
For a given preference profile $P$, we say that a matching $M$ is \myemph{$d$-robust} if for each profile $P'$ with $\tau(P, P') \leq d$ it holds that $M$ is stable in $P'$.    
\end{definition}

Note that our robustness concept is monotone--each $d$-robust matching is also $d'$-robust for $0\le d' \le d$.
We are interested in the computational question of finding the maximal integer~$d$ such that there is a $d$-robust matching. This can be phrased as a decision problem as follows:

\probdef{\pRMl}
{A preference profile~$P$ with agent sets~$U$ and $W$ of size $n$ each, %
and an integer~$d\in \mathds{N}$.}
{Is there a $d$-robust matching for $P$?}

Now, we define \nstability. Here, we provide two definitions---\gnstabilitynopa and \lnstabilitynopa---that differ in the scope of admissible changes to the original preference profile.

\begin{definition}[\Nstability]\label[definition]{def:near_stability}
For a given preference profile $P$, we say that a matching $M$ is \myemph{\gns{$d$}} if there exists a profile $P'$ with $\tau(P, P') \leq d$ such that $M$ is stable in $P'$. We say that $M$ is \myemph{\lns{$d$}} if there exists a profile $P'$ with $\tau(\succ_x^{P}, \succ_x^{P'}) \leq d$ for each agent $x \in U\cup W$ such that $M$ is stable in $P'$. %
\end{definition}

\looseness=-1
Since \nstability is a more permissive concept than stability as defined by~\citet{GaleShapley1962}, it is straight-forward that a \gns{$d$} (or \lns{$d$}) matching always exists for $d\ge 0$. Here, our main focus is to explore the trade-offs between the strength of stability and other criteria of social optimality. We say that a matching $M$ is \myemph{perfect} if each agent has a partner in $M$. The \myemph{\egalcostn} of $M$ in a profile~$P=(\succ_x)_{x\in U\cup W}$ is
$\egalcost(M) \coloneqq \sum_{x \in U\cup W} \rank_x(M(x), \succ_x)$.
\iflong
This leads to the following computational problems, abbreviated as \pGNSEM, and \pLNSEM.
\probdef{\pGLNSPMl}
{A preference profile~$P$ with agent sets~$U$ and $W$ of size $n$ each,
and an integer~$d\in \mathds{N}$.}
{Is there a \gns{$d$} (or \lns{$d$}) stable matching for~$P$ which is perfect?}

\else
This leads to the following computational problems, abbreviated as \pGNSEM, and \pLNSEM.
\fi

\probdef{\pGLNSEMl}
{A preference profile~$P$ with agent sets~$U$ and $W$ of size $n$ each, 
and two integers~$d, \egalcost \in \mathds{N}$.}
{Is there a \gns{$d$} (or \lns{$d$}) stable matching for~$P$ which has \egalcostn\ at most $\egalcost$ in~$P$?}

\ifshort
\pGNSPMl~(\pLNSPM) and \pLNSPMl~(\pLNSPM) are defined similarly,
with the only differences that the input will not contain the \egalcostn~$\egalcost$
and we ask for a nearly stable perfect matching.
\fi

 For preferences without ties (i.e.\ every agent has a strict preference list), we will use the following fundamental result from the literature.

\begin{proposition}[{\cite[Theorem~1.4.2]{GusfieldIrving1989}}]\label[proposition]{prop:SMI-matched-agents-the-same}
\looseness=-1  For incomplete preferences without ties, the agent set can be partitioned into two disjoint subsets~$R$ and $S$ such that every stable matching matches every agent from $R$ and none of the agents from $S$.
  For agent set of size~$2n$, this partition can be computed in $O(n^2)$~time.
\end{proposition}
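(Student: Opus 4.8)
The plan is to exhibit the partition constructively and then reduce the statement to a single invariance claim. First I would run the (men\nobreakdash-proposing) Gale--Shapley deferred\nobreakdash-acceptance procedure, adapted to incomplete lists so that an agent who exhausts its list simply remains unmatched, to obtain one stable matching $M_0 \in \sm(P)$ in $O(n^2)$ time; then set $R$ to be the set of agents matched by $M_0$ and $S \coloneqq (U\cup W)\setminus R$. The entire proposition then follows from the claim that \emph{every} stable matching matches exactly the agents of $R$, i.e.\ that for any two stable matchings $M, M' \in \sm(P)$ the set of matched agents coincides. Granting this, $R$ and $S$ are well defined independently of the chosen $M_0$, and they are produced together with $M_0$ in $O(n^2)$ time, which is exactly what the second sentence asks for.

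To prove the invariance I would analyse the symmetric difference $M \oplus M'$ as a graph on $U\cup W$. Since $M$ and $M'$ are matchings, every agent has degree at most two, so the components are alternating paths and alternating even cycles whose edges come alternately from $M$ and from $M'$. A degree\nobreakdash-two agent (on a cycle or in the interior of a path) is matched by both $M$ and $M'$, and a degree\nobreakdash-zero agent has the same matched/unmatched status in both. Hence the only agents that could be matched in one stable matching but not the other are the degree\nobreakdash-one endpoints of path components, and it suffices to show that \emph{no path component exists}. This is the crux of the argument.

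Suppose toward a contradiction that there is a path $a_0, a_1, \dots, a_k$ in $M \oplus M'$; relabelling, assume its first edge $\{a_0,a_1\}$ lies in $M'$, so $a_0$ is matched by $M'$ but unmatched by $M$. (If $k=1$ both endpoints are matched only in $M'$ and find each other acceptable, so $\{a_0,a_1\}$ already blocks $M$; assume $k\ge 2$.) The heart of the proof is a monovariant obtained by walking along the path and invoking stability of the two matchings alternately: because $a_0$ is unmatched in $M$ yet finds $a_1$ acceptable, the pair $\{a_0,a_1\}$ can fail to block $M$ only if $a_1$ strictly prefers its $M$\nobreakdash-partner $a_2$ to $a_0 = M'(a_1)$; feeding this into the stability of $M'$ at the pair $\{a_1,a_2\}\in M$ forces $a_2$ to strictly prefer its $M'$\nobreakdash-partner $a_3$; and so on, so that consistently along the path each agent prefers its partner on the ``far'' side to its partner on the ``near'' side. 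The main obstacle, and the step needing the most care, is the clash at the opposite endpoint $a_k$: it is matched by exactly one of $M, M'$ and hence lacks a partner on one side, so the preference the monovariant demands of it, together with the fact that its unique path\nobreakdash-neighbour demonstrably prefers it, produces a pair that blocks whichever of $M$ or $M'$ leaves $a_k$ unmatched. Checking this in both parities of $k$ gives the contradiction and rules out path components.

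With no path components, every agent has the same matched/unmatched status under $M$ and $M'$, which establishes the invariance; taking $R$ and $S$ as above then satisfies the statement. Since $M_0$, and hence the partition, is produced by Gale--Shapley in $O(n^2)$ time on $2n$ agents, the claimed running time follows. I expect the existence of a stable matching and the time bound to be routine classical facts about deferred acceptance for incomplete lists, so essentially all of the genuine work sits in the symmetric\nobreakdash-difference/blocking\nobreakdash-pair argument above.
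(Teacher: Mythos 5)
Your proof is correct. Note that the paper does not prove this statement at all---it is imported verbatim as Theorem~1.4.2 of Gusfield and Irving, so there is no in-paper argument to compare against; your symmetric-difference/alternating-path analysis (ruling out path components of $M \oplus M'$ by propagating the ``prefers the far partner'' monovariant from an unmatched endpoint and deriving a blocking pair at the other endpoint) is essentially the standard textbook proof of this fact, and the $k=1$ base case and the $O(n^2)$ bound via deferred acceptance on incomplete lists are handled correctly.
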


\subsection{Structural Properties of Robust and \NStable Matchings}\label[section]{sub:structural-properties} %
\appendixsection{sub:structural-properties}
Before we proceed further, we provide some structural results concerning \Cref{def:robustness,def:near_stability}. First we give two observations regarding robustness. These are not necessary for the considerations about algorithms later on, but serve to strengthen the intuition about profiles that allow for robust matchings and, we feel, are interesting in their own right. Further below, we consider the trade-off between near stability and perfectness of matchings and give alternative characterizations of locally nearly stable matchings.

\newcommand{\sized}{%
  If $d\ge n$ and there exists one agent who finds at least two other agents acceptable, 
  then no stable matching is $d$-robust.
}
\begin{proposition}%
  \label[proposition]{prop:d-n}
  \sized%
\end{proposition}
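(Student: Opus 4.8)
The plan is to prove the contrapositive in spirit: I fix an \emph{arbitrary} stable matching $M$ and exhibit a single profile $P'$ with $\tau(P,P')\le n\le d$ in which $M$ acquires a blocking pair; since $M$ is arbitrary, this shows that no stable matching can be $d$-robust once $d\ge n$. The whole argument concentrates all swaps in the preference lists of at most two agents, so that bounding the total swap distance is immediate. Let $x$ be an agent that finds at least two agents acceptable, and let $a\succ_x b$ be the two most preferred acceptable agents of $x$.

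I would split into two cases according to whether $M$ gives $x$ its favorite acceptable agent $a$. If $M(x)\neq a$ (in particular if $x$ is unmatched), then $x$ already prefers $a$ to its current partner, so it suffices to make $a$ want $x$. Since $M$ is stable and $x$ finds $a$ acceptable, $a$ must currently be matched and rank $M(a)$ above $x$ (otherwise $\{x,a\}$ would already block $M$). I then promote $x$ in the list of $a$ to just above $M(a)$; this costs $\rank_a(x)-\rank_a(M(a))\le n-1$ swaps and turns $\{x,a\}$ into a blocking pair. If instead $M(x)=a$, then $x$ has its favorite, so I first spend a single swap in $x$'s own list to promote the second choice $b$ above $a$, which makes $x$ prefer $b$ to $M(x)$; then, exactly as before, I promote $x$ in the list of $b$ above $M(b)$ (or to the top, if $b$ is unmatched), at a cost of at most $\rank_b(x)\le n-1$ swaps, so that $\{x,b\}$ blocks. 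The total here is $1+(n-1)=n$ swaps. In both cases $\tau(P,P')\le n\le d$, so $M$ fails to be $d$-robust.

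The main obstacle I anticipate is not the swap count but the \emph{acceptability bookkeeping}: the construction only works if the agent I promote ($a$, respectively $b$) actually finds $x$ acceptable, since swaps reorder preference lists but never enlarge acceptable sets, and a pair can block only if it is mutually acceptable. Under complete preferences this is automatic and the proof goes through verbatim; in the incomplete case one must additionally argue that $x$ has a reciprocating acceptable alternative, which is the delicate point to nail down. The secondary point worth stressing in the write-up is that the threshold $d\ge n$ is essentially forced by the second case: giving $x$ its top choice obliges us to pay one swap inside $x$'s own list \emph{plus} up to $n-1$ swaps in the partner's list, so that a budget of exactly $n$ is what the argument consumes. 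Verifying that after the promotions all three blocking conditions hold simultaneously (using the stability of $M$ to pin down the pre-swap ordering in $a$'s or $b$'s list) is then a routine check.
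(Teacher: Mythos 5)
Your proof is correct and takes essentially the same route as the paper's: pick an agent $x$ with at least two acceptable partners, spend at most one swap inside $x$'s own list (needed only when $x$ already holds its top choice) and at most $n-1$ swaps in the target agent's list to manufacture a blocking pair, for a total of at most $n\le d$ swaps. The acceptability caveat you flag is the one point the paper glosses over as well---it implicitly relies on acceptability being symmetric (cf.\ the phrase ``by the symmetry of acceptability'' in the proof of \cref{prop:top-choice}), under which the agent you promote does rank $x$ and your check goes through.
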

\appendixproofwithstatement{prop:d-n}{\sized}{
\begin{proof}
  Let $M$ be an arbitrary stable matching.
  To show that no stable matching is $d$-robust, it suffices to show that performing at most $n$~swaps can make an unmatched pair a blocking pair of~$M$.
  To this end, let $x$ be an agent who finds at least two other agents acceptable.
  Further, let $y$ be an acceptable agent of $x$, satisfying the following.
  If $x$ is unmatched under $M$ or if $\rank_{x}(M(x)) \ge 1$, then $y$ is the most preferred agent of $x$ (that is, $\rank_x(y) = 0$); otherwise,~$y$ is the second-most preferred agent of $x$ (that is, $\rank_x(y) = 1$).
  Now, use at most one swap to make agent~$y$ the most preferred agent of $x$, and at most $n - 1$ swaps to make agent~$x$ the most preferred agent of $y$.
  This results in $\{x, y\}$ being a blocking pair of $M$.
  Hence,~$M$ is not $d$-robust, as $d\ge n$.
\end{proof}
}

A matching is \myemph{top-choice} if each agent is matched to her most
preferred partner. A profile is \myemph{position-wise distinct} if
there are no two agents that have the same agent in the same position
in their preference lists.
\newcommand{\topchoicelemma}{%
Every $(n-1)$-robust matching is top-choice and every profile
allowing for an $(n - 1)$-robust matching is position-wise distinct.%
}
\begin{proposition}%
  \label[proposition]{prop:top-choice}
  \topchoicelemma
\end{proposition}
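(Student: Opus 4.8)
The plan is to prove the two parts separately, both resting on the elementary fact that, with $n$ agents on each side and complete preference lists, one can bring an agent $y$ to the top of the list of an agent $x$ from the opposite side using exactly $\rank_x(y)\le n-1$ swaps. I carry out the argument for complete preferences, where every agent's list is a permutation of the opposite side and every top choice reciprocates acceptability; completeness is used twice (the move-to-top construction needs the top choice to find the agent acceptable, and the counting below needs the lists to be permutations), so it is where the statement's validity rests.

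For the first part I would argue by contradiction. Suppose $M$ is $(n-1)$-robust but some agent $x$ is not matched to her most preferred partner $y$. Since $\rank_y(x)\le n-1$, performing the $\rank_y(x)$ swaps that move $x$ to the top of $y$'s list produces a profile $P'$ with $\tau(P,P')\le n-1$ in which $\{x,y\}$ blocks $M$: $x$ still ranks $y$ first while $y\ne M(x)$, and $y$ now ranks $x$ first while $x\ne M(y)$. This contradicts $(n-1)$-robustness, so $M$ is top-choice; in particular every matched pair consists of mutual top choices and $M$ is perfect.

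For the second part, let $M$ be an $(n-1)$-robust matching of $P$, which by the first part is a perfect mutual-top-choice matching. The first step is a rank-sum bound: for every non-matched pair $\{a,c\}$ with $a\in U$, $c\in W$ I would show $\rank_a(c)+\rank_c(a)\ge n$. Indeed $M(a)$ and $M(c)$ occupy position~$0$ of the lists of $a$ and $c$, so moving $c$ above $M(a)$ and $a$ above $M(c)$ costs exactly $\rank_a(c)+\rank_c(a)$ swaps and makes $\{a,c\}$ blocking; robustness forbids this total from being at most $n-1$. The crucial second step upgrades this to an equality by double counting. Since each list is a permutation,
\[
\sum_{a\in U,\;c\in W}\bigl(\rank_a(c)+\rank_c(a)\bigr)=2n\binom{n}{2}=n^2(n-1).
\]
The $n$ matched pairs contribute $0$, while the $n(n-1)$ non-matched pairs each contribute at least $n$, already accounting for the full total $n^2(n-1)$; hence every non-matched pair satisfies $\rank_a(c)+\rank_c(a)=n$ exactly.

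Position-wise distinctness then follows quickly. If two distinct agents $a,b\in U$ had the same $c$ at the same position $i$, then for $i=0$ both would have $c$ as top choice, forcing $M(a)=c=M(b)$, which is impossible; and for $i\ge 1$ the equality gives $\rank_c(a)=n-i=\rank_c(b)$, impossible since $a\ne b$. The symmetric argument rules out two $W$-agents sharing a $U$-agent at a common position, so $P$ is position-wise distinct. I expect the main obstacle to be precisely this equality step: the inequality $\rank_a(c)+\rank_c(a)\ge n$ alone only excludes repeated entries in position~$1$, and excluding them in every position $i\ge 2$ genuinely needs the exact value $n$, which I obtain from the global double-counting identity.
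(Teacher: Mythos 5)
Your argument is correct \emph{for complete preference profiles}, but the proposition is stated for profiles that may be incomplete (in this paper an agent's list is a linear order on a \emph{subset} of the other side, and $\rank_x(y)$ for an unacceptable $y$ is defined as the number of agents acceptable to $x$). You flag this restriction yourself, but it is exactly where the real work lies: your global identity $\sum_{a,c}(\rank_a(c)+\rank_c(a))=n^2(n-1)$ only holds when every list is a full permutation, so the second half of your proof does not apply to the stated claim. The paper closes this gap by showing that $(n-1)$-robustness itself \emph{forces} completeness: starting from the rank-sum inequality $\rank_x(y)+\rank_y(x)\ge n$ for unmatched pairs, it argues (for $n\ge 3$, and assuming some agent finds at least two agents acceptable) that an agent at position $1$ of someone's list must have a full-length list, and this propagates until every agent has complete preferences; the cases $n\le 2$ and the case where every list has length one are handled separately. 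Without some such reduction, your proof establishes a weaker statement than the one claimed.

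On the part you do prove, your route is genuinely different from the paper's and arguably cleaner. Both proofs start from the same rank-sum inequality for unmatched pairs and both need the \emph{exact} equality $\rank_a(c)+\rank_c(a)=n$ to rule out collisions at positions $i\ge 2$. The paper obtains this equality by an induction on the rank index $z$ (its statement (2): $\rank_u(w)=z$ iff $\rank_w(u)=n-z$), which is somewhat delicate; you obtain it in one stroke by double counting, observing that the $n(n-1)$ unmatched pairs must each meet the bound $n$ with equality because the matched pairs contribute $0$ and the total is pinned at $n^2(n-1)$. Once completeness is established (by the paper's argument or otherwise), your double-counting step would be a nice simplification of the induction; as submitted, though, the completeness step is missing rather than merely implicit.
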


\appendixproofwithstatement{prop:top-choice}{\topchoicelemma}{
\begin{proof}
  Let $P$ be a preference profile on two $n$-element sets~$U$ and $W$,
  and let $M$ be an $(n - 1)$-robust matching.
  We first show that $M$ is top-choice. %
  This is clear if each agent finds only one other agent acceptable.
  Otherwise, there is at least one unmatched pair of agents.
  Observe that for each unmatched pair~$\{x,y\}$ of agents %
  it must hold that
  \begin{align}
    \rank_x(y)+\rank_y(x)\ge n \label{eq:rank-sum}
  \end{align} as otherwise we can perform at most $n-1$ swaps,
  $\rank_x(y)$ swaps in $x$'s preference list and $\rank_y(x)$ swaps in $y$'s preference list,
  to make $x$ and $y$ be each other's most preferred agent.
  This results in $\{x,y\}$ being a blocking pair of~$M$---a contradiction to $M$ being $(n-1)$-robust.

  To show that $M$ is top-choice, towards a contradiction, suppose
  that $M$ is not top-choice. This means that there exists an
  unmatched pair~$\{x,y\}$ such that $y$ is the most preferred agent of
  $x$, i.e.\ $\rank_x(y)=0$. However, by~\eqref{eq:rank-sum}, it
  implies that $\rank_y(x)\ge n$---a contradiction to the fact that
  the rank of an agent is at most $n-1$.

  It remains to prove that $P$ is position-wise distinct. %
  We first consider the case when $n = 2$ and then the case when
  $n \ge 3$; the case with $n=1$ is trivial. Assume that $n=2$. Since
  $M$ is top-choice, we infer that the most preferred agents of two
  different agents are different from each other. Now suppose, for the
  sake of contradiction, that there are three distinct
  agents~$x_1,x_2,y$ such that $\rank_{x_1}(y)= 1$ and
  $\rank_{x_2}(y) = 1$. By \eqref{eq:rank-sum}, we must have that
  $\rank_y(x_1)\ge n-1=1$ and $\rank_y(x_2)\ge n-1=1$.
  Hence,~$\rank_y(x_1)=\rank_y(x_2)=1$ as each rank is at most
  $n-1=1$, this is a contradiction to the fact that no two agents have
  the same rank by the same agent. This finishes the proof for the
  case when $n=2$.
  
  In the remainder of the proof, we assume that $n\ge 3$. Before
  proving that $P$ is position-wise distinct in this case as well, we
  observe that each agent finds all other agents acceptable.

  \looseness=-1 Let $u$ and $w$ be two unmatched agents such that $\rank_u(w) = 1$. By \eqref{eq:rank-sum}, we have $\rank_w(u) \geq n - 1$ and indeed $\rank_w(u) = n - 1$ since $n - 1$ is the largest-possible rank.
  This implies that $w$ appears in the preference list of every agent from $U$.
  By the top-choice property of $M$, we infer that, except for the partner~$M(w)$ of $w$,
  every agent~$x$ from $U \setminus \{M(w)\}$ finds at least two agents acceptable: her partner~$M(x)$ and agent~$w$.
  We claim that indeed $M(w)$ also finds at least two other agents acceptable.
  Since $n\ge 3$, there is a third agent $u'\in U \setminus \{M(w), u\}$ who finds $w$ acceptable.
  Let $w'\in W$ be an agent with $\rank_{u'}(w')=1$.
  Again by \eqref{eq:rank-sum}, this implies that $\rank_{w'}(u')=n-1$.
  Hence, $w'$ also has complete preferences and finds $M(w)$ acceptable, implying that $M(w)$ finds $w'$ acceptable.
  Since $u\neq u'$ and $\rank_{w}(u)=n-1$, we infer that
  $w'\neq w$ because $\rank_{w'}(u')=n-1$.
  This implies that $M(w)$ also finds at least two agents acceptable, namely $w$ and $w'$.
  Since no two agents can have rank~$n-1$ in the same preference list
  it must hold that the second-most preferred agents of all agents from $U$ are different from each other.
  Thus, using \eqref{eq:rank-sum}, each agent~$w\in W$ must have complete preferences.
  By the symmetry of acceptability, each agent~$u\in U$ must also have complete preferences.

  We are now ready to prove that $P$ is position-wise distinct when
  $n \geq 3$. This is clear if each agent finds only one other agent
  acceptable. Otherwise, there exists at least one unmatched pair of
  agents. We will show the stronger statement that, for each unmatched
  pair~$\{u, w\}$, with $u\in U$ and $w\in W$ and for each
  $z \in [n - 1]$ it holds that
  \begin{align}\label{eq:rank-sum2}
    \rank_u(w) = z \text{ if and only if } \rank_w(u)= n - z.
  \end{align}
  (Note that we can replace ``if and only if'' by ``only if'' to obtain an equivalent statement, but the former is more convenient when we prove it by induction below.)
  To see that \eqref{eq:rank-sum2} implies that $P$ is top-choice, suppose, for the
  sake of contradiction, that there are three distinct
  agents~$x_1,x_2,y$ and an integer~$z$ such that
  $\rank_{x_1}(y) = \rank_{x_2}(y) = z$. Since $M$ is top-choice,
  $z > 0$. Clearly, $z \leq n - 1$. Thus, by \eqref{eq:rank-sum2}
  $\rank_y(x_1) = \rank_y(x_2) = n - z$, a contradiction.

  We show \eqref{eq:rank-sum2} via induction on the rank index~$z\coloneqq \rank_u(w)$, starting with the base case~$z=1$.
  To this end, let $\{u,w\}$ be an unmatched pair.
  To show the ``only if'' part of \eqref{eq:rank-sum2}, assume that $z=\rank_u(w)=1$.
  By \eqref{eq:rank-sum}, it follows that 
  $\rank_{w}(u)\ge n-1$.
  Since the rank of each agent is at most $n-1$,
  it follows that $\rank_{w}(u)=n-1=n-z$.

  For the ``if'' part of the base case
  suppose, towards a contradiction, that there is an unmatched pair~$\{u,w\}$ with $\rank_w(u)=n-1$ but $\rank_{u}(w)\neq 1$.
  By \eqref{eq:rank-sum}, it follows that $\rank_{u}(w) > 1$.
  Since each agent has complete preferences, by the above reasoning,
  there exists an other agent~$u'\in U\setminus \{u\}$ with $\rank_{u'}(w)=1$.
  However, then by the ``if'' part of the base case, it follows that $\rank_{w}(u')=n-1$---a contradiction. Thus, \eqref{eq:rank-sum2} holds when $z = 1$.
  
  For the induction assumption, assume that \eqref{eq:rank-sum2} holds for every index~$z'\le z-1$.
  For the ``only if'' part, consider an unmatched pair~$\{u,w\}$ with $\rank_{u}(w)=z$.
  By \eqref{eq:rank-sum}, it follows that $\rank_{w}(u)\ge n-z$.
  Suppose, for the sake of contradiction, that
  $\rank_w(u) = n-z'$ with $z'< z$.
  By the ``if'' part of the induction assumption, we infer that  $\rank_{u}(w) = z' < z$---a contradiction.

  The ``if'' part of the induction step follows analogously.
\end{proof}
}

\looseness=-1
Now we discuss the trade-offs formalized in the problems regarding \nstability and social optimality.
As mentioned in \cref{ex:egalitarian_cost} even a single swap in the preference profile can improve the egalitarian cost of the stable matching by $\Omega(n^2)$. However, this is not the case when the social optimality is measured by the number of agents who will have a partner in the matching.

\newcommand{\fewmatchedforswaps}{%
 Let $P_1$ and $P_2$ be two preference profiles with $\tau(P_1, P_2) = 1$.
  Let $S_1$ and $S_2$ denote the set of agents that are unmatched by any stable matching of $P_1$ and of $P_2$ respectively. 
  Then, $|(S_1 \setminus S_2) \cup (S_2 \setminus S_1)| \le 2$.%
}
\begin{theorem}%
  \label[theorem]{thm:few-matched-for-swap}
  \fewmatchedforswaps
 \end{theorem}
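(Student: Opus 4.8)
The plan is to reduce the statement to tracking how the set of matched agents changes under a single swap, and to control that change through one deferred-acceptance proposal chain. First I would invoke \Cref{prop:SMI-matched-agents-the-same}: for each $i$ the set $S_i$ equals the set of agents left unmatched by \emph{any} single stable matching of $P_i$. Hence it suffices to fix one $M_2\in\sm(P_2)$, build a suitable $M_1\in\sm(P_1)$, and compare the agents the two matchings leave unmatched. I would write the one differing list as a swap $(z,\{x,y\})$, where $z$ prefers $x$ to $y$ in $P_1$ and prefers $y$ to $x$ in $P_2$, with $x$ and $y$ adjacent in $z$'s list.

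The key structural observation I would establish next is that a matching stable in $P_2$ can fail to be stable in $P_1$ only through the single blocking pair $\{z,x\}$, and only when $M_2(z)=y$. Indeed, since $P_1$ and $P_2$ differ only in $z$'s list, every blocking pair of $M_2$ in $P_1$ that was not blocking in $P_2$ must contain $z$; and going from $P_2$ to $P_1$ the only agent whose rank in $z$'s list improves is $x$, which jumps over exactly $y$. Thus the set of partners that $z$ strictly prefers to $M_2(z)$ grows by precisely $\{x\}$, and this happens only when $M_2(z)=y$. So $M_2$ is either stable in $P_1$, or its unique $P_1$-blocking pair is $\{z,x\}$ with $M_2(z)=y$.

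This yields a clean dichotomy. If the chosen $M_2$ is also stable in $P_1$, then $M_2\in\sm(P_1)\cap\sm(P_2)$, so $S_1=S_2$ and the symmetric difference is empty. Otherwise $\{z,x\}$ blocks $M_2$ in $P_1$ with $M_2(z)=y$, and I would repair $M_2$ into a stable matching of $P_1$: first satisfy this blocking pair by rematching $z$ with $x$ (which frees $y$ and, if $x$ was matched, its old partner $w':=M_2(x)$), then run a single McVitie--Wilson proposal chain on $P_1$ initiated by the freed agent $y$. Because exactly one free proposer is active at any time, the chain is a single alternating sequence, and standard deferred-acceptance correctness shows its output $M_1$ is stable in $P_1$. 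Every agent strictly interior to the chain is displaced and immediately rematched, so its matched status is unchanged; only the two extremities can switch status. A short case analysis closes the bound: the chain either terminates at an already-free agent, closing an alternating cycle with no net status change, or it exhausts a proposer's list, leaving at most the two endpoint agents ($y$ or $w'$ and the final unmatched proposer) changing between $S_2$ and $S_1$. In all cases at most two agents change status, so $|(S_1\setminus S_2)\cup(S_2\setminus S_1)|\le 2$.

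The main obstacle I anticipate is the repair step. One must verify carefully that ``satisfy the blocking pair, then propagate by one proposal chain'' produces a matching that is stable in $P_1$ (not merely in $P_2$), and that this process really is a \emph{single} alternating path or cycle, so that only its $\le 2$ endpoints can change matched status. The bookkeeping must also cover the boundary case in which $x$ is unmatched in $M_2$ (so only $y$ is freed), as well as the possibility that the chain re-displaces $z$ or $x$ themselves during its propagation; I would check that these are still absorbed into the same single chain and therefore do not create additional status changes.
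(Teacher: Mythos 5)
Your high-level route is the same as the paper's: reduce via \cref{prop:SMI-matched-agents-the-same} to comparing two concrete stable matchings, identify $\{z,x\}$ (with $M_2(z)=y$) as the unique possible new blocking pair, satisfy it, and let the displaced agents cascade. That first part is correct. The genuine gap is in the repair step --- precisely the point you flag as your ``main obstacle'' --- and as written it does not close. Satisfying $\{z,x\}$ frees one agent on \emph{each} side of the market: $y$ on $x$'s side and $w'=M_2(x)$ on $z$'s side. A McVitie--Wilson proposal chain initiated by $y$ has all its proposers on $y$'s side, so it can only resolve blocking pairs created by the vacancy at $y$; it reaches $w'$ only in the lucky case where some proposer in the chain is accepted by $w'$. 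If the chain terminates otherwise --- in the extreme case, $y$ has no blocking partner at all and the chain is empty --- then $w'$ is still free, and a pair $\{w',v\}$ with $v$ preferring $w'$ to her current partner blocks the result in $P_1$. This situation is fully consistent with $M_2\in\sm(P_2)$, because there $w'$ was protected by preferring $x=M_2(x)^{-1}\ldots$, i.e., by preferring his partner $x$ to $v$. So the process is not ``a single alternating path or cycle,'' and the matching you output need not be stable in $P_1$. (A smaller imprecision: if the chain terminates at an agent who was already unmatched in $M_2$, that agent's status does change, so the ``cycle, no net change'' case only applies when the terminus is $w'$.)

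The paper's proof repairs this by running \emph{two} cascades, one per freed agent --- it keeps a two-slot ``penalty box'' holding at most one free agent from each side, first empties the $U$-slot and then the $W$-slot, each step matching the boxed agent to her most preferred blocking partner and displacing that partner's mate into the box, while maintaining the invariant that every blocking pair involves a boxed agent. Each cascade terminates because the accepting side strictly improves at every step, and each cascade changes the matched/unmatched status of at most one agent (its terminus), all interior agents being displaced and immediately rematched; summing over the two cascades gives the bound of two. To complete your argument along these lines you would need to add the second cascade for $w'$, verify the invariant for both phases (including that the second cascade cannot re-create a blocking pair for the agent discharged unmatched at the end of the first, which holds because agents on that agent's side only improve during the second phase), and redo the status-change accounting per cascade rather than per ``single chain.''
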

 
\appendixproofwithstatement{thm:few-matched-for-swap}{\fewmatchedforswaps}{
\begin{proof}

   Without loss of generality, assume that profile $P_2$ is obtained from $P_1$ by swapping agents $w_1$ and $w_2$ in the preference list of agent $u_1$ so that $u_1$ prefers $w_1$ to $w_2$ in $P_1$ and $w_2$ to $w_1$ in~$P_2$. %
  
  By \cref{prop:SMI-matched-agents-the-same}, in order to show the statement, it suffices
  to show that $P_1$ and $P_2$ admit stable matchings~$M_1$ and $M_2$, respectively,
  such that the following is satisfied.
  Let $S_1$ and $_S$ denote the set of agents that is unmatched under $M_1$ and $M_2$, respectively.
  Then,  $|(S_1 \setminus S_2) \cup (S_2\setminus S_1)|\ge 2$.  

  To achieve this, we start with an arbitrary but fixed stable matching, $M_1$, of $P_1$.
  And we will show how to modify~$M_1$ to obtain a stable matching of $P_2$ such that the set of unmatched agents differ by at most two agents.

  Observe that if $M_1$ would not be stable for profile~$P_2$ anymore, then $\{u_1, w_2\}$ is the only possible blocking pair because $P_1$ and $P_2$ differ only by one single swap of the preference list of agent~$u_1$.
  If $M_1$ is not stable for $P_2$, then we modify $M_1$ in the following way.
  We break the pairs $\{u_1, M_1(u_1)\}$ and $\{M_1(w_2), w_2\}$ (if they exist) and we replace them with $\{u_1, w_2\}$. Now, there are two new unmatched agents: $M_1(w_2) \in U$ and $M_1(u_1) \in W$. Further, if we remove these agents from the consideration, the matching would be stable.

  We now proceed as follows. We will perform a sequence of changes to $M_1$. After each change we will keep in the penalty box at most two unmatched agents, one from $U$ and one from $W$, starting with $M_1(w_2)$ and $M_1(u_1)$. Further, each change will keep satisfying the following invariant:
  if we remove the agents contained in the penalty box, then the resulting matching would be stable.
  Let us now describe the way in which we perform the changes. Let $M$ be the matching at the current iteration. We take out an agent $u \in U$ from the penalty box (if such an agent does not exist, we stop). Agent $u$ might be involved in a number of blocking pairs---if it does not, we stop. We take $u$'s most preferred agent $w \in W$ such that $\{u, w\}$ is a blocking pair of the current matching~$M$---we remove $\{M(w), w\}$ from the matching~$M$ and replace it with $\{u, w\}$. Finally, we add $M(w)$ to the penalty box. Clearly, $u$ cannot be involved in any blocking pair, thus, any blocking pair must involve an agent from the penalty box; hence the invariant is indeed satisfied. 

Each such a change replaces a $U$-agent from the penalty box with another $U$-agent. Further each change improves one of the $W$-agents by giving her a more preferred partner. Thus, our procedure must stop at some point. When this is the case we remove the $U$-agent from the penalty box. Then, we perform an analogous procedure but each time replacing a $W$-agent in the penalty box with another $W$-agent. By an analogous arguments, such changes keep the invariant satisfied and the procedure finally stops. 

Clearly, when the procedure stops, there are no blocking pairs. Further, the resulting matching has at most two more agents without partners than $M_1$. %
\end{proof}
}
Repeated application of \cref{thm:few-matched-for-swap} yields that, in order to increase the number of matched agents by $\ell \in \mathds{N}$ in a given stable matching of some profile we have to allow for at least $\ell/2$ swaps. In other words, if a stable matching leaves $s$ agents unmatched, then there is a \gns{$d$} \emph{perfect} matching only if $d \geq s/2$.

\newcommand{\bps}[1]{#1-{nearly} {bp} stable}
\newcommand{\bpstability}[1]{#1-{nearly} {bp} stability}
Let us recall the notion of \myemph{$r$-maximal stability} for the case with linear orders~\cite{DruBou13}: A matching~$M$ is \myemph{$r$-maximally stable}
if for each unmatched pair~$\{u, v\} \notin M$, it holds that
$\min \{\rank_u(M(u)) - \rank_u(v), \rank_v(M(v)) - \rank_v(u)\} \leq
r$. At the first glance, this notion looks quite different from local
$d$-near stability; we show below that in fact they are equivalent.
Moreover, \lnstability{$\locald$} is equivalent to the following
measure of the weight of a blocking pair.
We say that a matching~$M$ is \myemph{\bps{$\locald$}}
if for each blocking pair $b \in \bp(P, M)$, there exists
a profile~$P'_b$ such that $b \notin \bp(P'_b, M)$ and
$\tau(P, P'_b) \leq \locald$.
\newcommand{\thethreeconcepts}{%
Let $P$ be a preference profile without ties, $M$ be a matching for~$P$, and $\locald$
  a nonnegative integer. The following are equivalent.
 \begin{inparaenum}[(i)]
  \item $M$ is $\locald$-maximally stable.
  \item $M$ is locally $\locald$-nearly stable.
  \item $M$ is \bps{$\locald$}.
 \end{inparaenum}
}
\begin{proposition}%
  \label[proposition]{prop:r-maximal-d-bp-local-d}
  \thethreeconcepts
\end{proposition}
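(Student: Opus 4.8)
The plan is to prove the cycle of implications (i)\,$\Rightarrow$\,(ii)\,$\Rightarrow$\,(iii)\,$\Rightarrow$\,(i). Every one of these reductions rests on a single elementary estimate for one preference list, which I would isolate first: if a list $\succ$ ranks $v$ above an agent $p$ while a list $\succ'$ (on the same acceptable set) ranks $p$ above $v$, then $\tau(\succ,\succ')\ge\rank(p,\succ)-\rank(v,\succ)$, and equality is achievable. The upper bound comes from simply sliding $p$ up into the slot just above $v$; for the lower bound I would observe that for each agent $t$ lying strictly between $v$ and $p$ in $\succ$, transitivity forces at least one of the pairs $\{v,t\}$, $\{t,p\}$ to be inverted in $\succ'$ (otherwise $\succ'$ would still rank $v$ above $p$), and these pairs together with the inverted pair $\{v,p\}$ are $\rank(p,\succ)-\rank(v,\succ)$ distinct inversions. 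Applied with $p=M(x)$, this says precisely that $\rank_x(M(x))-\rank_x(v)$ is the number of swaps in $x$'s list needed to stop $x$ from preferring $v$ to its partner; for an \emph{unmatched} $x$ this quantity must be read as $+\infty$, since swaps cannot demote an acceptable agent below the outside option. I also record the routine remark that in~(i) only genuine blocking pairs are constraining: for any unmatched pair $\{u,v\}$ that does not block $M$ in $P$, one endpoint already ranks its partner weakly above the other, so the minimum in the definition is at most $0\le\locald$.

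The implication (i)\,$\Rightarrow$\,(ii) is the one I expect to be the crux, because I must assemble a \emph{single} profile $P'$ that respects the per-agent budget $\locald$ simultaneously for all blocking pairs. Using~(i), I assign every blocking pair $\{u,w\}\in\bp(P,M)$ to one endpoint $x\in\{u,w\}$ attaining the minimum, so that $x$ is matched and $\rank_x(M(x))-\rank_x(\text{other endpoint})\le\locald$. For each agent $x$, let $S_x$ be the set of agents from pairs assigned to $x$, and build $\succ_x^{P'}$ from $\succ_x^P$ by promoting $M(x)$ just above the highest-ranked member of $S_x$ (leaving $x$ untouched if $S_x=\emptyset$). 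The key gain is that this \emph{one} promotion handles all of $S_x$ at once and costs $\rank_x(M(x))-\min_{w\in S_x}\rank_x(w)\le\locald$, since every $w\in S_x$ already satisfies $\rank_x(M(x))-\rank_x(w)\le\locald$. It then remains to verify that $M$ is stable in $P'$. Here I would first note that promoting only partners cannot manufacture new blocking pairs: in $P'$ each agent ranks its own partner weakly higher and every other agent weakly lower than in $P$, so any pair blocking $M$ in $P'$ already blocks $M$ in $P$. But every such pair was assigned to an endpoint $x$ above whose partner $M(x)$ it was deliberately pushed, so it cannot block in $P'$ — a contradiction.

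Finally, the two remaining implications are symmetric applications of the preliminary estimate. For (ii)\,$\Rightarrow$\,(iii), I take the local witness $P'$ and treat each blocking pair $b=\{u,w\}$ on its own: since $M$ is stable in $P'$, $b$ does not block there, so (acceptability being preserved by swaps) some matched endpoint, say $u$, ranks $M(u)$ above $w$ in $\succ_u^{P'}$; the estimate then gives $\rank_u(M(u))-\rank_u(w)\le\tau(\succ_u^P,\succ_u^{P'})\le\locald$, and performing just that single cheap reordering in $u$'s list yields a profile $P'_b$ with $\tau(P,P'_b)\le\locald$ and $b\notin\bp(P'_b,M)$. For (iii)\,$\Rightarrow$\,(i), I run the same extraction backwards: for a blocking pair $\{u,w\}$ the witness $P'_b$ removes the block, so one (necessarily matched) endpoint reverses the order of its partner and the other agent, and the lower bound converts $\tau(P,P'_b)\le\locald$ into $\rank_x(M(x))-\rank_x(\cdot)\le\locald$ for that endpoint; hence the defining minimum of $\locald$-maximal stability is at most $\locald$, while non-blocking pairs are covered by the routine remark. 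This closes the cycle and establishes the equivalence.
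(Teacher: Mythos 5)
Your proposal is correct and follows essentially the same route as the paper's proof: the cycle (i)\,$\Rightarrow$\,(ii)\,$\Rightarrow$\,(iii)\,$\Rightarrow$\,(i), with the crux of (i)\,$\Rightarrow$\,(ii) being the simultaneous promotion of each matched agent's partner past the cheapest-to-fix endpoints of its blocking pairs and the observation that such promotions create no new blocking pairs. The only differences are presentational — your sets $S_x$ play the role of the paper's auxiliary digraph of assignments, and you make explicit the swap-distance-versus-rank-difference estimate that the paper uses implicitly.
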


\appendixproofwithstatement{prop:r-maximal-d-bp-local-d}{\thethreeconcepts}{
\begin{proof}
  (i) $\Rightarrow$ (ii): Construct a directed graph~$G$ on the
  set~$V$ of agents as follows. For each blocking pair
  $\{u, v\} \in \bp(P, M)$ find the agent, say $u$, such that
  $\rank_u(M(u)) - \rank_u(v) \leq \rank_v(M(v)) - \rank_v(u)$ and add
  the arc $(v, u)$ to~$G$ (that is, add an arc directed
  towards~$u$).

  To obtain a modified profile $P'$ in which $M$ is stable, define,
  for each agent~$u$, a set of swaps in $u$'s preference list as
  follows. Let $B_u$ be the set of agents~$v$, such that $(v, u)$ is
  an arc in~$G$. Pick $w = \argmin_{v \in B_u}\rank_u(v)$.
  Observe that $\rank_u(M(u)) - \rank_u(w) \leq d$ since $(w, u)$ is an arc in~$G$
  and since $M$ is $\locald$-maximally stable. Swap $M(u)$ in~$u$'s
  preference list with the agent directly preceding~$M(u)$ until
  $M(u) \succ^{P'}_u w$ in the resulting profile~$P'$. In this way,
  for each agent we have made at most~$\locald$ swaps to obtain~$P'$.

  Note that throughout the swapping process no new blocking pairs are
  introduced, that is, $\bp(P', M) \subseteq \bp(P, M)$, because in
  each step only a matched agent improves her rank in its matched
  partner's preference list. Moreover, for each blocking pair
  $\{u, v\} \in \bp(P, M)$, we have either $M(u) \prec^{P'}_u v$ or
  $M(v) \prec^{P'}_v u$ by construction. Thus, $M$ is stable with
  respect to~$P'$, showing that $M$ is \lns{$\locald$}.

  (ii) $\Rightarrow$ (iii): Let $P'$ be a profile as promised by \lnstability{$\locald$}.
  For each blocking pair~$\{u, v\} \in \bp(P, M)$
  either $M(v)$ has been swapped before $u$ in $v$'s preference list
  or $M(u)$ has been swapped before $v$ in $u$'s preference list using
  in either case at most~$d$ swaps. Restricting $P'$ to only these
  swaps yields a profile $P'_{\{u, v\}}$ as required by \bps{$\locald$}.

  (iii) $\Rightarrow$ (i): Let $\{u, v\}$ be an unmatched pair under $M$.
  If $\{u, v\} \notin \bp(P, M)$, then $\min \{ \rank_u(M(u)) - \rank_u(v), \rank_v(M(v))-\rank_v(u) \} \leq 0 \leq \locald$.
  Otherwise, $\{u, v\} \in \bp(P, M)$. By \bpstability{$\locald$}, there are $\locald$ swaps such that in
  the resulting profile~$P'$ we have $M(u) \succ^{P'}_u v$ or
  $M(v) \succ^{P'}_v u$ because $\{u,v\} \notin \bp(P',M)$.
  Since $\{u,v\}\in \bp(P,M)$, meaning that both $v \succ^{P}_u M(u)$ and
  $u \succ^{P}_v M(v)$ hold,
  and since $\tau(P,P')\le \locald$, we have $\rank_u(M(u)) - \rank_u(v)  \leq \locald$ or
  $ \rank_v(M(v)) - \rank_v(u)  \leq \locald$. In other words, $M$ is
  $\locald$-maximally stable.
\end{proof}
}

\section{A Polynomial-Time Algorithm for Finding Robust Matchings}

In this section we present a polynomial-time algorithm for the \pRMl{} problem. First, in \Cref{sec:preliminaries} we provide a brief overview of tools and results from the literature that we will use in our algorithm.
We remark that all these results are originally stated for \emph{complete} preferences.
Nevertheless, since all stable matchings match the same set of agents~(\cref{prop:SMI-matched-agents-the-same}), 
we can verify that they also hold when the preferences may be incomplete.
The results described in the subsequent sections, starting from \cref{sec:profile-decomposition}, are our original contributions. 

\subsection{Preliminaries}\label{sec:preliminaries}
\appendixsection{sec:preliminaries}
\toappendix{

Recall that a pair~$\{u,w\}$ with $u\in U$ and $w\in W$ is a \emph{stable pair} of a preference profile~$P$ if it is contained in at least one stable matching of~$P$.

\begin{proposition}[{\citep[Theorem~3.4.3]{GusfieldIrving1989}}]\label[proposition]{prop:stable-pairs}
  For each preference profile,
  after $O(n^4)$ preprocessing time,
  determining whether a given set~$Q$ of $t$~pairs is a stable set of~$P$ can be done in $O(t^2)$ time.
\end{proposition}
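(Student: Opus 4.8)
The plan is to build on the rotation-poset machinery of \citet{GusfieldIrving1989}. Recall that the stable matchings of $P$ form a distributive lattice whose elements are in bijection with the closed down-sets of a poset $\Pi$ of \emph{rotations}, the empty down-set corresponding to the man-optimal stable matching. Since the rotations that move a fixed agent $u$ form a chain, the partner of $u$ in the stable matching $M_S$ associated with a closed set $S$ is determined by the latest rotation of that chain lying in $S$. Consequently, for every stable pair $\{u,w\}$ there is a unique rotation $\rho_{\mathrm{in}}(u,w)$ whose elimination first matches $u$ to $w$ and a unique rotation $\rho_{\mathrm{out}}(u,w)$ whose elimination next moves $u$ away from $w$, with the conventions that $\rho_{\mathrm{in}}$ lies below everything when $w$ is $u$'s man-optimal partner, that $\rho_{\mathrm{out}}$ lies above everything when $w$ is $u$'s woman-optimal partner, and that always $\rho_{\mathrm{in}}(u,w) \prec \rho_{\mathrm{out}}(u,w)$. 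The key characterization I would establish first is: $\{u,w\} \in M_S$ if and only if $\rho_{\mathrm{in}}(u,w) \in S$ and $\rho_{\mathrm{out}}(u,w) \notin S$.

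Given this, I claim that $Q = \{\{u_1,w_1\},\dots,\{u_t,w_t\}\}$ is a stable set, i.e.\ is contained in a common stable matching, if and only if every pair in $Q$ is a stable pair and a single comparability test succeeds. Indeed, a closed set $S$ realizes all pairs of $Q$ precisely when $I := \{\rho_{\mathrm{in}}(u_i,w_i)\}_i \subseteq S$ and $O := \{\rho_{\mathrm{out}}(u_i,w_i)\}_i$ satisfies $O \cap S = \emptyset$. The minimal candidate is the down-closure $S := {\downarrow}I$, and a standard argument shows such an $S$ exists iff ${\downarrow}I \cap O = \emptyset$: if some closed $S$ works then ${\downarrow}I \subseteq S$, so any $\rho_{\mathrm{out}} \in {\downarrow}I \cap O$ would contradict $O \cap S = \emptyset$; conversely ${\downarrow}I$ itself witnesses existence. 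Unfolding the definition of down-closure, this is exactly the condition that $\rho_{\mathrm{out}}(u_j,w_j) \not\preceq \rho_{\mathrm{in}}(u_i,w_i)$ for all $i,j$. Note that when it holds, $M_{{\downarrow}I}$ is an actual stable matching containing $Q$, so the pairs of $Q$ are automatically pairwise disjoint and no separate compatibility check is needed.

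For the running times, the preprocessing would compute the man-optimal matching, the poset $\Pi$ (which has $O(n^2)$ rotations), the maps $\rho_{\mathrm{in}},\rho_{\mathrm{out}}$ and a stable-pair indicator for all $O(n^2)$ candidate pairs, and---crucially---the reflexive transitive closure of $\Pi$ stored as a lookup table, so that any query ``$\rho \preceq \rho'$?'' is answered in $O(1)$. With $O(n^2)$ rotations, computing and storing this closure costs $O(n^4)$, matching the stated bound. A query then fetches the two rotations for each of the $t$ pairs in $O(t)$ time, rejecting immediately if some pair is not stable, and performs the $t^2$ comparability look-ups of the test above in $O(t^2)$ time.

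The main obstacle I anticipate is not the algorithm but the correctness of the characterization: one must argue rigorously the membership rule ($\rho_{\mathrm{in}} \in S$, $\rho_{\mathrm{out}} \notin S$), that restricting attention to the minimal closed set ${\downarrow}I$ is without loss of generality, and that the boundary conventions at the top and bottom of the lattice (man-optimal and woman-optimal pairs) are handled consistently. Finally, since the cited rotation results are phrased for complete preferences, I would invoke \cref{prop:SMI-matched-agents-the-same}---all stable matchings match the same set of agents---to transfer the entire argument to incomplete preferences without ties.
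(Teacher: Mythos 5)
The paper does not prove this proposition at all --- it is imported verbatim as Theorem~3.4.3 of Gusfield and Irving, with only the remark that \cref{prop:SMI-matched-agents-the-same} lets one transfer it to incomplete preferences. Your reconstruction is correct and is essentially the argument in that reference: the $\rho_{\mathrm{in}}/\rho_{\mathrm{out}}$ membership rule, the reduction to ${\downarrow}I \cap O = \emptyset$ via the minimal closed set, and the $O(n^4)$ transitive-closure table supporting $O(1)$ comparability queries all match the source, and your observation that a failed disjointness condition is caught automatically by the chain structure of each agent's rotations is sound.
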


}
As already observed in the literature, the set of all stable matchings for a given
preference profile forms a lattice--a specific partially ordered set--that is useful in designing
algorithms for finding special kinds of stable matchings. The maximum
and minimum elements are so-called optimal stable matchings: 
Consider a preference profile~$P$ with two sets, $U$ and $W$, of
agents and consider two matchings~$M$ and $M'$. We say that an
agent~$a\in U\cup W$ \myemph{prefers $M$ to $M'$}, denoted as
\myemph{$M \succ_a M'$}, if $\rank_a(M(a)) < \rank_a(M'(a))$.
Similarly, agent~$a$ \myemph{weakly prefers $M$ to $M'$}, denoted as
\myemph{$M \succeq_a M'$}, if $M(a)=M'(a)$ or $\rank_a(M(a)) < \rank_a(M'(a))$.
Accordingly, we say that $M$ is a \myemph{$U$-optimal} (resp.\
\myemph{$W$-optimal}) stable matching if it is stable and there is
\emph{no} other stable matching~$M'$ different from $M$ such that each
agent from $U$ prefers $M'$ to $M$.

It is well-known that $U$-optimal and $W$-optimal stable matchings are unique. %
The concepts of $U$-optimal and $W$-optimal stable matchings are already illustrated in \cref{ex:intro_example}. %

\toappendix{
\Cref{prop:diff-partners-diff-prefs} below shows that, when comparing two stable matchings, an improvement of an agent $u \in U$ always comes at the cost of some other agent from $W$.

\begin{proposition}[{\cite[Theorem~1.3.1, Chapter~1.4.2]{GusfieldIrving1989}}]\label[proposition]{prop:diff-partners-diff-prefs}
  Let $M_1$ and $M_2$ be two stable matchings of the same preference profile with (possibly) incomplete preferences, and let $u$ and $w$ be two agents such that
  $M_1(u)=w$ but $M_2(u)\neq w$.
  Then, $M_1 \succ_u M_2$  if and only if $M_2 \succ_w M_1$.
\end{proposition}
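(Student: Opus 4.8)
The plan is to isolate the following one-directional \emph{key lemma} and then bootstrap it into the full equivalence by a counting argument. The key lemma states: for any two stable matchings $M$ and $M'$ of a profile $P$ and any agent $a$ with $M(a)\neq M'(a)$, if $M\succ_a M'$ then $M'\succ_{M(a)} M$. I would first note that, by \cref{prop:SMI-matched-agents-the-same}, the set of matched agents is the same in $M$ and $M'$, so in particular $a$, $M(a)$, and their partners under the other matching are all well defined; this is the only place where the possible incompleteness of the preferences enters. To prove the key lemma, set $b=M(a)$ and look at the stable matching $M'$: since $M\succ_a M'$, agent $a$ ranks $b$ strictly above $M'(a)$, and $a,b$ are mutually acceptable (they are matched in $M$); as $M'$ is stable, $\{a,b\}$ is not blocking, so $b$ must rank $M'(b)$ strictly above $a$ (here $M'(b)\neq a$ because $M'(a)\neq b$ and there are no ties), which is exactly $M'\succ_b M$.

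Granting the key lemma, the forward implication of the proposition is immediate: applying it with $a=u$ (so $b=M_1(u)=w$) turns $M_1\succ_u M_2$ into $M_2\succ_w M_1$. The work is in the converse, where I would argue globally rather than locally. Let $D_1\subseteq U$ be the set of $U$-agents that strictly prefer $M_1$ to $M_2$, and let $E_2\subseteq W$ be the set of $W$-agents that strictly prefer $M_2$ to $M_1$ (both sets contain only agents whose partners differ between $M_1$ and $M_2$; since there are no ties, every such agent strictly prefers exactly one of the two matchings). The key lemma applied to each $u\in D_1$ shows that $M_1(u)\in E_2$, so $M_1$ restricts to an injection $D_1\to E_2$; applying the key lemma the other way (to each $w\in E_2$, with the roles of $M_1$ and $M_2$ exchanged) shows $M_2(w)\in D_1$, giving an injection $E_2\to D_1$. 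As both sets are finite, $|D_1|=|E_2|$, and the restriction $M_1\colon D_1\to E_2$ is therefore a bijection.

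The converse now follows. Suppose $M_2\succ_w M_1$, i.e.\ $w\in E_2$. Since $M_1\colon D_1\to E_2$ is onto, $w=M_1(u')$ for some $u'\in D_1$; but $M_1$ is a matching and $M_1(u)=w$, so $u=u'\in D_1$, which means $M_1\succ_u M_2$. I expect the main obstacle to be precisely this converse: the stability-based key lemma only ever relates an agent to its partner in the matching it \emph{prefers}, so it cannot by itself connect $w$'s preference to that of its $M_1$-partner $u$ (whom $w$ does not prefer); the bijection, obtained by playing the two injections off against each other, is what supplies the missing link. One should double-check the boundary bookkeeping (that every agent appearing is matched in both matchings, and that the ``no ties'' assumption makes all relevant preferences strict, so that $D_1$ and $E_2$ genuinely capture the differing agents), but these are routine given \cref{prop:SMI-matched-agents-the-same}.
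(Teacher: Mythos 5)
Your proof is correct. Note that the paper does not actually prove this proposition --- it imports it verbatim from Gusfield and Irving with a citation --- so there is no in-paper argument to compare against; your two-step structure (the blocking-pair argument for the one-directional key lemma, then the pair of injections $D_1\to E_2$ and $E_2\to D_1$ forcing $M_1|_{D_1}$ to be a bijection, which yields the converse) is essentially the textbook derivation, and your observation that the local stability argument alone cannot supply the converse is accurate. The only bookkeeping worth making explicit in a full write-up is exactly what you flag: \cref{prop:SMI-matched-agents-the-same} guarantees that $w=M_1(u)$ is matched under $M_2$, so the third condition in the definition of a blocking pair can only fail via $M_2(w)$ being ranked above $u$ rather than via $w$ being unmatched, and strictness of the preference lists (together with $M_2(w)\neq u$, which follows from $M_2(u)\neq w$) upgrades the resulting non-preference to the strict preference $M_2\succ_w M_1$.
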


Finally, we recall that the famous Gale/Shapley algorithm always finds the $U$-optimal (or, depending on the variant of the algorithm used, the $W$-optimal) stable matching.

\begin{proposition}[\cite{GaleShapley1962},{\cite[Chapter~1.4.2]{GusfieldIrving1989}}]
  The $U$-optimal and the $W$-optimal stable matchings of a preference profile always exist and can be found in $O(n^2)$ time.
\end{proposition}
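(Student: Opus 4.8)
The plan is to prove both claims at once by analyzing the classical Gale--Shapley deferred-acceptance procedure, run with the agents of $U$ proposing; the $W$-optimal matching then follows by swapping the roles of $U$ and $W$. First I would describe the procedure adapted to incomplete preferences: maintain a tentative assignment, and while some agent $u \in U$ is unmatched and still has an acceptable partner he has not yet approached, let $u$ propose to the most preferred such $w$; then $w$ tentatively holds the better of $u$ and her current holder (if any) with respect to $\succ_w$ and rejects the other. The key termination observation is that no $u$ ever proposes to the same $w$ twice, so there are at most $n^2$ proposals in total; keeping, for each $u$, a pointer into his list and, for each $w$, her current holder, each proposal is processed in constant time. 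This gives the $O(n^2)$ bound and, in particular, the existence of an output matching~$M$.

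Next I would verify that $M$ is stable. The crucial monotonicity facts are: (a) once $u$ is rejected by $w$ he never proposes to $w$ again and proposes only to less-preferred partners thereafter; and (b) a woman $w$ only ever exchanges her held partner for a strictly more preferred one, so the rank of her holder never increases. Suppose toward a contradiction that $\{u,w\}$ blocks $M$. Then $u$ is unmatched or prefers $w$ to $M(u)$, so by (a) he proposed to $w$ at some point and was rejected; by (b) the partner $w$ holds at termination is at least as preferred as the one for whom she rejected $u$, so $w$ prefers $M(w)$ to $u$, contradicting that $\{u,w\}$ blocks.

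The main work, and the step I expect to be the crux, is $U$-optimality. Call $w$ a \emph{feasible} partner of $u$ if some stable matching pairs them. I would prove, by induction over the sequence of rejections, the invariant that no agent of $U$ is ever rejected by a feasible partner. Consider the first violation: $w$ rejects $u$ in favour of some $u'$ she strictly prefers, while $w$ is feasible for $u$, witnessed by a stable matching $M^\ast$ with $M^\ast(u)=w$. Since this is the first rejection by a feasible partner, $u'$ has not yet been rejected by any feasible partner, hence not by $M^\ast(u')$; as $u'$ proposes in decreasing order of preference, $u'$ weakly prefers $w$ to $M^\ast(u')$. But $w$ prefers $u'$ to $u=M^\ast(w)$, so $\{u',w\}$ blocks $M^\ast$, contradicting its stability.

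The invariant implies that each $u$ ends matched to his most preferred feasible partner: every $w$ he prefers to $M(u)$ rejected him and is therefore infeasible, whereas $M(u)$ is feasible since the stable matching $M$ itself pairs them. Consequently every stable matching $M'$ satisfies $M \succeq_u M'$ for each $u \in U$, so no stable $M' \neq M$ is strictly preferred by all of $U$, which is precisely $U$-optimality; together with the uniqueness already noted this establishes well-definedness. Finally, by \cref{prop:SMI-matched-agents-the-same} the same agents are matched in every stable matching, so incompleteness causes no additional difficulty: the agents left unmatched by $M$ are exactly those unmatched in every stable matching.
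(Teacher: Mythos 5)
The paper does not prove this proposition at all — it imports it verbatim from Gale and Shapley and from Gusfield and Irving (Chapter~1.4.2) by citation. Your argument is precisely the standard deferred-acceptance proof from those sources (termination via the no-repeated-proposal count, stability via the monotonicity of each $w$'s held partner, and $U$-optimality via the ``no one is rejected by a feasible partner'' invariant), and it is correct.
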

}

We now review a fundamental object, \myemph{rotations}, and some well-known structural properties of stable matchings. These concepts will play an instrumental role in our analysis in the subsequent sections. For more details, we refer to the exposition by~\citet{GusfieldIrving1989}.

\begin{definition}[Successor agent, rotations, and rotation elimination]\label[definition]{def:rotations}
  Let $P$ be a preference profile with two disjoint sets of agents,~$U$ and $W$, and with (possibly) incomplete preferences.
  Given a stable matching $M\in \sm(P)$, for each agent~$u\in U$, we define its \myemph{successor~$\sucw_M(u)$} as the \emph{first} (after $M(u)$) agent~$w$ on the preference list of $u$ such that $w$ is matched under $M$ and prefers $u$ to its partner~$M(w)$.
  \iflong
  We illustrate the concept of the successor below:
  \begin{align*}
     u\colon \ldots M(u) \ldots \sucw_M(u) \ldots \qquad \sucw_M(u)\colon \ldots u \ldots M(\sucw_M(u)) \ldots
  \end{align*}  
  \fi
  \looseness=-1 A sequence~$\rho=((u_{0},w_{0}), (u_1,w_1), \ldots, (u_{r-1}, w_{r-1}))$ of pairs is called a \myemph{rotation} if there exists a stable matching $M\in \sm(P)$ such that 
  for each $i\in \{0,1,\ldots,r-1\}$ we have $(u_i,w_i)\in U\times W$, $M(u_i)=w_i$, and $\sucw_M(u_i)=w_{i+1}$ (index $i+1$ taken modulo $r$).
  We say rotation~$\rho$ is \myemph{exposed} in~$M$.

  We use the notation~$M/\rho$ to refer to the matching resulting from $M$ by replacing each pair~$\{u_i,w_i\}$ with $\{u_i,w_{i+1}\}$.
  Formally,
  \iflong  \begin{align*}
    M/\rho = M \setminus \{\{u_i,w_i\} \mid 0\le i \le r-1\} \cup \{\{u_i,w_{i+1}\} \mid 0\le i \le r-1\}.
  \end{align*}
  \else
    $M/\rho = M \setminus \{\{u_i,w_i\} \mid 0\le i \le r-1\} \cup \{\{u_i,w_{i+1}\} \mid 0\le i \le r-1\}$.
    \fi           
  The transformation of $M$ to $M/\rho$ is called the \myemph{elimination of $\rho$ from $M$}.
\end{definition}

Eliminating a rotation from a stable matching results in another stable matching~\cite{GusfieldIrving1989}.
The concepts from \Cref{def:rotations} are illustrated in the example below.

\begin{example}\label[example]{ex:rotations}
  Consider the profile in \cref{ex:intro_example}. %
  Relative to $M_1$, agent~$w_2$ is the first agent among all agents in the preference list of~$u_1$ that prefer $u_1$ to their respective partners. %
  Thus, $\sucw_{M_1}(u_1)=w_2$.
  Sequence $((u_1,w_2), (u_2,w_3), (u_3, w_4), (u_4, w_1))$ is the only rotation exposed in~$M_1$.~\hfill$\diamond$
\end{example}

Interestingly, while a given profile with $O(n)$ agents may admit exponentially ($O(n!)$) many different stable matchings, the number of rotations is polynomial ($O(n^2)$)~\cite[Corollary~3.2.1]{GusfieldIrving1989}. %
Indeed, the set of all rotations gives a compact representation of the set of all possible stable matchings for a given preference profile.
\iflong To determine robustness efficiently, we will use this representation intensely.
\fi

\toappendix{
The next structural result concerns the properties of a stable matching after eliminating a rotation $\rho$.

\begin{proposition}[{\citep[Theorem~2.5.6, Lemma~3.2.1, Lemma~3.2.2]{GusfieldIrving1989}}]\label[proposition]{prop:stable-pairs+rotations}
  Consider a preference profile $P$ with two disjoint sets of agents,~$U$ and $W$,
  and with (possibly) incomplete preferences. 
  For each two agents~$u\in U$ and $w\in W$, the following holds; recall that $x \succeq y$ means that
  $x=y$ or $x\succ y$.

  \begin{compactenum}[(i)]

    \item\label{prop:stable-pair-char} $\{u,w\}$ is in a stable matching if and only if either it is in the $W$-optimal stable matching or
    $(u,w)$ belongs to some rotation.

    \item\label{prop:at-most-1-from-w} There is at most one rotation~$\rho$ with $\rho=((u_0,w_0),\ldots, (u_{r-1}, w_{r-1}))$ such that for some~$i\in \{0,\ldots,r-1\}$
    it holds that $u=u_i$ and $w_{i} \succeq_{u} w\succ_{u} w_{i+1}$.
    \item\label{prop:at-most-1-up} There is at most one rotation~$\rho$ with $\rho=((u_0,w_0),\ldots, (u_{r-1}, w_{r-1}))$ such that for some~$i\in \{0,\ldots,r-1\}$
    it holds that $u=u_i$ and $w = w_{i+1}$.
    \item\label{prop:at-most-1-from-m} There is at most one rotation~$\rho$ with $\rho=((u_0,w_0),\ldots, (u_{r-1}, w_{r-1}))$ such that for some~$i\in \{0,\ldots,r-1\}$
    it holds that $w=w_i$ and $u_{i-1} \succ_{w} u\succeq_{w} u_{i}$.
  \end{compactenum}
\end{proposition}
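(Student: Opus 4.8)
The plan is to reduce the incomplete-preference case to the complete-preference case, for which (i)--(iv) are exactly the cited results of \citet{GusfieldIrving1989}. The bridge is \Cref{prop:SMI-matched-agents-the-same}: it lets us partition the agents into the set $R$ matched by \emph{every} stable matching and the set $S$ matched by \emph{none}. Writing $R_U = R \cap U$ and $R_W = R \cap W$, every $M \in \sm(P)$ is a perfect matching between $R_U$ and $R_W$ (so in particular $|R_U| = |R_W|$) that uses only mutually acceptable pairs. First I would restrict $P$ to the subprofile $\hat P$ on $R_U \cup R_W$, deleting every agent of $S$ from every list; I would then complete $\hat P$ by appending, at the bottom of each list, the not-yet-ranked agents of the opposite side in $R$ in an arbitrary fixed order, obtaining a complete profile $\hat P^+$.

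The key step is to show this transformation preserves exactly the objects appearing in the statement: the set of stable matchings, the $W$-optimal stable matching, and the set of rotations. Deleting $S$-agents cannot create a blocking pair among $R$, and no $M \in \sm(P)$ touches $S$, so $\sm(P)$ and $\sm(\hat P)$ agree. For the completion, a pair made acceptable only by padding is ranked below every originally-acceptable partner by both of its endpoints; any stable matching of $\hat P^+$ must therefore use only originally-acceptable pairs, since otherwise an endpoint together with its genuine partner (which exists because each agent of $R$ is matched in some stable matching) would block. Hence $\sm(\hat P^+) = \sm(\hat P)$. Because the $W$-optimal matching and all rotations---whose successor function of \Cref{def:rotations} only ever points to matched, hence acceptable, partners---are determined by $\sm$ together with the relative order of acceptable agents, they are identical in $P$ and $\hat P^+$.

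With this dictionary, all four parts follow by applying the cited complete-preference statements to $\hat P^+$ and reading them back through the identification: (i) is the stable-pair characterization for $\hat P^+$; (ii) and (iii) are the uniqueness statements about a rotation moving a fixed $u \in R_U$ across a given position of its list; and (iv) is the $W$-side counterpart. The underlying reason is the monotonicity of \Cref{prop:diff-partners-diff-prefs}: across $\sm(P)$ the partner of each $u \in R_U$ strictly worsens along its list, so its stable partners form a chain $w^{(0)} \succ_u \cdots \succ_u w^{(k)}$ that cuts the list into pairwise-disjoint half-open blocks $(w^{(j+1)}, w^{(j)}]_u$, each crossed by a single rotation.

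The main obstacle I anticipate is the bookkeeping in (ii)--(iv) when the agent $w$ named in the statement lies in $S$ or is unacceptable to $u$: by the rank convention such a $w$ sits at the bottom of $u$'s list, so it can satisfy $w_i \succeq_u w \succ_u w_{i+1}$ only if it happens to fall strictly between two acceptable stable partners of $u$, in which case it belongs to exactly one of the disjoint blocks above and the uniqueness conclusion is unaffected. The one genuinely nonroutine point is the backward inclusion $\sm(\hat P^+) \subseteq \sm(\hat P)$, i.e.\ that padding introduces no spurious stable matching; this needs a short alternating-path argument against a fixed acceptable perfect matching, after which everything else is a direct transfer of the cited results.
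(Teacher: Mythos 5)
The paper offers no proof of this proposition beyond the citation: it states the results for complete preferences from Gusfield and Irving and remarks (just before \cref{sec:preliminaries}) that they ``also hold when the preferences may be incomplete'' because of \cref{prop:SMI-matched-agents-the-same}. Your plan is the natural way to try to make that remark precise, but it has a genuine gap at the first step of your dictionary: the claim that $\sm(P)=\sm(\hat P)$, i.e.\ that deleting the never-matched agents creates no new stable matchings. Only $\sm(P)\subseteq\sm(\hat P)$ is routine; the converse fails. Take $U=\{x_0,x_1\}$, $W=\{w_1,w_2,y\}$ with lists $x_0\colon w_1\succ y\succ w_2$; $x_1\colon w_2\succ w_1$; $w_1\colon x_1\succ x_0$; $w_2\colon x_0\succ x_1$; $y\colon x_0$. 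The unique stable matching of $P$ is $\{\{x_0,w_1\},\{x_1,w_2\}\}$, so $y\in S$; but once $y$ is deleted, $\{\{x_0,w_2\},\{x_1,w_1\}\}$ becomes stable as well (in $P$ it is blocked by $\{x_0,y\}$, a pair invisible to $\hat P$). Hence $\hat P^{+}$ has a different $W$-optimal stable matching and a different, strictly larger rotation set than $P$, and the identification you use to read statements (i)--(iv) back from $\hat P^{+}$ to $P$ collapses: applied to this example, your translation of (i) would declare $\{x_0,w_2\}$ a stable pair of $P$, which it is not.

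The step you flagged as the only nonroutine one, $\sm(\hat P^{+})\subseteq\sm(\hat P)$, is indeed fine (your padding argument works); the danger sits one step earlier. The deletion of $S$ is not innocuous precisely because an $S$-agent can sit strictly between two $R$-agents in someone's list, where it acts as a blocker that kills would-be stable matchings of the restricted instance. The reduction Gusfield and Irving actually use is finer: it passes to the GS-lists (shortlists), which in the example above also delete $w_2$ from $x_0$'s list and $x_0$ from $w_2$'s list, and only after that pruning is the set of stable matchings---and with it the $W$-optimal matching and the rotation structure---preserved. To repair your argument you would have to show that every list entry lying below an agent's worst stable partner (not merely the $S$-agents) can be discarded without changing $\sm$, which is exactly the GS-list machinery; invoking \cref{prop:SMI-matched-agents-the-same} alone does not suffice.
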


}
Now we are ready to introduce the notion of the rotation poset of a given preference profile~$P$. As we will see later on, each stable matching can be obtained by performing a number of eliminations of rotations on the $U$-optimal stable matching.
When starting from $U$ some rotations can be exposed only after some other have been already eliminated. This induces a partial order on rotations and defines the rotation poset.

\newcommand{\pred}{\ensuremath{\rhd}}
\newcommand{\predr}{\ensuremath{\unrhd}}
\begin{definition}[Predecessors of rotations, the rotation poset, and the rotation digraph]\label[definition]{def:digraph}
  Let $\pi$ and $\rho$ be two rotations for a preference profile~$P$.
  We say that $\pi$ is a \myemph{predecessor} of $\rho$,
  written as \myemph{$\pi \pred^P \rho$},
  if no stable matching in which $\rho$ is exposed can be obtained from the $U$-optimal stable matching by a sequence of eliminations of rotations without eliminating~$\pi$ first.
  The reflexive closure of the relation~$\pred^P$, denoted as $\predr^P$, defines a partial order on the set of all rotations and is called the \myemph{rotation poset} for~$P$.
  We abbreviate the name of a subset of the poset that is closed under predecessors as a \myemph{closed subset}.
  
  An alternative representation of the rotation poset~$\predr(P)$ is through an acyclic directed graph, called \myemph{rotation digraph of $P$} and written as \myemph{$G(P)$}, whose vertex set is the set of rotations of $P$, and there is a direct arc from rotation $\pi$ to rotation $\rho$ if and only if~$\pi$ precedes~$\rho$ and there is no other rotation~$\sigma$ such that $\pi \pred^P \sigma \pred^P \rho$.
\end{definition}

\toappendixalter{
}
  {The following example illustrates the rotation poset of profile given in \cref{ex:intro_example}.}{
\begin{example}\label[example]{ex:rotation-poset}
Let us consider the profile~$P$ given in \cref{ex:intro_example} again.
As we mentioned in \cref{ex:rotations}, rotation~$\pi_1=((u_1, w_2), (u_2, w_3), (u_3, w_4), (u_4, w_1))$ is the only rotation
exposed in the $U$-optimal stable matching~$M_1$.
After eliminating~$\pi_1$ from~$M_1$, we obtain the stable matching~$M_3=M_1/\pi_1$.
One can also verify that the sequence~$\pi_2=((u_1,w_3), (u_3,w_1))$ and $\pi_3=((u_2,w_4),(u_4,w_2))$
are the only two rotations exposed in stable matching~$M_3$.
After eliminating~$\pi_2$ from~$M_3$, we obtain the stable matching~$M_4=M_3/\pi_2$.
After eliminating~$\pi_3$ from $M_3$, we obtain the stable matching~$M_5=M_3/\pi_3$.
After eliminating rotation~$\pi_3$ from~$M_4$ or eliminating the rotation~$\pi_2$ from~$M_5$, we obtain the $W$-optimal stable matching~$M_2$.
\columnratio{.42}
  \begin{paracol}{2}

  Since $\pi_1$ is only exposed in $M_1$ and since $\pi_2$ and $\pi_3$ are only exposed after the elimination of $\pi_1$ 
  we have that $\pi_2$ and $\pi_3$ are two (direct) successors of $\pi_1$.

  The Hasse diagram on the right-hand side depicts how the stable matchings for $P$ are related to each other, in terms of dominance with respect to the satisfaction of the agents from $U$.
  Herein, the matchings are depicted as matrices such that each pair in a matching is represented by a column in the corresponding matrix.

  \switchcolumn

  {\centering
    \begin{tikzpicture}
      [every text node part/.style={align=right}, match/.style={minimum size=3ex,inner sep=7pt}, label/.style={inner sep=2pt}, scale = 1.2]
    \node[match] at (0,0) (M1){$M_1\colon \begin{pmatrix} u_1 & u_2 & u_3 & u_4\\ w_2 & w_3& w_4&  w_1\end{pmatrix}$};
    \node[match] at (0,-1.5) (M3){$M_3\colon \begin{pmatrix} u_1 & u_2 & u_3 & u_4\\ w_3 & w_4& w_1&  w_2\end{pmatrix}$};
    \node[match] at (-2,-3) (M4){$M_4\colon \begin{pmatrix} u_1 & u_2 & u_3 & u_4\\ w_1 & w_4& w_3&  w_2\end{pmatrix}$};
    \node[match] at (2,-3) (M5){$M_5\colon \begin{pmatrix} u_1 & u_2 & u_3 & u_4\\ w_3 & w_2& w_1&  w_4\end{pmatrix}$};
    \node[match] at (0,-4.6) (M2) {$M_2\colon \begin{pmatrix} u_1 & u_2 & u_3 & u_4\\ w_1 & w_2& w_3&  w_4\end{pmatrix}$};

    \draw (M1) edge[shorten <=-0.2cm, shorten >=-0.2cm] node[label, left] {$\pi_1$} (M3);
    \draw (M3) edge[shorten <=-0.2cm, shorten >=-0.2cm] node[label,above left] {$\pi_2$} (M4);
    \draw (M3) edge[shorten <=-0.2cm, shorten >=-0.2cm] node[label,above right] {$\pi_3$} (M5);
    \draw (M4) edge[shorten <=-0.2cm, shorten >=-0.2cm] node[label,below left] {$\pi_3$} (M2);
    \draw (M5) edge[shorten <=-0.2cm, shorten >=-0.2cm] node[label,below right] {$\pi_2$} (M2);
  \end{tikzpicture} 
  \par
}
\hfill $\diamond$
\end{paracol}
\end{example}
}

Finally, let us describe a central result from the literature that relates rotations and stable pairs.

\begin{proposition}[{\cite[Theorem~2.5.7, Lemma~3.3.2]{GusfieldIrving1989}}]\label[proposition]{prop:rotations+stable-matchings}
  Let $R$ denote the set of all rotations of a preference profile~$P$,
  and let $G(P)$ denote the rotation digraph of $P$.
  \begin{compactenum}[(i)]
    \item\label{rot:closedsubet-stable} A matching~$M$ is a stable matching of $P$ if and only if there is a closed subset of rotations~$R'\subseteq R$ with respect to the precedence relation $\pred^P$ such that $M$ can be generated by taking the $U$\nobreakdash-optimal stable matching and by eliminating the rotations in $R'$ in an order consistent with~$\pred^P$.
    \item\label{rot:runtime} The rotation set~$R$ and the rotation digraph~$G(P)$ can be computed in $O(n^2)$~time.
  \end{compactenum}
\end{proposition}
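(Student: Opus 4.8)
The plan is to establish a correspondence between the stable matchings of $P$ and the closed subsets of the rotation poset $\predr^P$, proving the two directions of~(i) in turn and then bounding the running time for~(ii). Two facts recalled above drive everything: eliminating an exposed rotation from a stable matching again yields a stable matching, and this elimination moves every agent $u_i$ of the rotation from its current partner $w_i$ to $\sucw_M(u_i)=w_{i+1}$, which by the definition of the successor is ranked strictly below $w_i$ by $u_i$. Hence each elimination produces a matching that is strictly worse for the agents in $U$ and, by \Cref{prop:diff-partners-diff-prefs}, correspondingly better for those in $W$. Starting from the $U$-optimal matching $M_0$ and repeatedly eliminating exposed rotations therefore yields a chain of stable matchings strictly decreasing for $U$, which must terminate; the first lemma I would prove is that every stable matching other than the $W$-optimal one exposes at least one rotation, so this chain terminates exactly at the $W$-optimal matching and visits only stable matchings of $P$.

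For the ``if'' direction of~(i), assume $M$ arises from $M_0$ by eliminating a closed subset $R' \subseteq R$ in an order consistent with $\pred^P$. I would argue by induction along this order that each rotation is genuinely exposed at the moment it is eliminated: when $\rho$'s turn comes, all of its predecessors have already been removed, and being closed under predecessors is exactly the condition that makes $\rho$ exposed in the current matching. Every intermediate matching is then obtained by a valid rotation elimination from a stable matching, so all of them---hence $M$---are stable.

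For the converse, given a stable matching $M$ I would define the set $R_M \subseteq R$ of rotations already ``applied'' in reaching $M$, namely those $\rho=((u_0,w_0),\dots,(u_{r-1},w_{r-1}))$ for which $w_i \succeq_{u_i} M(u_i)$ holds across the indices $i$, meaning the move prescribed by $\rho$ has been carried out on the $U$-side. Using the stable-pair characterization of \Cref{prop:stable-pairs+rotations} (each pair of $M$ lies in the $W$-optimal matching or in a unique rotation) together with its uniqueness statements, I would verify three things: that $R_M$ is well defined (the agents of any one rotation agree on whether it has been applied), that $R_M$ is closed under $\pred^P$, and that eliminating $R_M$ in any consistent order regenerates $M$. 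A final ingredient is order-independence: two rotations exposed simultaneously are incomparable in the poset and commute, so by a standard diamond/exchange argument the result of eliminating a closed subset does not depend on the chosen consistent order. I expect this well-definedness-and-confluence step to be the main obstacle, as it is precisely where the combinatorial poset structure must be matched against the lattice $\sm(P)$; the uniqueness properties in \Cref{prop:stable-pairs+rotations}, which pin each stable pair to at most one rotation on each side, are the key leverage. Together with the forward direction this yields the claimed bijection and establishes~(i).

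For~(ii), I would compute $M_0$ and the $W$-optimal matching with the Gale/Shapley algorithm in $O(n^2)$ time and then enumerate all rotations by running the ``find an exposed rotation and eliminate it'' loop from $M_0$ down to the $W$-optimal matching, tracing each exposed rotation via successor pointers in time proportional to its length. Since there are only $O(n^2)$ rotations and every pair occurring in a rotation can be charged---via the uniqueness statements of \Cref{prop:stable-pairs+rotations}---to a distinct stable pair, the total length of all rotations, and hence the enumeration, is $O(n^2)$. Finally the arcs of $G(P)$ are read off from the same structural rules: each agent induces only $O(n)$ candidate precedence relations between the rotations touching consecutive positions of its list, and reducing these to immediate-predecessor arcs stays within the $O(n^2)$ budget.
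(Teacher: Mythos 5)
The paper does not prove this proposition at all: it is imported verbatim from \citet{GusfieldIrving1989} (Theorem~2.5.7 and Lemma~3.3.2), so there is no in-paper argument to compare against --- your sketch is effectively a reconstruction of the textbook proof, and it follows the same general architecture (the elimination chain from the $U$-optimal to the $W$-optimal matching, the bijection with closed subsets, and the $O(n^2)$ bound via the total length of all rotations).

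Within that sketch there is one step you assert rather than prove, and it is precisely where the substantive work lies. In the ``if'' direction you write that ``being closed under predecessors is exactly the condition that makes $\rho$ exposed in the current matching.'' But \cref{def:digraph} defines $\pi \pred^P \rho$ only as a \emph{necessary} condition: no elimination sequence exposing $\rho$ can avoid eliminating $\pi$ first. That all predecessors having been eliminated is also \emph{sufficient} for $\rho$ to be exposed is the actual content of Theorem~2.5.7 in Gusfield--Irving, and it does not follow from the definition; proving it requires the lattice structure of $\sm(P)$ (meets and joins of stable matchings) or an equivalent exchange argument showing that if $\rho$ is exposed in some matching reachable without eliminating $\sigma$, and $\sigma$ is exposed alongside it, then eliminating $\sigma$ keeps $\rho$ exposed. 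Your ``converse'' paragraph honestly flags the well-definedness and confluence issues, and those are resolvable by the uniqueness clauses of \cref{prop:stable-pairs+rotations} as you suggest, but the exposure-sufficiency claim is the gap that would actually sink an unsupported write-up. A smaller caveat on~(ii): the digraph built in $O(n^2)$ time in Lemma~3.3.2 is a sparse digraph whose \emph{transitive closure} is the precedence order, not literally the Hasse diagram of immediate predecessors as \cref{def:digraph} describes it; your plan of ``reducing to immediate-predecessor arcs'' within the $O(n^2)$ budget would need care, whereas the cited construction sidesteps this by never computing the Hasse diagram explicitly.
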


\subsection{Profile Characterization}\label{sec:profile-decomposition}
\appendixsection{sec:profile-decomposition}

\looseness=-1
For a given profile $P$ with $O(n)$~agents and a given swap distance bound~$d=O(n)$
there are exponentially many profiles which are within swap distance~$d$ to~$P$.
In this section, we show that we do not need to consider all of them in order to find a
$d$-robust matching.
Instead, we characterize them based on pairs of shifts.
Briefly put, a \myemph{shift} is a set of swaps which all involve swapping the same agent forward in a single preference list.
We describe a polynomial-size family of ``relevant'' profiles and we characterize each of them
through a pair of shifts---such a pair of shifts will be represented by a quadruple of agents.
Intuitively, if there exists a profile $P'$ witnessing that a certain matching~$M$
is \emph{not} $d$-robust, and if $P'$ contains more than two shifts with respect to the original profile $P$ , then $P'$ can be represented by a number of profiles which satisfy the following.
Each of these profiles contains only two shifts and one of them witnesses that $M$ is not $d$-robust.
Later on, we will show that the quadruples which characterize the profiles relevant
for checking $d$-robustness are closely related to certain
rotations---this will give us the tools essential for constructing a polynomial-time algorithm.

\begin{definition}[Stable quadruples and swap sets]\label[definition]{def:swap-sets}
  Let $P=((\succ_u^P)_{u\in U}, (\succ_w^P)_{w\in W})$ be a preference
  profile for the two agent sets~$U$ and $W$. A \myemph{stable
    quadruple} (with respect to~$P$) is a quadruple $(u^*,w^*,u,w)$ of four distinct agents
  with $u^*,u\in U$ and $w^*,w\in W$ such that there exists a stable matching for~$P$ that contains both
  $\{u^*,w\}$ and $\{u,w^*\}$.

  For each stable quadruple~$q=(u^*,w^*,u,w)$ of $P$,
  we define the \myemph{swap set} associated with $P$ and~$q$, denoted as \myemph{$\shifts(P,q)$},
  as the smallest set of swaps
  which involve the following two kinds of shifts in the preference lists of $u^*$ and $w^*$.
  \begin{inparaenum}
    \item The first kind of shifts puts agent~$w^*$ forward until she is right in front of $w$ in the preference list of $u^*$, and
    \item the second kind of shifts puts agent~$u^*$ forward until she is right in front of $u$ in the preference list of $w^*$.
  \end{inparaenum}
  If $w^*$ (resp.\ $u^*$) is already in front of $w$ (resp.\ $u$), then no swap in the corresponding preference list is needed.
  Formally,
  \iflong
\begin{align*}
  \shifts(P,q) \coloneqq \bigcup_{y\in W\colon w \succeq^P_{u^*} y \succ^P_{u^*} {w^*}}\{({u^*},\{y,{w^*}\})\} \cup \bigcup_{x\in U\colon u \succeq^P_{w^*} x \succ^P_{w^*} {u^*}}\{({w^*},\{{u^*},x\})\}.
\end{align*}
\else

{\centering
  $  \shifts(P,q) \coloneqq \bigcup_{y\in W\colon w \succeq^P_{u^*} y \succ^P_{u^*} {w^*}}\{({u^*},\{y,{w^*}\})\} \cup \bigcup_{x\in U\colon u \succeq^P_{w^*} x \succ^P_{w^*} {u^*}}\{({w^*},\{{u^*},x\})\}.$
  \par}

\fi
Herein, the notation~$x \succeq y$ means either $x=y$ or $x\succ y$.
Further, let \myemph{$\shiftss(\succ^P_{u^*},q)$} denote the preference list resulting from starting with~$\succ^P_{u^*}$ and performing the swaps from $\shifts(P,q)$ that involve the preference list of $u^*$.
Analogously, let $\shiftss(\succ^P_{w^*},q)$ denote the preference list resulting from starting with $\succ^P_{w^*}$ and performing the swaps from $\shifts(P,q)$ that involve the preference list~$\succ^P_{w^*}$.
Now, let \myemph{$P[\shifts(P,q)]$} denote the preference profile resulting from $P$ by replacing the preference lists of $u^*$ and $w^*$ with $\shiftss(\succ^P_{u^*},q)$ and $\shiftss(\succ^P_{w^*}, q)$, respectively.
Formally,
\iflong\begin{align*}
  P[\shifts(P,q)] \coloneqq ((\succ_{x}^P)_{x\in U \setminus \{u^*\}}+\shiftss(\succ^P_{u^*},q), (\succ_{y}^P)_{y\in W \setminus \{w^*\}}+\shiftss(\succ_{w^*},q)).
       \end{align*}
       \else

       {
         \centering
  $P[\shifts(P,q)] \coloneqq ((\succ_{x}^P)_{x\in U \setminus \{u^*\}}+\shiftss(\succ^P_{u^*},q), (\succ_{y}^P)_{y\in W \setminus \{w^*\}}+\shiftss(\succ_{w^*},q)).$
     \par  }
       
       \fi
\end{definition}

\begin{example}\label[example]{ex:swap-sets}
  For an illustration, let us consider the profile given in \cref{ex:intro_example},
  denoted as $P=((\succ^P_{u_i})_{u_i\in U}, (\succ^{P}_{w_i})_{w_i \in W})$, and the following stable quadruple~$q=\rotatione$; note that $\{\{u_3,w_1\}, \{u_4,w_2\}\}$ is a stable set~(see $M_3$). 
  The swap set~$\shifts(P,q)$ consists of two swaps; both involve changing $u_3$'s preference list: %
  $\shifts(P,q) = \{(u_3, \{w_2,w_3\}), (u_3,\{w_2,w_1\})\}$.

  By performing the swaps given in $\shifts(P,q)$ on~$\succ^P_{u_3}$ and on the preference profile, we obtain that
     $\shiftss(\succ^P_{u_3},q) = \{u_3\colon w_4 \succ w_2 \succ w_1 \succ w_3\}$, and  %
    $P[\shifts(P,q)]  = ((\succ^P_{u_1}, \succ^P_{u_2},  \succ^P_{u_3}, \shiftss(\succ^{P}_{u_4},q)), (\succ^p_{w_1}, \succ^P_{w_2}, \succ^P_{w_3},  \succ^P_{w_4}))$.
  Finally, we observe that in $P[\shifts(P,q)]$, $u_4$ prefers $w_1$ to $w_2$ and $w_1$ prefers $u_4$ to $u_3$.
  \mbox{}\hfill$\diamond$
\end{example}

\noindent A stable quadruple~$q$ and the corresponding profile~$P[\shifts(P,q)]$ satisfy the following properties.

\begin{observation}\label[observation]{obs:sh-P-props}
  Let $P$ be a preference profile over the two agent sets $U$ and $W$, let $q$ be a stable quadruple with $q=(u^*,w^*,u,w)$ and 
  let $Q=P[\shifts(P,q)]$ denote the preference profile after performing the swaps in the set~$\shifts(P,q)$.
  \begin{compactenum}[(i)]
    \item\label{obs:shift-other-agents}
    Each agent~$x\in U\cup W \setminus \{u^*,w^*\}$ other than $u^*$ and $w^*$
    has $\succ^Q_{x}=\,\succ^P_{x}$.
    \item\label{obs:shift-u}
    If $w^* \succ^P_{u^*} w$, then $\succ^{Q}_{u^*} = \succ^{P}_{u^*}$;
    otherwise, for each two distinct agents~$y,z \in W\setminus \{w^*\}$ the following holds.
    \begin{inparaenum}[(a)]
      \item $y \succ_{u^*}^Q z$ iff.\ $y \succ_{u^*}^P z$,
      \item $y \succ_{u^*}^Q w^*$ iff.\ $y \succ_{u^*}^P w$,
      \item $w^* \succ_{u^*}^Q y$ iff.\ $w \succeq^{P}_{u^*} y$.
    \end{inparaenum}
    \item\label{obs:shift-w}
    If $u^* \succ^P_{w^*} u$, then $\succ^{Q}_{w^*} = \succ^{P}_{w^*}$;
    otherwise, for each two distinct agents~$y,z \in U\setminus \{u^*\}$ the following holds.
    \begin{inparaenum}[(a)]
      \item $y \succ_{w^*}^Q z$ iff.\ $y \succ_{w^*}^P z$,
      \item $y \succ_{w^*}^Q u^*$ iff.\ $y \succ_{w^*}^P u$,
      \item $u^* \succ_{w^*}^Q y$ iff.\ $u \succeq^{P}_{w^*} y$.
    \end{inparaenum}
    \item\label{obs:blocking} In $P[\shifts(P,q)]$, agent~$u^*$ prefers $w^*$ to $w$, and agent~$w^*$ prefers $u^*$ to $u$.
  \end{compactenum}
\end{observation}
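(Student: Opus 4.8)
The plan is to read off each of the four parts directly from the definition of $\shifts(P,q)$ and of $Q=P[\shifts(P,q)]$, exploiting that every swap in $\shifts(P,q)$ touches only the two lists $\succ^P_{u^*}$ and $\succ^P_{w^*}$. First I would dispatch~\eqref{obs:shift-other-agents}: by the defining formula, each swap in $\shifts(P,q)$ has the form $(u^*,\{y,w^*\})$ or $(w^*,\{u^*,x\})$, so by the definition of $P[\shifts(P,q)]$ only the lists of $u^*$ and $w^*$ are replaced while the list of every other agent is copied verbatim; hence $\succ^Q_x=\,\succ^P_x$ for all $x\in U\cup W\setminus\{u^*,w^*\}$.

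For~\eqref{obs:shift-u} I would analyze only the first kind of shift, and~\eqref{obs:shift-w} then follows by the symmetric argument with the roles of $U$ and $W$, of $(u,w)$, and of $u^*,w^*$ exchanged. Since $u^*,w^*,u,w$ are distinct, exactly one of $w^*\succ^P_{u^*}w$ and $w\succ^P_{u^*}w^*$ holds. In the first case the index set $\{y\in W\colon w\succeq^P_{u^*}y\succ^P_{u^*}w^*\}$ is empty (any such $y$ would force $w\succ^P_{u^*}w^*$), so no swap involving $u^*$ is performed and $\succ^Q_{u^*}=\,\succ^P_{u^*}$, as claimed. In the second case, the defining swaps are realized as a sequence of adjacent transpositions each moving $w^*$ forward past exactly one of the agents $y$ with $w\succeq^P_{u^*}y\succ^P_{u^*}w^*$, i.e.\ past $w$ and everyone strictly between $w$ and $w^*$, until $w^*$ comes to rest immediately in front of $w$. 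Because each transposition involves $w^*$, the pairwise order of any two agents other than $w^*$ is never altered, which gives~(a). For (b) and (c) I would simply identify the landing position of $w^*$: the agents that remain above $w^*$ are precisely those ranked strictly above $w$ in $P$ (yielding $y\succ^Q_{u^*}w^*\iff y\succ^P_{u^*}w$), and the agents that end up below $w^*$ are precisely $w$ together with the agents ranked weakly below $w$ in $P$ (yielding $w^*\succ^Q_{u^*}y\iff w\succeq^P_{u^*}y$).

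Finally,~\eqref{obs:blocking} is an immediate corollary of the previous parts. Taking $y=w$ in~\eqref{obs:shift-u}(c) gives $w^*\succ^Q_{u^*}w$, since $w\succeq^P_{u^*}w$ holds trivially; and in the unchanged case $w^*\succ^P_{u^*}w$ persists into $Q$. The claim that $w^*$ prefers $u^*$ to $u$ in $Q$ follows symmetrically from~\eqref{obs:shift-w}(c) with $x=u$. The whole argument is pure bookkeeping and I do not expect any genuine obstacle; the only place demanding care is keeping the weak and strict inequalities in (b) and (c) consistent with the exact position at which $w^*$ comes to rest relative to $w$ — in particular that $w$ itself is one of the agents swept over, so that $w^*$ lands strictly in front of $w$ rather than merely weakly.
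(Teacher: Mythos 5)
Your proof is correct and takes exactly the route the paper intends: the paper states this as an observation without proof, treating all four parts as direct consequences of Definition~\ref{def:swap-sets}, and your case analysis on whether $w^*\succ^P_{u^*}w$ or $w\succ^P_{u^*}w^*$ together with tracking the landing position of $w^*$ (and the symmetric argument for $w^*$'s list) is precisely that direct verification. The one point you rightly flag as needing care — that $w$ itself is among the swept-over agents so $w^*$ lands strictly in front of it — is handled correctly.
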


\looseness=-1
Informally, we will argue that,
to find a robust matching, it suffices to focus on profiles obtained by performing swaps induced by certain quadruples. Further, we will show that for each quadruple $q=(u^*,w^*,u,w)$ in profile $P[\shifts(P,q)]$ we only need to ensure that $\{u^*,w^*\}$ is not a blocking~pair. %

\toappendix{
  \ifshort
  Recall that in the main part, we aim to only focus on quadruples~$q=(u^*,w^*,u,w)$ and their corresponding profiles~$P[\shifts(P,q)]$.
  \fi
  \Cref{lem:profile-only-possible-blocking-pair,lem:profile-decompose} below formalize our intuition that $\{u^*,w^*\}$ is the only possible blocking pair in $P[\shifts(P,q)]$.

\newcommand{\onlypossiblebp}{%
  Consider a preference profile $P$ and a stable matching $M$ for~$P$. For a stable quadruple $q=(u^*,w^*,u,w)$, pair~$\{u^*,w^*\}$ is the only possible blocking pair of $M$ in $P[\shifts(P,q)]$.
}
\begin{lemma}\label[lemma]{lem:profile-only-possible-blocking-pair}
  \onlypossiblebp
\end{lemma}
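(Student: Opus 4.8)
The plan is to exploit the fact that $Q \coloneqq P[\shifts(P,q)]$ differs from $P$ \emph{only} in the preference lists of $u^*$ and $w^*$ (\Cref{obs:sh-P-props}, item~(i)), and that the changes are entirely directional: $w^*$ is merely moved forward in $u^*$'s list, and $u^*$ merely moved forward in $w^*$'s list. First I would observe that any blocking pair $\{x,y\}$ of $M$ in $Q$ with $x \neq u^*$ and $y \neq w^*$ involves two agents whose lists are identical in $P$ and in $Q$; hence the blocking condition for such a pair is unchanged, and since $M$ is stable in $P$, no such pair can block. Consequently, every blocking pair other than the candidate $\{u^*,w^*\}$ must involve exactly one of $u^*,w^*$, and I would handle the two cases symmetrically.

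The crux is a \textbf{monotonicity claim}: for every $y \in W \setminus \{w^*\}$, if $u^*$ prefers $y$ to $M(u^*)$ in $Q$, then $u^*$ already preferred $y$ to $M(u^*)$ in $P$. I would prove this using \Cref{obs:sh-P-props}, item~(ii). If $u^*$'s list is unchanged (the case $w^* \succ^P_{u^*} w$), the claim is immediate. Otherwise $w \succ^P_{u^*} w^*$, and I split on $M(u^*)$: when $M(u^*) \neq w^*$, both $y$ and $M(u^*)$ lie in $W \setminus \{w^*\}$, so their relative order is preserved by part~(a), giving the claim directly; when $M(u^*)=w^*$, the hypothesis $y \succ^Q_{u^*} w^*$ yields $y \succ^P_{u^*} w$ by part~(b), and since $w \succ^P_{u^*} w^* = M(u^*)$ in this case, transitivity gives $y \succ^P_{u^*} M(u^*)$. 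The analogous statement for $w^*$ and any $x \in U \setminus \{u^*\}$ follows by symmetry from item~(iii).

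Finally I would assemble the pieces. Suppose $\{u^*,y\}$ with $y \neq w^*$ blocks $M$ in $Q$. Because swaps neither add nor remove agents, the acceptable sets and the partner $M(y)$ are unchanged, so $y$'s half of the blocking condition (acceptability, not matched together, and either $y$ unmatched or $y$ prefers $u^*$ to $M(y)$) is identical in $P$ and $Q$. For $u^*$'s half, either $u^*$ is unmatched under $M$—in which case that half is automatically satisfied in both profiles—or $u^*$ is matched, and the monotonicity claim transfers ``$u^*$ prefers $y$ to $M(u^*)$'' from $Q$ back to $P$. Hence $\{u^*,y\}$ already blocks $M$ in $P$, contradicting the stability of $M$; the case involving $w^*$ is symmetric. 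This leaves $\{u^*,w^*\}$ as the only possible blocking pair, proving the lemma. I expect the main obstacle to be phrasing the monotonicity claim with the correct treatment of the sub-case $M(u^*)=w^*$ together with the acceptability and unmatched-agent bookkeeping, since that sub-case is the only place where the \emph{forward-only} nature of the shifts is genuinely used; elsewhere the argument reduces to the routine ``unchanged lists imply unchanged blocking status''.
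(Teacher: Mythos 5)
Your proposal is correct and follows essentially the same route as the paper's proof: assume a blocking pair other than $\{u^*,w^*\}$, observe that it must involve exactly one of $u^*,w^*$ because only their lists change, and use \Cref{obs:sh-P-props} to transfer the blocking condition back to $P$, contradicting the stability of $M$. Your explicit handling of the sub-case $M(u^*)=w^*$ via part~(b) and transitivity is a detail the paper's terser proof glosses over, but the underlying argument is identical.
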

\begin{proof}
  Let $Q=P[\shifts(P,q)]$. Suppose, towards a contradiction, that $M$ admits a blocking pair~$\{x,y\}$ with $x\in U$ and $y\in W$ in profile~$Q$ and that $\{x,y\}\neq \{u^*,w^*\}$.
  Since $Q$ differs from $P$ only in the preference lists of $u^*$ and $w^*$ and since $M$ is stable in $P$, it follows that either~$x=u^*$ or $y=w^*$.
  If $x=u^*$, implying that $\{u^*,y\}$ is blocking $M$ in $Q$, 
  then it holds that $y \succ^{Q}_{u^*} M(u^*)$ and
  $u^* \succ^Q_{y} M(y)$.
  However, since $\{x,y\}\neq \{u^*,w^*\}$ it follows that $y\neq w^*$, 
  and that  $y \succ^{{P}}_{u^*} M(u^*)$  (see \cref{obs:sh-P-props}\eqref{obs:shift-u}) and
  $u^* \succ^{P}_{y} M(y)$ (see \cref{obs:sh-P-props}\eqref{obs:shift-other-agents}).
  This implies that $\{u^*,y\}$ is also blocking~$M$ in~${P}$, a contradiction to $M$ being stable in ${P}$.  
  Analogously, we can derive a contradiction for the case of $x\neq u^*$ and $y = w^*$.
\end{proof}

\newcommand{\profiledecompose}{%
  Let $P_1$ and $P_2$ be two preference profiles for the same two disjoints sets~$U$ and $W$, and let $M\in \sm(P_1)$ be a stable matching of $P_1$.
  Let~$\{u^*,w^*\}\in \bp(P_2,M)$ be a blocking pair for $P_2$ with $u^*\in U$ and $w^*\in W$. Define $q = (u^*,w^*,M(w^*),M(u^*))$. %
  The following holds.
  \begin{inparaenum}[(i)]
    \item\label{cond:only-bp} $\bp(P[\shifts(P_1, q)], M) = \{\{u^*,w^*\}\}$.
    \item\label{cond:nmore-swaps} $|\shifts(P_1,q)| \le \tau(P_1,P_2)$.
  \end{inparaenum}
}
\begin{lemma}\label[lemma]{lem:profile-decompose}
  \profiledecompose
\end{lemma}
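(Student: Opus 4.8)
The plan is to treat the two claims separately after first checking that $q=(u^*,w^*,M(w^*),M(u^*))$ is a stable quadruple of $P_1$. This holds because both $\{u^*,M(u^*)\}$ and $\{M(w^*),w^*\}$ lie in the stable matching $M\in\sm(P_1)$, and the four agents are pairwise distinct since $\{u^*,w^*\}\notin M$ (it blocks $M$ in $P_2$, so $M(u^*)\neq w^*$ and $M(w^*)\neq u^*$). Note that $u^*$ and $w^*$ are matched by $M$ for $q$ to be defined, and since they block in $P_2$ they find each other acceptable; in the relevant case $\tau(P_1,P_2)<\infty$ the acceptable sets coincide, so they are acceptable to each other in $P_1$ too and the shift is well defined (if $\tau(P_1,P_2)=\infty$ then~\eqref{cond:nmore-swaps} is trivial).

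For~\eqref{cond:only-bp} I would invoke \cref{lem:profile-only-possible-blocking-pair} with $P=P_1$, which already yields $\bp(P_1[\shifts(P_1,q)],M)\subseteq\{\{u^*,w^*\}\}$. It then only remains to prove the reverse inclusion, i.e.\ that $\{u^*,w^*\}$ genuinely blocks $M$ in $Q\coloneqq P_1[\shifts(P_1,q)]$. This is immediate from \cref{obs:sh-P-props}\eqref{obs:blocking}: in $Q$ agent $u^*$ prefers $w^*$ to $w=M(u^*)$ and agent $w^*$ prefers $u^*$ to $u=M(w^*)$. These preferences also witness that $u^*$ and $w^*$ are ranked (hence acceptable) in each other's lists in $Q$, and $\{u^*,w^*\}\notin M$, so all three blocking-pair conditions hold.

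For~\eqref{cond:nmore-swaps} I would first read the size of the swap set off \cref{def:swap-sets}: with $w=M(u^*)$ and $u=M(w^*)$, setting $k_1=|\{y\in W: M(u^*)\succeq^{P_1}_{u^*}y\succ^{P_1}_{u^*}w^*\}|$ and $k_2=|\{x\in U: M(w^*)\succeq^{P_1}_{w^*}x\succ^{P_1}_{w^*}u^*\}|$ we get $|\shifts(P_1,q)|=k_1+k_2$. Since $\tau(P_1,P_2)$ is the sum over all agents of the per-list swap distances, it suffices to prove $\tau(\succ^{P_1}_{u^*},\succ^{P_2}_{u^*})\ge k_1$ and $\tau(\succ^{P_1}_{w^*},\succ^{P_2}_{w^*})\ge k_2$ and add them (all other agents contribute nonnegatively). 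Because $\{u^*,w^*\}\in\bp(P_2,M)$, we have $w^*\succ^{P_2}_{u^*}M(u^*)$ and $u^*\succ^{P_2}_{w^*}M(w^*)$, so in each relevant list the order of the two distinguished agents is reversed between $P_1$ and $P_2$.

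The technical heart, and the step I expect to be the main obstacle, is an elementary lower bound on the Kendall-$\tau$ distance that I would isolate as a claim: if $\sigma,\pi$ are linear orders with $a\succ_\sigma b$ but $b\succ_\pi a$, and $Z=\{z: a\succeq_\sigma z\succ_\sigma b\}$, then $\tau(\sigma,\pi)\ge|Z|$. Applying it with $(\sigma,\pi,a,b)=(\succ^{P_1}_{u^*},\succ^{P_2}_{u^*},M(u^*),w^*)$ gives $\tau(\succ^{P_1}_{u^*},\succ^{P_2}_{u^*})\ge k_1$, and symmetrically for $w^*$, completing the bound. To prove the claim I would count inverted pairs directly, which requires careful bookkeeping to avoid double counting. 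Let $p=|\{z\in Z: z\succ_\pi b\}|$; since $a\in Z$ but $a\prec_\pi b$, these $p$ elements all lie in $Z\setminus\{a\}$. The $|Z|-p$ pairs $\{z,b\}$ with $z\prec_\pi b$ are inverted, as $z\succ_\sigma b$ for every $z\in Z$ by definition of $Z$; and the $p$ pairs $\{z,a\}$ with $z\succ_\pi b$ satisfy $z\succ_\pi b\succ_\pi a$ while $a\succ_\sigma z$, so they are inverted as well. These two families are disjoint because the first always contains $b$ whereas the second contains $a$ and some $z\neq b$, so together they exhibit at least $(|Z|-p)+p=|Z|$ inverted pairs, each of which contributes to $\tau(\sigma,\pi)$.
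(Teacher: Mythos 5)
Your proof is correct and follows essentially the same route as the paper's: part~(i) via \cref{lem:profile-only-possible-blocking-pair} together with \cref{obs:sh-P-props}(\ref{obs:blocking}), and part~(ii) by splitting $|\shifts(P_1,q)|$ into the two per-list shift sizes and lower-bounding each agent's individual swap distance accordingly. The only substantive difference is that you explicitly prove the Kendall-$\tau$ lower bound (and check well-definedness of the quadruple and the $\tau=\infty$ case), steps the paper asserts or leaves implicit.
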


  \begin{proof}
  To show the first statement, assume that $M$ is not stable in $P_2$ and let $\{u^*,w^*\}$ be a blocking pair of $M$ for $P_2$.
  Set $Q=P[\shifts(P_1,q)]$.

  By \cref{obs:sh-P-props}(\ref{obs:blocking}), we immediately get that 
  $\{u^*,w^*\}$ is blocking $M$ in profile~$Q$.
  The fact that $\{u^*,w^*\}$ is the only blocking pair follows from \cref{lem:profile-only-possible-blocking-pair}.
  
  Now let us consider the second statement.
  By the definition of swap sets on $q$, we have:
  \[|\shifts({P_1},q)|=\max(\rank_{u^*}(w^*, \succ_{u^*}^{P_1}) - \rank_{u^*}(M(u^*), \succ_{u^*}^{P_1}),0) + \max(\rank_{w^*}(u^*,\succ^{P_1}_{w^*})-\rank_{w^*}(M(w^*), \succ_{w^*}^{P_1}), 0).\]
  Since $\{u^*,w^*\}$ is blocking $M$ in $P_2$ but $M$ is stable for $P_1$, it holds that
  \begin{align*}
    w^* \succ^{P_2}_{u^*} M(u^*) \text{ and } u^* \succ^{P_2}_{w^*} M(w^*), \text{ while }
    M(u^*) \succ^{P_1}_{u^*} w^* \text{ or } M(w^*) \succ^{P_1}_{w^*} u^*.
  \end{align*}
  Thus, 
  \begin{align*}
    \tau(P_1,P_2)  & \ge \max(\rank_{u^*}(w^*, \succ_{u^*}^{P_1}) - \rank_{u^*}(M(u^*), \succ_{u^*}^{P_1}),0) + {} \\ & \quad\hfill + \max(\rank_{w^*}(u^*,\succ^{P_1}_{w^*})-\rank_{w^*}(M(w^*), \succ_{w^*}^{P_1}), 0)\\
    & = |\shifts({P_1},q)|.
  \end{align*}
  proving the statement.
\end{proof}
}

Finally, the following lemma summarizes the informal intuition that we provided so far---it shows that when searching for a $d$-robust matching, we only need to focus on some relevant profiles which are close to the initial one.

\newcommand{\dswaps}{%
  Let $P_0$ be a preference profile for two disjoint sets of agents, $U$ and $W$,
  and let $d\in \mathds{N}$ be a non-negative integer.
  A matching~$M$ is $d$-robust for profile~$P_0$ if and only if for each stable quadruple~$q=(u^*,w^*, u,w)$ of $P_0$ 
  such that $|\shifts(P_0, q)| \le d$,
  matching~$M$ is also stable in $P[\shifts(P_0,q)]$.
}

\begin{lemma}%
  \label[lemma]{lem:d-swaps-profiles}
  \dswaps
\end{lemma}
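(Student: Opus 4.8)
The statement is an equivalence, so the plan is to treat the two implications separately; the real content sits in the backward direction, where \cref{lem:profile-decompose} does essentially all of the work. Throughout I treat $M$ as a stable matching of $P_0$, which is in any case forced: for $d\ge 0$, $d$-robustness already requires $M$ to be stable in $P_0$ (take $P'=P_0$ in \cref{def:robustness}), and stable quadruples exist only relative to stable matchings of $P_0$. For the forward direction ($\Rightarrow$), suppose $M$ is $d$-robust and fix a stable quadruple $q=(u^*,w^*,u,w)$ with $|\shifts(P_0,q)|\le d$. The profile $P[\shifts(P_0,q)]$ arises from $P_0$ by performing exactly the swaps collected in $\shifts(P_0,q)$, all inside the lists of $u^*$ and $w^*$, so $\tau(P_0,P[\shifts(P_0,q)])\le|\shifts(P_0,q)|\le d$. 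By \cref{def:robustness}, $M$ is stable in every profile at swap distance at most $d$ from $P_0$, hence in $P[\shifts(P_0,q)]$.

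For the backward direction ($\Leftarrow$) I would argue by contraposition: assuming $M$ is \emph{not} $d$-robust, I exhibit a single stable quadruple $q$ with $|\shifts(P_0,q)|\le d$ for which $M$ fails to be stable in $P[\shifts(P_0,q)]$. Non-robustness yields, via \cref{def:robustness}, a profile $P_2$ with $\tau(P_0,P_2)\le d$ in which $M$ has a blocking pair $\{u^*,w^*\}$, with $u^*\in U$ and $w^*\in W$. I set $q=(u^*,w^*,M(w^*),M(u^*))$. Since $\{u^*,w^*\}$ is blocking, it is not a matched pair, so the four agents $u^*,M(w^*)\in U$ and $w^*,M(u^*)\in W$ are pairwise distinct; and because $M$ is a stable matching of $P_0$ containing $\{u^*,M(u^*)\}$ and $\{M(w^*),w^*\}$, it itself witnesses that $q$ is a stable quadruple. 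Now \cref{lem:profile-decompose} applies with $P_1=P_0$: part~(ii) gives $|\shifts(P_0,q)|\le\tau(P_0,P_2)\le d$, and part~(i) gives $\{u^*,w^*\}\in\bp(P[\shifts(P_0,q)],M)$, so $M$ is not stable in $P[\shifts(P_0,q)]$. This is exactly the negation of the right-hand side, closing the contrapositive.

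The heavy lifting is concentrated entirely in the appeal to \cref{lem:profile-decompose}, which converts an arbitrary, possibly many-shift witness $P_2$ into the canonical two-shift profile $P[\shifts(P_0,q)]$ while, via \cref{lem:profile-only-possible-blocking-pair}, preserving $\{u^*,w^*\}$ as the unique surviving blocking pair and not spending more swaps than $\tau(P_0,P_2)$. The step I expect to demand the most care is well-definedness of $q$ once preferences are allowed to be incomplete: a blocking pair in $P_2$ may involve an agent left unmatched by $M$, in which case $M(u^*)$ or $M(w^*)$ does not exist and $q$ cannot be formed, and indeed such a pair is not captured by any stable quadruple. The clean argument above is therefore immediate when every agent is matched (in particular under complete preferences, so that the unmatched set is empty), and the extension to incomplete preferences is the delicate point: I would handle it by invoking \cref{prop:SMI-matched-agents-the-same} to fix the common matched set $R$ of all stable matchings and to justify reducing to the agents of $R$, so that the endpoints of every relevant blocking pair are matched by $M$ and the construction of $q$ goes through. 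Everything that remains is routine swap-distance bookkeeping.
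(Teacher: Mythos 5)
Your argument is, in substance, the paper's own proof: the forward direction is the same swap-distance bookkeeping ($\tau(P_0,P[\shifts(P_0,q)])=|\shifts(P_0,q)|\le d$ plus \cref{def:robustness}), and the backward direction is the same construction---take a witnessing profile with a blocking pair $\{u^*,w^*\}$, form $q=(u^*,w^*,M(w^*),M(u^*))$, and invoke both parts of \cref{lem:profile-decompose}. Whether one phrases this as a contrapositive or a proof by contradiction is immaterial.

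The one place you go beyond the paper is the worry about incomplete preferences, and there you are half right. You are correct that this is a real issue, and in fact the paper's proof silently ignores it: if $d\ge 1$, a profile $P_2$ within distance $d$ of $P_0$ can contain a blocking pair $\{x,y\}$ in which $x$ is unmatched by $M$ (only a swap in $y$'s list is needed to promote $x$ above $M(y)$), and then $M(x)$ does not exist, no quadruple can be formed, and---since stable quadruples by definition consist of agents lying in stable pairs---no stable quadruple witnesses this failure of robustness at all. However, your proposed repair via \cref{prop:SMI-matched-agents-the-same} does not close this gap: restricting attention to the common matched set $R$ does not make such a blocking pair go away, because exactly one of its endpoints lies outside $R$, and the pair genuinely destroys $d$-robustness. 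Concretely, with $u_1\colon w_1$; $u_2\colon w_1$; $w_1\colon u_1\succ u_2$; $w_2$ with an empty list, the unique stable matching $\{\{u_1,w_1\}\}$ is not $1$-robust, yet there are no stable quadruples, so the right-hand side of \cref{lem:d-swaps-profiles} holds vacuously. So the equivalence needs either an additional hypothesis (e.g., that every agent with a nonempty list is matched by every stable matching, as under complete preferences) or a separate treatment of blocking pairs with an initially unmatched endpoint; ``reduce to $R$'' is not that treatment. Your proof of the matched-endpoint case is fine and matches the paper; the unmatched-endpoint case remains open in both your write-up and the paper's.
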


\appendixproofwithstatement{lem:d-swaps-profiles}{\dswaps}{
  \begin{proof}
  The ``only if'' direction is straight-forward because $M$ is stable in each profile~$P$ with $\tau(P_0,P)\le d$ and for each stable quadruple~$q=(u^*,w^*, u,w)$ %
  such that $|\shifts(P_0, q)| \le d$, it holds that $\tau(P_0, P[\shifts(P_0,q)]) = |\shifts(P_0, q)| \le d$.

  For the ``if'' direction, assume that there is a matching, called $M$,
  such that for each stable quadruple~$q=(u^*,w^*, u,w)$ 
  with $|\shifts(P_0, q)| \le d$,
  matching~$M$ is stable in $P[\shifts(P_0,q)]$.
  Suppose, for the sake of contradiction, that there is a preference profile~$P$ with $\tau(P_0, P) \le d$ such that $M$ is \emph{not} stable in $P$.
  Let $\{x,y\}$ be a blocking pair of $M$ in $P$ with $x\in U$ and $y \in W$.
  Now let us consider the quadruple~$q'=(x,y,M(y),M(x))$.
  Note that $q'$ is a stable quadruple with respect to~$P_0$ since $M$ is stable for~$P_0$. Since $\{x,y\} \in \bp(P, M)$, by \cref{lem:profile-decompose}\eqref{cond:only-bp}, it follows that
  $\bp(P[\shifts(P_0,q)], M) = \{\{x,y\}\}$ and, by \cref{lem:profile-decompose}\eqref{cond:nmore-swaps}, $|\shifts(P_0,q)| \le \tau(P_0,P) \le d$---a contradiction to our assumption.
\end{proof}
}

\subsection{Relation Between Stable Quadruples and Rotations}\label{sub:relation-squadruple-rotation}
\appendixsection{sub:relation-squadruple-rotation}
\looseness=-1
Before we state our central results, we need one more element: In this subsection we define two specific rotations corresponding to a stable quadruple and we investigate their properties pertaining to robustness. The results stated in this subsection might look quite technical, yet we deliberately chose these particular formulations as we believe they make the analysis of our algorithm %
transparent.

\begin{definition}[$\pi(q)$ and $\rho(q)$ for a stable quadruple~$q$]\label[definition]{def:two-specific-rots}
  Let $P_0=(\succ_{x}^{P_0})_{x\in U \cup W}$ be a preference profile with two sets of agents,~$U$ and $W$,
  and consider a stable quadruple~$q=(u^*,w^*,u,w)$.
  
  We use the notation~\myemph{$\pi(q)$} to refer to a rotation~$\pi\coloneqq ((u'_0,w'_0), \ldots, (u'_{r-1},w'_{r-1}))$
  with $u^*=u'_i$ (for some $i\in \{0,\ldots,r-1\}$) that fulfills the following conditions.
  \begin{inparaenum}[(i)]
    \item If $w^* \succ^{P_0}_{u^*} w$, then
    $w^*=w'_{i}$ or $w'_i \succ^{P_0}_{u^*} w^* \succ^{P_0}_{u^*} w'_{i+1}$; %
    \item Otherwise, meaning that $w \succ^{P_0}_{u^*}w^*$, then $w = w'_{i+1}$.
  \end{inparaenum}

  We use the notation~\myemph{$\rho(q)$} to refer to a rotation~$\rho\coloneqq ((u'_0,w'_0), \ldots, (u'_{r-1},w'_{r-1}))$
  with $w^*=w'_i$ (for some $i\in \{0,\ldots,r-1\}$) that fulfills the following conditions.
  \begin{inparaenum}[(i)]
    \item If $u^* \succ^{P_0}_{w^*} u$, then
    $u^*=u'_{i-1}$ or $u'_{i-1} \succ^{P_0}_{w^*} u^* \succ^{P_0}_{w^*} u'_{i}$; %
    \item Otherwise, meaning that $u \succ^{P_0}_{w^*}u^*$, then $u=u_i'$.
  \end{inparaenum}

\end{definition}

The below figure illustrates the two specific rotations; recall that for two agents~$x$ and $y$, the expression~``$x\succeq y$'' means that $x=y$ or $x \succ y$.

\appendixfigure{def:two-specific-rots}{%
\noindent  \cref{appfigure:def:two-specific-rots} illustrates the two specific rotations.}{Illustration of \cref{def:two-specific-rots}}{
   \tikzstyle{matrixstyle} = [matrix of math nodes,,ampersand replacement=\&,column sep=-4pt]
  \begin{tikzpicture}[>=stealth]
   \def \xtt {0.6}
   \def \xst {0.2}
   \node[] (ucase1) {\textbf{Case~(i): $\boldsymbol{u^*}$ prefers $\boldsymbol{w^*}$ to $\boldsymbol{w}$, } i.e.\ $\boldsymbol{w^*\succ^{P_0}_{u^*}w}$.};
     \node[right = 0pt of ucase1] (ui1) {Then, $u^*:$};
     \matrix[right = -5pt of ui1, matrixstyle] (ui1p) 
     {w'_i \& \succeq \& w^* \& \succ \& w'_{i+1}. \\};

    \draw[->,rounded corners] ($(ui1p-1-1)+(0,-\xst)$) -- ($(ui1p-1-1)+(0,-\xtt)$) -- node[midway,fill=white] {$\pi(q)$}  ($(ui1p-1-5)+(0,-\xtt)$) -| ($(ui1p-1-5)+(0,-\xst)$);
    
    \node[below = 8ex of ucase1.west, anchor=west] (ucase2) {\textbf{Case~(ii): $\boldsymbol{u^*}$ prefers $\boldsymbol{w}$ to $\boldsymbol{w^*}$, }  i.e.\ $\boldsymbol{w\succ^{P_0}_{u^*}w^*}$.};
    \node[right = -5pt of ucase2] (ui2) {Then, $u^*:$};
    \matrix[right = -5pt of ui2, matrixstyle] (ui2p) 
    {w'_i \& \succ \& w = w'_{i+1} \& \succ \& w^*.\\};

    \draw[->,rounded corners] ($(ui2p-1-1)+(0,-\xst)$) -- ($(ui2p-1-1)+(0,-\xtt)$) -- node[midway,fill=white] {$\pi(q)$}  ($(ui2p-1-3)+(0,-\xtt)$) -| ($(ui2p-1-3)+(0,-\xst)$);

    \node[below = 8ex of ucase2.west, anchor=west] (wcase1) {\textbf{Case~(i): $\boldsymbol{w^*}$ prefers $\boldsymbol{u^*}$ to $\boldsymbol{u}$, } i.e.\ $\boldsymbol{u^* \succ^{P_0}_{w^*} u}$.};
    \node[right = 0pt of wcase1] (wi1) {Then, $w^*:$};
    \matrix[right = -5pt of wi1, matrixstyle] (wi1p) 
    {u'_{i-1} \& \succeq \& u^* \& \succ \& u'_{i}. \\};

    \draw[->,rounded corners] ($(wi1p-1-5)+(0,-\xst)$) -- ($(wi1p-1-5)+(0,-\xtt)$) -- node[midway,fill=white] {$\rho(q)$}  ($(wi1p-1-1)+(0,-\xtt)$) -| ($(wi1p-1-1)+(0,-\xst)$);

    \node[below = 8ex of wcase1.west, anchor=west] (wcase2) {\textbf{Case~(ii): $\boldsymbol{w^*}$ prefers $\boldsymbol{u}$ to $\boldsymbol{u^*}$, }  i.e.\  $\boldsymbol{u \succ^{P_0}_{w^*} u^*}$.};
    \node[right = 0pt of wcase2] (wi2) {Then, $w^*:$};
    \matrix[right = -5pt of wi2, matrixstyle] (wi2p) 
    {u'_{i-1} \& \succ \& u  =  u'_{i} \& \succ \& u^*. \\};

    \draw[->,rounded corners] ($(wi2p-1-3)+(0,-\xst)$) -- ($(wi2p-1-3)+(0,-\xtt)$) -- node[midway,fill=white] {$\rho(q)$}  ($(wi2p-1-1)+(0,-\xtt)$) -| ($(wi2p-1-1)+(0,-\xst)$);
  \end{tikzpicture}
}

Rotations $\pi(q)$ and $\rho(q)$ can be informally described as follows. Consider the preference profile $Q = P[\shifts(P_0, q)]$.
Rotation~$\pi(q)$ is the first rotation (according to the precedence relation on rotations) that moves the partner of $u^*$ from $w^*$ or from an agent who is more preferred than $w^*$ to an agent that is less preferred than $w^*$, where the preference relation is according to profile~$Q$.
Similarly, rotation~$\rho(q)$ is the first rotation that moves the partner of $w^*$ from an agent who is less preferred than $u^*$ to $u^*$ or to an agent that is more preferred than $u^*$, where the preference relation is according to profile~$Q$.
However, in \cref{def:two-specific-rots} we deliberately do \emph{not} refer to profile $Q$ and define the rotations $\pi(q)$ and $\rho(q)$ solely based on $P_0$ in order to make the subsequent formal analysis and the algorithm as clear as possible.

Roughly speaking, eliminating rotation~$\pi(q)$ could make a stable matching of the original profile not stable anymore in the new profile $Q$---indeed, this is the ``first'' rotation, elimination of which causes $u^*$ to prefer $w^*$ over its matched partner.
In order to make sure that $\{u^*, w^*\}$ is not blocking the constructed matching in $Q$, we need to enforce that, whenever the matching includes $\pi(q)$,
agent~$w^*$ must obtain a partner who she prefers over $u^*$.
This is achieved by also including~$\rho(q)$ to the matching.
In other words, when selecting rotations which should form a robust matching, adding $\rho(q)$ fixes some potential issues that arise as a result of adding $\pi(q)$ to the matching.
This intuition is formalized in the subsequent lemmas and theorems. While the main idea is intuitive, the formal analysis is complex since we need to take care of a few technical nuances.   

Note that, by our definition, neither $\pi(q)$ nor $\rho(q)$ needs to exist.
However, if they exist, then they are unique. We will prove this statement in \Cref{lem:the-two-unique}, below. %
The following example that illustrates the definitions of $\pi(q)$~and~$\rho(q)$.

\begin{example}\label[example]{ex:pi-rho}
  \appendixexample{ex:pi-rho}{%
    Consider the profile given in \cref{ex:intro_example} again.
    Let $q=\rotatione$ be the stable quadruple discussed in \cref{ex:swap-sets}.
    Then, $\pi(q)=\pi_1$ and $\rho(q)=\pi_3$.
    The full version \cite{CheSkoSor2019} contains a derivation of this fact.}{
  }{ Recall that in \cref{ex:rotation-poset} we have derived the rotation
    poset of the profile given in \cref{ex:intro_example}, and
    $q=\rotatione$ is the stable quadruple discussed in
    \cref{ex:swap-sets}.
    
    Since $u_3$ prefers $w_1$ to $w_2$, to define $\pi(q)$, we are searching for a rotation, which includes $(u_3,x)$ for some agent~$x\in W$ such that after the elimination of this rotation,~$u_3$ receives agent~$w_1$ as a partner.
  Rotation~$\pi_1$ is the only rotation that fulfills this condition.
  Thus, $\pi(q)\coloneqq \pi_1$.
  Let $P'$ be the profile resulting from performing the two swaps given in $\shifts(P,q)$.
  One can verify that in $P'$ agent~$u_3$ prefers $w_2$ to $w_1$.
  Thus, in the same profile, either \begin{inparaenum}[(i)] \item $u_3$ prefers $w_2$ to the partner assigned by a stable matching whose corresponding closed subset of rotations includes $\pi(q)$, or \item the partner of $u_3$ is $w^* = w_2$. \end{inparaenum}
  Indeed, each rotation eliminated, for which rotation~$\pi(q)$ is a predecessor,
  either makes $w_2$ still the partner of $u_3$
  or changes the partner of $u_3$ to one which is less preferred than $w_1$ in the initial profile, including agent~$w^*$ which is more preferred than $w_1$ in $P'$.
  This means that such kind of stable matching may be blocked by $\{u_3, w_2\}$ in $P'$.

  As for $\rho(q)$, since~$w_2$ prefers $u_3$ to $u_4$, we are searching for a rotation
  which includes $(y,w_2)$ such that
   \begin{align}
    \label{eq:to-u*}  &  \text{either } y = u_3, \\
     & \text{or } w_2 \text{ prefers } u_3 \text{ over } y \text{ and will obtain a partner which } w_2 \text{ prefers over } y \text{.}\label{eq:away-u*}
   \end{align}
   Indeed, since $\pi_3$ satisfies \eqref{eq:away-u*}, one can verify that $\rho(q)=\pi_3$.
   As a final remark, note that, since $\pi_3$ satisfies \eqref{eq:away-u*}, by \citeauthor{GusfieldIrving1989}~\cite[Lemma~2.5.1]{GusfieldIrving1989} it follows that no stable matching matches $u_3$ to $w_2$. %
   Observe that, after the elimination of this rotation,~$w_2$ obtains $u_2$, which is her most preferred agent, i.e., an agent that is strictly better than $u_3$.
   Again, one can verify that in $P'$ agent~$w_2$ still prefers~$u_2$ to~$u_3$.
  Thus,~$w_2$ prefers its partner, assigned by a stable matching whose corresponding closed subset includes $\rho(q)$, to~$u_3$.
  However, this means that such kind of stable matching cannot be blocked by $\{u_3, w_2\}$.

  For a comparison, let us consider another stable quadruple~$q'=(u_1, w_2, u_4, w_1)$.
  Since $u_1$ prefers $w_2$ to $w_1$ we are searching for a rotation
  which includes $(u_1, x)$ such that
   \begin{align}
    \label{eq:from-w*}  &  \text{either } x = w_2, \\
     & \text{or } u_1 \text{ prefers } x \text{ over } w_2 \text{ and will obtain a partner which } u_1 \text{ prefers over } x \text{.}\label{eq:away-w*}
   \end{align}
   Since $\pi_1$ includes $(u_1, w_2)$, satisfying \eqref{eq:from-w*}, we have that $\pi(q') = \pi_1$.

   As for $\rho(q')$, since $w_2$ prefers $u_4$ to $u_1$, we need to find a rotation which includes $(u_4, w_2)$.
   Since $\pi_3$ includes $(u_4, w_2)$, we have that $\rho(q')=\pi_3$.}
  \hfill$\diamond$
\end{example}

Rotations $\pi(q)$ and $\rho(q)$ are critical concepts that will be used by our algorithm for finding robust matchings. The next two lemmas provide tools which allow us to use these concepts conveniently. We start by showing that $\pi(q)$ and $\rho(q)$ are unique.

\newcommand{\rotationsunique}{%
  Let $q=(u^*,w^*,u,w)$ be a stable quadruple of a preference profile~$P$.
  The following holds.
\ifshort  \begin{inparaenum} \else  \begin{compactenum} \fi
    \item If $\pi(q)$ exists, then it is unique.
    \item If $\rho(q)$ exists, then it is unique.
\ifshort  \end{inparaenum} \else  \end{compactenum} \fi
}
\begin{lemma}%
  \label[lemma]{lem:the-two-unique}
  \rotationsunique
\end{lemma}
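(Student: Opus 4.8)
The plan is to reduce the uniqueness of $\pi(q)$ and $\rho(q)$ to the three ``at most one rotation'' statements in \cref{prop:stable-pairs+rotations}, parts \eqref{prop:at-most-1-from-w}, \eqref{prop:at-most-1-up}, and \eqref{prop:at-most-1-from-m}. Since $P$ has no ties, for the fixed quadruple $q=(u^*,w^*,u,w)$ exactly one of the two cases in \cref{def:two-specific-rots} applies to $\pi(q)$ and exactly one to $\rho(q)$, so it suffices to show that at most one rotation satisfies the defining condition in each case. Throughout I would invoke two elementary facts that follow directly from the definitions of rotation and successor agent: whenever a rotation has $u^*=u'_i$ we have $w'_i \succ^P_{u^*} w'_{i+1}$ (the partner of $u^*$ moves strictly downward), and whenever it has $w^*=w'_i$ we have $u'_{i-1} \succ^P_{w^*} u'_i$ (the partner of $w^*$ moves strictly upward).

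For $\pi(q)$ the reduction is immediate. In case~(i), where $w^* \succ^P_{u^*} w$, the defining condition ``$u^*=u'_i$ and ($w^*=w'_i$ or $w'_i \succ^P_{u^*} w^* \succ^P_{u^*} w'_{i+1}$)'' collapses, using $w'_i \succ^P_{u^*} w'_{i+1}$, to ``$u^*=u'_i$ and $w'_i \succeq^P_{u^*} w^* \succ^P_{u^*} w'_{i+1}$'', which is exactly the hypothesis of \cref{prop:stable-pairs+rotations}\eqref{prop:at-most-1-from-w} with threshold agent $w^*$. In case~(ii), where $w \succ^P_{u^*} w^*$, the condition ``$u^*=u'_i$ and $w=w'_{i+1}$'' is literally the hypothesis of \cref{prop:stable-pairs+rotations}\eqref{prop:at-most-1-up} for the pair $(u^*,w)$. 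Both cases give uniqueness at once.

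For $\rho(q)$, case~(ii) is again easy: the condition ``$w^*=w'_i$ and $u=u'_i$'' implies $u'_{i-1} \succ^P_{w^*} u \succeq^P_{w^*} u'_i$ (using the upward-movement fact together with $u=u'_i$), so the rotations meeting the case-(ii) condition form a subset of those meeting the hypothesis of \cref{prop:stable-pairs+rotations}\eqref{prop:at-most-1-from-m} with threshold $u$, and the latter admits at most one rotation. The genuinely delicate part, which I expect to be the main obstacle, is case~(i) of $\rho(q)$. Here $u^* \succ^P_{w^*} u$, and combining the two alternatives yields ``$w^*=w'_i$ and $u'_{i-1} \succeq^P_{w^*} u^* \succ^P_{w^*} u'_i$''. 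This is \emph{not} the hypothesis of \cref{prop:stable-pairs+rotations}\eqref{prop:at-most-1-from-m}: that statement requires the new partner to lie \emph{strictly} above and the old partner \emph{weakly} below the threshold, whereas here the new partner lies only weakly above and the old partner strictly below $u^*$. The asymmetry reflects that $\rho(q)$ lands the partner of $w^*$ \emph{onto} $u^*$ when $u^*$ is itself a partner value, rather than moving off of it, which is the opposite convention to \cref{prop:stable-pairs+rotations}\eqref{prop:at-most-1-from-m}.

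To overcome this I would shift the threshold. Since we are in case~(i), $u^*$ is ranked by $w^*$ and $u$ lies strictly below it (as $\{u,w^*\}$ is a stable pair with $u^*\succ^P_{w^*}u$), so $u^*$ is not last on $w^*$'s list; let $a$ be the agent immediately following $u^*$ in $\succ^P_{w^*}$. Because no agent lies strictly between $u^*$ and $a$, for every rotation with $w^*=w'_i$ one has $u'_{i-1} \succeq^P_{w^*} u^*$ iff $u'_{i-1} \succ^P_{w^*} a$, and $u^* \succ^P_{w^*} u'_i$ iff $a \succeq^P_{w^*} u'_i$. Hence the case-(i) condition is equivalent to ``$w^*=w'_i$ and $u'_{i-1} \succ^P_{w^*} a \succeq^P_{w^*} u'_i$'', which is precisely the hypothesis of \cref{prop:stable-pairs+rotations}\eqref{prop:at-most-1-from-m} with threshold $a$, yielding uniqueness. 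The only routine work left is verifying the two monotonicity facts about rotations and the two short order-relation equivalences used in the threshold shift.
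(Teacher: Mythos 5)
Your proposal is correct and follows the same overall strategy as the paper's proof: split according to the two cases in \cref{def:two-specific-rots} and reduce each to one of the ``at most one rotation'' statements of \cref{prop:stable-pairs+rotations}. For $\pi(q)$ (both cases, via parts~\eqref{prop:at-most-1-from-w} and~\eqref{prop:at-most-1-up}) and for case~(ii) of $\rho(q)$ (via part~\eqref{prop:at-most-1-from-m}) your reductions coincide with the paper's. The one place you genuinely diverge is case~(i) of $\rho(q)$, and you correctly identified why it is delicate: the merged condition $u'_{i-1}\succeq^P_{w^*}u^*\succ^P_{w^*}u'_i$ has the weak/strict inequalities on the wrong sides to be an instance of \cref{prop:stable-pairs+rotations}\eqref{prop:at-most-1-from-m}. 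The paper resolves this by keeping the two alternatives separate, invoking part~\eqref{prop:at-most-1-up} when $u^*=u'_{i-1}$ and part~\eqref{prop:at-most-1-from-m} when $u'_{i-1}\succ^P_{w^*}u^*\succ^P_{w^*}u'_i$; you instead shift the threshold to the agent~$a$ immediately below $u^*$ in $\succ^P_{w^*}$ and apply part~\eqref{prop:at-most-1-from-m} once. Your variant is sound --- the existence of $a$ is guaranteed because $u$ is a stable partner of $w^*$ ranked strictly below $u^*$, and the two order-equivalences in the threshold shift are immediate --- and it has the small advantage of ruling out in a single application of the proposition the possibility that one rotation realizes the first alternative while a \emph{different} rotation realizes the second, a point the paper's two separate invocations leave implicit (it follows there from the chain structure of $w^*$'s stable partners, but is not spelled out). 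The two auxiliary monotonicity facts you rely on ($w'_i\succ^P_{u'_i}w'_{i+1}$ and $u'_{i-1}\succ^P_{w'_i}u'_i$) do follow directly from the definition of the successor agent, as you claim.
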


\appendixproofwithstatement{lem:the-two-unique}{\rotationsunique}{
\begin{proof}
  For the first statement, assume that rotation~$\pi(q)$ exists with $\pi(q)=((u'_0,w'_0),\ldots, (u'_{r-1},w'_{r-1}))$ and $u^*=u'_i$.
  We distinguish between two cases.
  
  \myparagraph{Case~(i):} $w^* \succ_{u^*} w$.
  Thus,~
  $w^*=w'_{i}$ or $w'_i \succ^{P}_{u^*} w^* \succ^{P}_{u^*} w'_{i+1}
$ by definition of $\pi(q)$.
   In either case, \cref{prop:stable-pairs+rotations}\eqref{prop:at-most-1-from-w} guarantees that $\pi(q)$ is unique.

  \myparagraph{Case~(ii):} $w \succ_{u^*} w^*$.
  By the definition of $\pi(q)$, we have that $w=w'_{i+1}$.
  By \cref{prop:stable-pairs+rotations}~\eqref{prop:at-most-1-up}, rotation~$\pi(q)$ is unique.

  Now, we turn to the second statement.
  Assume that rotation~$\rho(q)$ exists with $\rho(q)=((u'_0,w'_0),\ldots,$ $(u'_{r-1},w'_{r-1}))$ and $w^*=w'_i$.
  Again, we distinguish between two cases:
  
  \myparagraph{Case~(i):} $u^* \succ_{w^*} u$.
  This implies that~$u^*=u'_{i-1}$ or $u'_{i-1} \succ^{P}_{w^*} u^* \succ^{P}_{w^*} u'_{i}$ by definition of $\rho(q)$.
  If $u^*=u'_{i-1}$, then the uniqueness is guaranteed by \cref{prop:stable-pairs+rotations}\eqref{prop:at-most-1-up}.
  If $u'_{i-1} \succ^{P}_{w^*} u^* \succ^{P}_{w^*} u'_{i}$, then the uniqueness follows from \cref{prop:stable-pairs+rotations}\eqref{prop:at-most-1-from-m}.

  \myparagraph{Case~(ii):} $u \succ_{w^*} u^*$.
  By the definition of $\rho(q)$, we have that $u=u'_{i}$.
  By \cref{prop:stable-pairs+rotations}~\eqref{prop:at-most-1-from-m}, rotation~$\rho(q)$ is unique.
\end{proof}}

The following result is a
centerpiece of the algorithm, specifying exactly which constraints
need to be fulfilled by a closed subset of rotations which corresponds to a robust matching.

\newcounter{myowncounter}

\newcommand{\characterizationtworotations}{%
  Let $P_0$ be a profile
  and $q=(u^*,w^*,u,w)$ be a stable quadruple of $P_0$.
  Let $Q=P[\shifts(P_0,q)]$ denote the profile after we perform the swaps in~$\shifts(P_0,q)$ on $P_0$.
  The following holds.

  \begin{compactenum}[(i)]
    \item\label{lem:pi-nex}
    Assume that $\pi(q)$ does not exist.
    If $w^*\!\succ^{P_0}_{u^*}\! w$, then each stable matching~$N\!\in\! \sm(P_0)$ has $w^* \succ^{Q}_{u^*} N(u^*)$.
    If $w\!\succ^{P_0}_{u^*}\! w^*$, then each~$N\!\in\! \sm(P_0)$ has either $N(u^*)\!=\!w^*$ or $w^* \succ^{Q}_{u^*} N(u^*)$.
    \item\label{lem:rho-nex}
    Assume that $\rho(q)$ does not exist. 
    If $u^*\succ^{P_0}_{w^*} u$, then each stable matching~$N\in \sm(P_0)$ has $u^* \succ^{Q}_{w^*} N(w^*)$.
    If $u\succ^{P_0}_{w^*} u^*$, then
    each~$N\in \sm(P_0)$ has either $N(u^*)=w^*$ or $u^* \succ^{Q}_{w^*} N(w^*)$.
      \item\label{lem:rho-nex+pi-nex}
    If neither~$\pi(q)$ nor~$\rho(q)$ exist,
    then $\sm(P_0)\cap \sm(Q)=\emptyset$.
    
    \setcounter{myowncounter}{\value{enumi}}
  \end{compactenum}
  
   \noindent Let $S$ be a closed subset of rotations for $P_0$ and let $M$ be the corresponding stable matching.
    \begin{compactenum}[(i)]
      \setcounter{enumi}{\value{myowncounter}}

    \item \label{lem:pi-nex+rho-ex-nin->nstable}
    If $\pi(q)$ does not exist 
    and $\rho(q)$ exists but $\rho(q)\notin S$,
    then $M \notin \sm(Q)$.
   
    \item\label{lem:rho-exists-in->stable} If $\rho(q)$ exists and $\rho(q)\in S$, then $M \in \sm(Q)$.
    
    \item \label{lem:pi-ex-nin->stable} If $\pi(q)$ exists and $\pi(q)\notin S$, then $M \in \sm(Q)$.

    \item\label{lem:pi-ex-in+rho-nex-or-nin->nstable} 
    If $\pi(q)$ exists and $\pi(q)\in S$
    and either $\rho(q)$ does not exist or $\rho(q)$ exists but $\rho(q)\notin S$,
    then $M \notin \sm(Q)$.
  \end{compactenum}
}

\begin{lemma}%
  \label[lemma]{lem:characterization-two-rotations-stable}
  \characterizationtworotations
\end{lemma}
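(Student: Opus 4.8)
The plan is to collapse all seven parts to a single question about the pair $\{u^*,w^*\}$. By \cref{lem:profile-only-possible-blocking-pair}, $\{u^*,w^*\}$ is the only pair that can block a stable matching of $P_0$ in $Q=P[\shifts(P_0,q)]$. Since $q$ is a stable quadruple, $u^*$ and $w^*$ are matched in some stable matching and hence, by \cref{prop:SMI-matched-agents-the-same}, in every $M\in\sm(P_0)$. Thus $M\in\sm(Q)$ if and only if $\{u^*,w^*\}$ does not block $M$ in $Q$, i.e.\ if and only if $M(u^*)\succeq^{Q}_{u^*}w^*$ or $M(w^*)\succeq^{Q}_{w^*}u^*$ (reading $M(u^*)\succeq^{Q}_{u^*}w^*$ as ``$u^*$ weakly prefers its partner to $w^*$'', which also covers $M(u^*)=w^*$). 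So it suffices to control, for the matching $M$ induced by a closed set $S$, where $M(u^*)$ sits relative to $w^*$ and where $M(w^*)$ sits relative to $u^*$, always measured in $Q$.

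The key structural tool is that the stable partners of a fixed agent are totally ordered and are traversed, one rotation at a time, along the rotation chain at that agent (\cref{prop:stable-pairs+rotations,prop:rotations+stable-matchings}); eliminating a rotation worsens a $U$-agent's partner and improves a $W$-agent's partner. Reading \cref{def:two-specific-rots} in this light, $\pi(q)$ is exactly the unique rotation (\cref{lem:the-two-unique}) whose elimination carries $u^*$ from a partner that is weakly $Q$-preferred to $w^*$ to one that is strictly $Q$-worse than $w^*$, and dually $\rho(q)$ carries $w^*$ from a partner strictly $Q$-worse than $u^*$ to one weakly $Q$-preferred to $u^*$. The bridge between the $P_0$-conditions of \cref{def:two-specific-rots} and the $Q$-comparisons is \cref{obs:sh-P-props}, items (ii)--(iii), which I would apply in each of the two cases $w^*\succ^{P_0}_{u^*}w$ / $w\succ^{P_0}_{u^*}w^*$ (resp.\ $u^*\succ^{P_0}_{w^*}u$ / $u\succ^{P_0}_{w^*}u^*$). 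From this I extract the two pivotal equivalences for a closed $S$ with induced matching $M$: whenever $\pi(q)$ exists, $\pi(q)\notin S$ iff $M(u^*)\succeq^{Q}_{u^*}w^*$; and whenever $\rho(q)$ exists, $\rho(q)\in S$ iff $M(w^*)\succeq^{Q}_{w^*}u^*$. Here ``$S$ closed'' is used to say that the rotations of the $u^*$-chain (resp.\ $w^*$-chain) present in $S$ form a prefix, so membership of the single pivotal rotation alone decides the side of $w^*$ (resp.\ $u^*$) on which the partner lands.

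Given these equivalences, parts (v)--(vii) are immediate: (v) $\rho(q)\in S$ gives $M(w^*)\succeq^{Q}_{w^*}u^*$, so $\{u^*,w^*\}$ does not block; (vi) $\pi(q)\notin S$ gives $M(u^*)\succeq^{Q}_{u^*}w^*$, likewise; and (vii) $\pi(q)\in S$ together with $\rho(q)\notin S$ (or $\rho(q)$ absent) gives $M(u^*)\prec^{Q}_{u^*}w^*$ and $M(w^*)\prec^{Q}_{w^*}u^*$ simultaneously, so $\{u^*,w^*\}$ blocks. For the boundary parts (i)--(ii) I would argue that nonexistence of $\pi(q)$ means $u^*$'s partner never crosses $w^*$ along its chain; since $q$ is a stable quadruple there is a stable matching that matches $u^*$ to $w$ (hence, when $w^*\succ^{P_0}_{u^*}w$, below $w^*$), so ``never crossing'' forces $u^*$'s partner to lie below $w^*$ in every stable matching, i.e.\ $w^*\succ^{Q}_{u^*}N(u^*)$ (the exceptional $N(u^*)=w^*$ survives only in the case $w\succ^{P_0}_{u^*}w^*$). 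Part (ii) is the mirror image, and part (iv) then combines part (i) with the $\rho(q)\notin S$ direction of the second equivalence. Part (iii) follows once we know that nonexistence of both rotations forbids $\{u^*,w^*\}$ from being a stable pair.

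The main obstacle is part (iii), specifically excluding the matched-pair escape $N(u^*)=w^*$, which would make $N$ vacuously stable in $Q$ and defeat the claim. I would dispatch it by proving the clean statement: if $\{u^*,w^*\}$ is a stable pair of $P_0$, then at least one of $\pi(q),\rho(q)$ exists. Let $N_0$ be the quadruple matching (with $N_0(u^*)=w$, $N_0(w^*)=u$) and $N$ a stable matching with $N(u^*)=w^*$; as the four agents are distinct, $N\neq N_0$. Applying \cref{prop:diff-partners-diff-prefs} to $N,N_0$ at the pair $\{u^*,w^*\}$ yields that $w^*\succ^{P_0}_{u^*}w$ holds if and only if $u\succ^{P_0}_{w^*}u^*$ holds. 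In the first alignment, $u^*$ has $w^*$ as a stable partner together with a strictly worse stable partner $w$, so $w^*$ is not $u^*$'s worst stable partner and the unique rotation moving $u^*$ off $w^*$ exists, i.e.\ $\pi(q)$; in the complementary alignment, $w^*$ has $u^*$ as a stable partner together with a strictly worse stable partner $u$, so the rotation bringing $w^*$ up to $u^*$ exists, i.e.\ $\rho(q)$. Hence nonexistence of both rotations rules out $N(u^*)=w^*$, and part (iii) follows from parts (i)--(ii). Apart from this, the only recurring care is the bookkeeping of the two preference cases and the $P_0\!\to\!Q$ translation via \cref{obs:sh-P-props}, which is routine but must be carried out consistently so that the strict/weak inequalities line up across all seven parts.
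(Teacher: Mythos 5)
Your overall architecture coincides with the paper's: reduce everything to whether $\{u^*,w^*\}$ blocks in $Q$ via \cref{lem:profile-only-possible-blocking-pair}, translate between $P_0$ and $Q$ via \cref{obs:sh-P-props}, and read $\pi(q)$ and $\rho(q)$ as the unique rotations at which $u^*$'s partner crosses below $w^*$ and $w^*$'s partner crosses up to $u^*$ along their chains. Your treatments of (i), (ii), (v) and (vi) are correct, and your route to (iii) via the clean claim ``if $\{u^*,w^*\}$ is a stable pair then $\pi(q)$ or $\rho(q)$ exists'' (proved with \cref{prop:diff-partners-diff-prefs}) is a slightly tidier packaging of what the paper does inside its own proof of (iii).

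The gap is in the two ``pivotal equivalences.'' As stated they are not equivalences: each has one direction that fails precisely when $M(u^*)=w^*$. Concretely, if $w\succ^{P_0}_{u^*}w^*$ and $\{u^*,w^*\}$ is a stable pair with $M(u^*)=w^*$, then $\pi(q)$ is the rotation bringing $u^*$ \emph{to} $w$, which lies strictly earlier on $u^*$'s chain than the step reaching $w^*$; hence $\pi(q)\in S$ even though $M(u^*)\succeq^{Q}_{u^*}w^*$. Dually, if $u\succ^{P_0}_{w^*}u^*$ and $M(w^*)=u^*$, then $\rho(q)$ is the rotation moving $w^*$ off $u$, which lies strictly later on $w^*$'s chain than the step reaching $u^*$; hence $\rho(q)\notin S$ even though $M(w^*)\succeq^{Q}_{w^*}u^*$. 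Your derivations of (iv) and (vii) use exactly these failing directions ($\pi(q)\in S\Rightarrow M(u^*)\prec^{Q}_{u^*}w^*$ and $\rho(q)\notin S\Rightarrow M(w^*)\prec^{Q}_{w^*}u^*$), so as written they do not cover the case $M(u^*)=w^*$, in which $M$ is trivially unblocked by $\{u^*,w^*\}$ and hence stable in $Q$ --- contradicting the claimed conclusions. You flagged this escape only for (iii). The repair is available from your own toolkit: under the hypotheses of (iv) and (vii), $M(u^*)=w^*$ is impossible, because by \cref{prop:diff-partners-diff-prefs} the stable pair $\{u^*,w^*\}$ forces either the alignment $w^*\succ^{P_0}_{u^*}w$, in which case $\pi(q)$ exists and is the rotation moving $u^*$ off $w^*$, so $\pi(q)\notin S$ (killing (vii)'s hypothesis), or the alignment $w\succ^{P_0}_{u^*}w^*$, in which case $\rho(q)$ exists, is the rotation bringing $w^*$ to $u^*$, and must lie in $S$ (killing the $\rho(q)\notin S$ hypothesis of both (iv) and (vii)). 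This is exactly Case~1 of the paper's proof of statement~(iv) and Case~3 of its proof of statement~(vii); you need to add it explicitly.
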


\appendixproofwithstatement{lem:characterization-two-rotations-stable}{\characterizationtworotations}{
  \begin{proof}

  \proofparagraph{Statement~\eqref{lem:pi-nex}.} Assume that $\pi(q)$ does not exist. 
  Since $q$ is a stable quadruple, by definition, $\{\{u^*, w\}, \{u, w^*\}\}$ is a stable set. First, let us consider the case when $w^* \succ^{P_0}_{u^*} w$.
  Since $\pi(q)$ does not exist, by definition, $(u^*,w^*)$ is not in any rotation.
  Neither is $\{u^*,w^*\}$ in the $W$-optimal stable matching of $P_0$ because of the following.
  Since $w^* \succ_{u^*}^{P_0} w$ hold and $\{u,w\}$ is in some stable matching, say $M$, of $P_0$,
  it follows that $M(w^*) \succ^{P_0}_{w^*} u^*$, implying that $\{u^*,w^*\}$ is not in the $W$-optimal stable matching. 
  Hence, by \cref{prop:stable-pairs+rotations}\eqref{prop:stable-pair-char}, $\{u^*,w^*\}$ is not in any stable matching of $P_0$.  %
  Thus, we only need to show that \emph{no} stable matching~$N\in \sm(P_0)$ has $N(u^*) \succ^{Q}_{u^*} w^*$.
  Towards a contradiction, suppose that there exists such a stable matching~$N$ with $N(u^*) \succ_{u^*}^{Q} w^*$.
  By \cref{obs:sh-P-props}\eqref{obs:shift-u} and since $w^*\succ_{u^*}^{P_0}w$, it follows that $\succ_{u^*}^{P_0} = \succ_{u^*}^{Q}$, and so $N(u^*) \succ_{u^*}^{P_0} w^*$, implying that $N(u^*)\succ^{P_0}_{u^*} w^* \succ^{P_0}_{u^*}w$ because $w^* \succ^{P_0}_{u^*} w$.
  By the definition of stable quadruples, $\{u^*,w\}$ is in some stable matching.
  Thus, there are two stable matchings, where $u^*$ obtains a partner (namely, $N(u^*)$) who is more preferred than~$w^*$,
  and a partner  (namely, $w$) who is less preferred than~$w^*$.
  By \cref{prop:stable-pairs+rotations}\eqref{prop:stable-pair-char}, this is a contradiction to the assumption that $\pi(q)$ does not exist.

  Now, consider the case when $w\succ^{P_0}_{u^*} w^*$.
  Since $\{\{u^*, w\}\}$ is a stable set and $\pi(q)$ does not exist, we infer that $\{u^*, w\}$ is in the $U$-optimal stable matching, i.e., that
  every stable matching~$N\in \sm(P_0)$ has either $N(u^*) = w$ or $w\succeq^{P_0}_{u^*}N(u^*)$.
  By \cref{obs:sh-P-props}\eqref{obs:shift-u}(c), it follows that
  every stable matching~$N\in \sm(P_0)$ has either $N(u^*)=w^*$ or $w^*\succ^{Q}_{u^*}N(u^*)$.
   
  \proofparagraph{Statement~\eqref{lem:rho-nex}.} Assume that $\rho(q)$ does not exist. Again, we consider two cases, starting with 
  $u^* \succ^{P_0}_{w^*} u$. Since $\rho(q)$ does not exist,
  we can infer that $\{u^*,w^*\}$ does not belong to any stable matching of $P_0$. Indeed, if a stable matching containing $\{u^*,w^*\}$ existed, then there would be a rotation that changes the partner of $w^*$ from one which is less preferred than $u^*$ to $u^*$ (here, again we use the fact that $q$ is a quadruple and so $w^*$ is matched to $u$ in some stable matching), with respect to profile~$P_0$.

  Thus, we only need to show that \emph{no} stable matching~$N\in \sm(P_0)$ has $N(w^*) \succ^{Q}_{w^*} u^*$.
  Towards a contradiction, suppose that there exists such a stable matching~$N$ with $N(w^*) \succ_{w^*}^{Q} u^*$.
  By \cref{obs:sh-P-props}\eqref{obs:shift-w} and since $u^* \succ_{w^*}^{P_0} w$, it follows that $\succ_{w^*}^{P_0} = \succ_{w^*}^{Q}$ and so that $N(w^*) \succ_{w^*}^{P_0} u^*$, implying that $N(w^*)\succ^{P_0}_{w^*} u^* \succ^{P_0}_{w^*} u$ because $u^* \succ^{P_0}_{w^*} u$.
  By the definition of stable quadruples, $\{u,w^*\}$ belongs to some stable matching. Summarizing, there exist two stable matchings where $w^*$ is matched to a partner which is less preferred than $u^*$ and a partner which is more preferred than~$u^*$, respectively. 
  However, this is a contradiction to the assumption that $\pi(q)$ does not exist.

  Now, let us move to the case when $u\succ^{P_0}_{w^*} u^*$. Recall that, since $q$ is a stable quadruple, we know that $\{u, w^*\}$ belongs to some stable matching. Since $\rho(q)$ does not exist 
by \cref{prop:stable-pairs+rotations}\eqref{prop:stable-pair-char} we infer that $\{u, w^*\}$ is in the $W$-optimal stable matching. In other words, for every stable matching~$N\in \sm(P_0)$ we have 
either $u=N(w^*)$ or $u\succ^{P_0}_{w^*}N(w^*)$.
  By \cref{obs:sh-P-props}\eqref{obs:shift-w}(c), it follows that
  every stable matching~$N\in \sm(P_0)$ has either $N(w^*)=u^*$ or $u^*\succ^{Q}_{w^*}N(w^*)$.

  \proofparagraph{Statement~\eqref{lem:rho-nex+pi-nex}.} Assume that neither $\pi(q)$ nor $\rho(q)$ exists.
  We distinguish between two cases.

  \myparagraph{Case (1): $w \succ^{P_0}_{u^*}w^*$.} Since $\pi(q)$ does not exist,
  by statement~\eqref{lem:pi-nex}, it follows that every stable matching~$N\in \sm(P_0)$ has either $N(u^*)=w^*$ or $w^* \succ^{Q}_{u^*} N(u^*)$.
  Consider an arbitrary stable matching~$N \in \sm(P_0)$, and first let us analyze the case when $N(u^*)=w^*$. Since $q$ is a stable quadruple, there exists a matching, call it $M'$, such that $M'(u^*) = w$ and $M'(u) = w^*$.
  By our assumption, $u^*$ prefers $M'$ to $N$; thus, by \cref{prop:diff-partners-diff-prefs} we get that $w^*$ must prefer $N$ to $M'$, i.e., it must hold that~$u^* \succ^{P_0}_{w^*} u$.
  By statement~\eqref{lem:rho-nex}, that ``$u^* \succ^{P_0}_{w^*} u$'' and the assumption that ``rotation~$\rho(q)$ does not exist''
  imply that $u^* \succ^{Q}_{w^*} N(w^*)$.
  This contradicts our assumption that $N(u^*)=w^*$.
  Now, let us move to the second alternative, when $w^*\succ^{Q}_{u^*}N(u^*)$. We know that $\rho(q)$ does not exist. Thus, by statement~\eqref{lem:rho-nex}, we get that $u^* \succ^{Q}_{w^*} N(w^*)$ because of the following.
  \begin{compactitem}
    \item Either $u^*\succ^{P_0}_{w^*} u$, whence by statement~\eqref{lem:rho-nex} we have $u^* \succ^{Q}_{w^*} N(w^*)$,
    \item or $u\succeq^{P_0}_{w^*} u^*$ and by statement~\eqref{lem:rho-nex} we have that $u^* \succ^{Q}_{w^*} N(w^*)$ or that $N(u^*) = w^*$ (the latter case has just been handled; either way, we have that $u^* \succ^{Q}_{w^*} N(w^*)$.
  \end{compactitem}
  Yet, this implies that $\{u^*,w^*\}$ is a blocking pair of $N$.  Thus, $N\notin \sm(Q)$.

  \noindent \textbf{Case (2): $w^* \succ^{P_0}_{u^*} w$.} Since
  $\pi(q)$ does not exist, from statement~\eqref{lem:pi-nex} it
  follows that every stable matching~$N\in \sm(P_0)$ satisfies
  $w^*\succ^{Q}_{u^*}N(u^*)$. In particular, it follows that
  $\{\{u^*,w^*\}\}$ is not a stable set in $Q$. By
  statement~\eqref{lem:rho-nex}, from this and from the assumption
  that $\rho(q)$ does not exist, we infer that
  $u^* \succ^{Q}_{w^*} N(w^*)$ (either $u^*\succ^{P_0}_{w^*} u$ and we
  get $u^* \succ^{Q}_{w^*} N(w^*)$ directly from
  statement~\eqref{lem:rho-nex}, or $u\succeq^{P_0}_{w^*} u^*$ and by
  statement~\eqref{lem:rho-nex} we get that either
  $u^* \succ^{Q}_{w^*} N(w^*)$ or $N(u^*) = w^*$---and we have already
  shown that in the latter case $N\notin \sm(Q)$). Thus, the
  pair~$\{u^*,w^*\}$ is blocking $N$ in $Q$.

  Summarizing, we have shown that \emph{no} stable matching of $P_0$ is stable for $Q$.

  \proofparagraph{Statement~\eqref{lem:pi-nex+rho-ex-nin->nstable}.} Assume that $\pi(q)$ does not exist and $\rho(q)$ exists  but $\rho(q) \notin S$.
  Since $\pi(q)$ does not exist, by statement~\eqref{lem:pi-nex}, for every stable matching~$N\in \sm(P_0)$ it holds that $N(u^*)=w^*$ or $w^*\succ^{Q}_{u^*}N(u^*)$.
  This includes $M$, meaning that $M(u^*)=w^*$ or $w^* \succ^{Q}_{u^*}M(u^*)$.

  We consider these two cases separately.

  \noindent \textbf{Case (1): $M(u^*) = w^*$.} By statement~\eqref{lem:pi-nex}, it follows that $w \succ^{P_0}_{u^*} w^*$.
  Further, since $q$ is a stable quadruple, there exists a stable matching $N\in \sm(P_0)$, such that $N(u^*) = w$ and $N(u) = w^*$. In $P_0$, since $u^*$ prefers $N$ to $M$, by \cref{prop:diff-partners-diff-prefs}, 
  it must be the case that $w^*$ prefers $M$ to $N$, i.e.,~$u^* \succ^{P_0}_{w^*} u$.
  Thus, the rotation~$\rho(q)$ (which, by our assumption, is guaranteed to exist) operates as follows: it changes the partner of $w^*$ from an agent that is less preferred than $u^*$ to $u^*$ or to an agent that is more preferred than $u^*$ (regarding $P_0$).
  Since $\rho(q)\notin S$, we infer that in matching~$M$ agent $w^*$ obtains a partner that is less preferred than $u^*$, i.e., $u^* \succ^{P_0}_{w^*} M(w^*)$. This leads to a contradiction with $M(u^*)=w^*$.

  \noindent \textbf{Case (2): $w^* \succ^{Q}_{u^*}M(u^*)$.} 
  Towards a contradiction, suppose that $M$ is also stable for $Q$.
  This implies that $M(w^*)\succ^{Q}_{w^*} u^*$.
  By \cref{obs:sh-P-props}\eqref{obs:shift-w}(b), we have that $M(w^*)\succ^{P_0}_{w^*} u$.
  If $u \succ^{P_0}_{w^*} u^*$, then the rotation $\rho(q)$ changes the partner of $w^*$ from $u$ to some agent which is more preferred than $u$ (regarding the preferences in $P_0$).
  Since, according to $M$, agent~$w^*$ already has a partner that is more preferred than $u$, we infer that $\rho(q)$ is the predecessor of some rotation in $S$, meaning that itself $\rho(q) \in S$ by the closedness of~$S$---a contradiction. 
  If  $u^* \succ^{P_0}_{w^*} u$, then $\succ^{P_0}_{w^*} = \succ^{Q}_{w^*}$ by \cref{obs:sh-P-props}\eqref{obs:shift-w}, and so we get that $M(w^*)\succ^{P_0}_{w^*} u^*\succ^{P_0}_{w^*} u$ because~$M(w^*)\succ^{Q}_{w^*} u^*$.
  In this case, rotation $\rho(q)$ changes the partner of $w^*$ from an agent that is less preferred than $u^*$ to $u^*$ or an agent who is more preferred than $u^*$. However, since in $M$ agent $w^*$ already has a partner who is preferred over $u^*$, we again infer that $\rho(q)\in S$---a contradiction.

  Summarizing, we conclude that $M\notin \sm(Q)$.

  \proofparagraph{Statement~\eqref{lem:rho-exists-in->stable}.} Let us assume that $\rho(q)$ exists and $\rho(q)\in S$.
  
  By \cref{lem:profile-only-possible-blocking-pair}, except $\{u^*,w^*\}$, no other unmatched pair with respect to $M$ could be blocking $Q$.
  In the following, we claim that $\{u^*,w^*\}$ is not blocking $Q$, implying that $M$ is stable for $Q$.
  We distinguish between two cases.
  
  If $u^{*}\succ_{w^*}^{P_0} u$, then by the definition of $\rho(q)$ and since $\rho(q)\in S$,
  it follows that $M(w^*)=u^*$ or $M(w^*) \succ_{w^*}^{P_0} u^*$.
  Moreover, by \cref{obs:sh-P-props}\eqref{obs:shift-w} we have that $\succ_{w^*}^{Q}=\succ_{w^*}^{Q}$,
  implying $M(w^*)=u^*$ or $M(w^*) \succ_{w^*}^{Q} u^*$. 
  Thus, $\{u^*,w^*\}$ cannot be blocking $M$ in $Q$.

  If $u\succ_{w^*}^{P_0} u^*$, then by the definition of $\rho(q)$ and since $\rho(q)\in S$,
  it follows that $M(w^*) \succ_{w^*}^{P_0} u$.
  Moreover, by \cref{obs:sh-P-props}\eqref{obs:shift-w}(b) 
  we obtain that $M(w^*) \succ_{w^*}^{Q} u^*$. 
  Thus, $\{u^*,w^*\}$ cannot be blocking $M$ in $Q$.

  \proofparagraph{Statement~\eqref{lem:pi-ex-nin->stable}.} Let us assume that $\pi(q)$ exists and $\pi(q)\notin S$.
  By \cref{lem:profile-only-possible-blocking-pair}, except $\{u^*,w^*\}$, no other unmatched pair with respect to $M$ could be blocking $Q$.
  In the following, we claim that $\{u^*,w^*\}$ is not blocking $Q$, which implies that $M$ is stable for $Q$.
  We distinguish between two cases.
  
  If $w^{*}\succ_{u^*}^{P_0} w$, then by the definition of $\pi(q)$ and since $\pi(q)\notin S$,
  it follows that $M(u^*)=w^*$ or $M(u^*) \succ_{u^*}^{P_0} w^*$.
  Moreover, by \cref{obs:sh-P-props}\eqref{obs:shift-u} we have that $\succ_{u^*}^{Q}=\succ_{u^*}^{Q}$,
  implying $M(u^*)=w^*$ or $M(u^*) \succ_{u^*}^{Q} w^*$. 
  Thus, $\{u^*,w^*\}$ cannot be blocking $M$ in $Q$.
  
  If $w\succ_{u^*}^{P_0} w^*$, then by the definition of $\pi(q)$ and since $\pi(q)\notin S$,
  it follows that $M(u^*) \succ_{u^*}^{P_0} w$.
  Moreover, by \cref{obs:sh-P-props}\eqref{obs:shift-u}(b) 
  we obtain that $M(u^*) \succ_{u^*}^{Q} w^*$. 
  Thus, $\{u^*,w^*\}$ cannot be blocking $M$ in $Q$.

  \proofparagraph{Statement~\eqref{lem:pi-ex-in+rho-nex-or-nin->nstable}.}
  Assume that $\pi(q)$ exists and $\pi(q) \in S$
  and either $\rho(q)$ does not exist or it exists but $\rho(q)\notin S$.

  Suppose, for the sake of contradiction, that $M$ is stable for $Q$.
  We distinguish between three cases, in each case obtaining a contradiction.
  \myparagraph{Case~1: $\boldsymbol{w^* \succ^{P_0}_{u^*} w}$.}
  By the definition of $\pi(q)$ and since $\pi(q)\in S$, referencing \cref{prop:rotations+stable-matchings}\eqref{rot:closedsubet-stable},
  it follows that $w^*\succ^{P_0}_{u^*} M(u^*)$.
  Thus, $w^*\succ^{Q}_{u^*} M(u^*)$ because $\succ^{P_0}_{u^*}=\succ^{Q}_{u^*}$ (by \cref{obs:sh-P-props}\eqref{obs:shift-u}).
  In particular, this implies that $\{u^*,w^*\}$ is an unmatched pair  in $M$.
  By assumption that $M$ is stable for $Q$,
  we must have that
  \begin{align}
    M(w^*) \succ_{w^*}^{Q} u^*.\label{eq:Mwstar<ustar}
  \end{align}
  If $u^*\succ_{w^*}^{P_0} u$, then by \cref{obs:sh-P-props}\eqref{obs:shift-w}, we have $M(w^*) \succ_{w^*}^{P_0} u^* \succ^{P_0}_{w^*} u$.
  By the fact that $\{u,w^*\}$ is in some stable matching (recall that $q$ is a stable quadruple), 
  there are two stable matchings, where $w^*$ obtains a partner (namely, $u$) who is less preferred than~$u^*$,
  and a partner  (namely, $M(w^*)$) who is more preferred than~$u^*$.
  This implies that $\rho(q)$ exists and that $\rho(q)\in S$---a contradiction.

  If $u\succ_{w^*}^{P_0} u^*$, then by \cref{obs:sh-P-props}\eqref{obs:shift-w} and by \eqref{eq:Mwstar<ustar},
  we have that $M(w^*) \succ_{w^*}^{P_0} u \succ_{w^*}^{P_0} \succ u^*$.
  Again, by the fact that $\{u,w^*\}$ is in some stable matching (recall that $q$ is a stable quadruple), 
  there are two stable matchings, where $w^*$ obtains partner~$u$, 
  and a partner  (namely, $M(w^*)$) who is more preferred than~$u$.
  This implies that $\rho(q)$ exists such that $\rho(q) \in S$---a contradiction.
  
  \myparagraph{Case~2: \boldmath${w \succ^{P_0}_{u^*} w^*}$ and $M(u^*)\neq w^*$.}
  If we can show that $w^* \succ^{Q}_{u^*} M(u^*)$,
  then we can use the same reasoning as we did for the first case to show the same contradiction.
  Thus it suffices to prove that $w^* \succ^{Q}_{u^*} M(u^*)$.
    
  By the definition of $\pi(q)$ and since $\pi(q)\in S$, referencing \cref{prop:rotations+stable-matchings}\eqref{rot:closedsubet-stable},
  it follows that $M(u^*)=w$ or $w\succ^{P_0}_{u^*} M(u^*)$,
  and thus  $w^*\succ^{Q}_{u^*} M(u^*)$ because  $M(u^*)\neq w^*$ (by assumption) and $w^* \succ^{Q}_{u^*} w$.
  This finishes the proof for the second case.

  \myparagraph{Case~3: \boldmath$w\succ^{P_0}_{u^*} w^*$ and $M(u^*)=w^*$.}
  Since $q$ is a stable quadruple,~$\{u^*,w\}$ and $\{u,w^*\}$ exist in some stable matching of $P_0$, say $N$.
  Thus, $P_0$ admits two different stable matchings~$M$ and $N$,
  where $M(u^*)=w^*$,  $N(u^*)=w$, and $N(w^*)=u$.
  By the precondition that $w\succ^{P_0}_{u^*} w^*$ and by \cref{prop:diff-partners-diff-prefs},
  we must have that
  $u^*\succ_{w^*} u$ (i.e.\ $w^*$ prefers $M$ to $N$).
  In particular, this means that there must be a rotation which changes the partner of $w^*$ from one that is less preferred than $u^*$ to agent~$u^*$.
  Thus, $\rho(q)$ exists and must be in $S$---a contradiction.
\end{proof}
}

\subsection{Polynomial-Time Algorithms for Robust Matchings}
We now first present an $O(n^4)$-time algorithm for finding a robust
matching if it exists. Then we use a Linear Programming (LP)
formulation to show that perfect robust matchings and robust matchings
with minimum egalitarian cost can be found in polynomial time if they
exist. Both approaches crucially rely on
\begin{inparaenum}[(a)]
\item the one-to-one correspondence between the
stable matchings and the closed subsets of the rotation
poset~\cite[Chapter~3.7]{GusfieldIrving1989},
\item the implications between the presence of the two rotations
  $\pi(q)$ and $\rho(q)$ of stable quadruples~$q$ derived in
  \cref{lem:characterization-two-rotations-stable}, and
\item the fact that all stable quadruples can be computed in $O(n^4)$ time.
\end{inparaenum}
The proof for (c)~is roughly by iterating over all possible rotations
and building a lookup table that stores for all
pairs~$(x, y) \in U \times W$ of agents a constant number of rotations
that make the partner of~$x$ less preferred to~$y$ or more preferred
to~$y$. For given stable quadruple~$q$, rotations $\pi(q)$ and
$\rho(q)$ can then be looked up in the table. We state this
observation for reference below.
\newcommand{\stablequadrupleruntime}{%
  Determining all stable quadruples~$q$ and their respective
  rotations $\pi(q)$ and $\rho(q)$ as defined in
  \cref{def:two-specific-rots} can be done in $O(n^4)$~time. }
\begin{proposition}%
  \label[proposition]{claim:stable-quadruples-rotations-runtime}
  \stablequadrupleruntime
\end{proposition}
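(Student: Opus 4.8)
The plan is to extract, once and for all, an $O(n^2)$-size representation of the rotation structure of $P_0$ and store it in a handful of lookup tables, so that afterwards the quadruples can be handled by brute force: there are only $O(n^4)$ candidate quadruples, and for each I want to test stability and read off $\pi(q)$ and $\rho(q)$ in constant time.

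First I would compute the rotation set $R$ and the rotation digraph $G(P_0)$ in $O(n^2)$ time (\cref{prop:rotations+stable-matchings}\eqref{rot:runtime}), as well as the $U$- and $W$-optimal stable matchings and, for every agent, a rank array together with the inverse array giving the agent sitting at each position of the preference list. I will use the standard structural fact~\cite{GusfieldIrving1989} that, along any elimination order, a man's partner moves monotonically down his list through his consecutive stable partners (and dually a woman's partner moves monotonically up): hence for a fixed man~$u$ the intervals $[w_i,w_{i+1})$ of his list swept out by the rotations that move him are pairwise disjoint and together span at most $n$ positions, and likewise on the women's side. In particular $\sum_{\rho\in R}|\rho|=O(n^2)$.

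Next I would build four tables indexed by $U\times W$ and initialized to ``nonexistent'', filling them in a single pass over all rotations $\rho=((u_0,w_0),\ldots,(u_{r-1},w_{r-1}))$ and all of their elements. For each index~$i$ I record $\rho$ (i) for every $w$ with $w_i\succeq_{u_i}w\succ_{u_i}w_{i+1}$, as the rotation crossing $w$ downward in $u_i$'s list; (ii) for the single entry $(u_i,w_{i+1})$, as the rotation installing $w_{i+1}$ as the partner of $u_i$; and symmetrically on the women's side (iii) for every $u$ with $u_{i-1}\succeq_{w_i}u\succ_{w_i}u_i$, as the rotation crossing $u$ upward in $w_i$'s improving list; and (iv) for the pair $(u_i,w_i)$, as the rotation whose pair it is. By the disjointness of the sweeping intervals these tables are well defined (at most one rotation per entry; this is also guaranteed by \cref{prop:stable-pairs+rotations}), filling the two ``crossing'' tables costs only the total swept length $O(n^2)$, and the two pointwise tables cost $O(\sum_{\rho}|\rho|)=O(n^2)$. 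Comparing with \cref{def:two-specific-rots}, a single lookup---choosing table (i) or (ii) according to whether $w^*\succ^{P_0}_{u^*}w$, and table (iii) or (iv) according to whether $u^*\succ^{P_0}_{w^*}u$, each decided in $O(1)$ from the rank arrays---returns exactly $\pi(q)$ and $\rho(q)$, which are unique whenever they exist by \cref{lem:the-two-unique}.

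Finally I would enumerate the stable quadruples. After the $O(n^4)$ preprocessing of \cref{prop:stable-pairs}, deciding whether the two-element set $\{\{u^*,w\},\{u,w^*\}\}$ is a stable set takes $O(1)$ time; I iterate over all $O(n^4)$ tuples of four distinct agents with $u^*,u\in U$ and $w^*,w\in W$, retain those that pass the test, and for each retained quadruple look up $\pi(q)$ and $\rho(q)$ in $O(1)$. Summing $O(n^2)$ for the tables and $O(n^4)$ for the enumeration and the stable-set tests yields the claimed $O(n^4)$ bound. The only real obstacle is the accounting that keeps the crossing tables within $O(n^2)$ and makes the lookups agree with \cref{def:two-specific-rots}: both hinge on the monotone, interval-partition behaviour of each agent's partner along rotation eliminations, with the half-open ranges $w_i\succeq_{u_i}w\succ_{u_i}w_{i+1}$ and $u_{i-1}\succeq_{w_i}u\succ_{w_i}u_i$ chosen precisely so as to absorb the equality cases $w^*=w_i'$ and $u^*=u_{i-1}'$ appearing in the definition.
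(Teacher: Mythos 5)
Your proof is correct and follows essentially the same route as the paper's: precompute the rotation structure, build per-pair lookup tables recording which (unique, by \cref{prop:stable-pairs+rotations}) rotation sweeps past or installs a given partner, test each of the $O(n^4)$ candidate quadruples via the $O(n^4)$-preprocessed stable-pair oracle of \cref{prop:stable-pairs}, and read off $\pi(q)$ and $\rho(q)$ by a constant-time case split matching \cref{def:two-specific-rots}. Your only deviations are cosmetic---merging the paper's six tables into four via half-open intervals and a sharper $O(n^2)$ accounting for filling them, whereas the paper is content with $O(n^4)$ there---and neither affects the final bound.
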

  \appendixproofwithstatement{claim:stable-quadruples-rotations-runtime}{\stablequadrupleruntime}
  {
  \begin{proof}  \renewcommand{\qedsymbol}{(of
       \cref{claim:stable-quadruples-rotations-runtime})~$\diamond$}
  By \cref{prop:stable-pairs} and by \cref{prop:rotations+stable-matchings}\eqref{rot:runtime},
  all $O(n^4)$ stable quadruples can be found in $O(n^4)$~time.

  For each stable quadruple~$q$, we show how to find $\pi(q)$ and $\rho(q)$, in $O(1)$ time for a given stable quadruple~$q$.
  We build in $O(n^4)$ time a size-$O(n^2)$ look-up table~$T$ to store for each ordered pair~$(x,y)\in U \times W$ the following up to six rotations:
  \begin{compactenum}
    \item Let $\sigma_1(x,y)$ denote the rotation which changes the partner of $x$ from someone who is more preferred than $y$ to $y$.
    
    Formally, $\sigma_1(x,y) \coloneqq ((u'_0,w'_0), \ldots, (u_{r-1}',w_{r-1}'))$
    such that $x=u'_i$ and $y=w'_{i+1}$ for some $i\in \{0,\ldots,r-1\}$.
    Note that the uniqueness of this rotation is guaranteed by \cref{prop:stable-pairs+rotations}\eqref{prop:at-most-1-up}.
    \item Let $\sigma_2(x,y)$ denote the rotation which changes the partner of $x$ from someone who is more preferred than $y$ to someone who is less preferred than $y$.

    Formally, $\sigma_2(x,y) \coloneqq ((u'_0,w'_0), \ldots, (u_{r-1}',w_{r-1}'))$
    such that $x=u'_i$ and $w'_{i} \succ_{x} y \succ_x w'_{i+1}$ for some $i\in \{0,\ldots,r-1\}$.
    Note that the uniqueness of this rotation is guaranteed by \cref{prop:stable-pairs+rotations}\eqref{prop:at-most-1-from-w}.

    \item Let $\sigma_3(x,y)$ denote the rotation which changes the partner of $x$ from $y$
    to someone who is less preferred than $y$.

    Formally, $\sigma_3(x,y) \coloneqq ((u'_0,w'_0), \ldots, (u_{r-1}',w_{r-1}'))$
    such that $x=u'_i$ and $y=w'_{i}$ for some $i\in \{0,\ldots,r-1\}$.
    Note that the uniqueness of this rotation is guaranteed by \cref{prop:stable-pairs+rotations}\eqref{prop:at-most-1-from-w}.
    Moreover, the existence of $\sigma_2(x,y)$ precludes the existence of $\sigma_1(x,y)$ and $\sigma_3(x,y)$
    because $\sigma_2(x,y)$ implies that $\{x,y\}$ is not in any stable matching while $\sigma_1(x,y)$ or $\sigma_3(x,y)$ implies that $\{x,y\}$ is some stable matching.
    
    \item  Let $\tau_1(y,x)$ denote the rotation which changes the partner of $y$ from someone who is less preferred than $x$ to $x$.

    Formally, $\tau_1(y,x) \coloneqq ((u'_0,w'_0), \ldots, (u_{r-1}',w_{r-1}'))$
    such that $y=w'_i$ and $x=u'_{i-1}$ for some $i\in \{0,\ldots,r-1\}$.
    Note that the uniqueness of this rotation is guaranteed by \cref{prop:stable-pairs+rotations}\eqref{prop:at-most-1-up}.

    \item Let $\tau_2(y,x)$ denote the rotation which changes the partner of $y$ from someone who is less preferred than $x$ to someone who is more preferred than $x$.

    Formally, $\tau_2(y,x) \coloneqq ((u'_0,w'_0), \ldots, (u_{r-1}',w_{r-1}'))$
    such that $y=w'_i$ and $u'_{i-1} \succ_{y} x \succ_y u'_{i}$ for some $i\in \{0,\ldots,r-1\}$.
    Note that the uniqueness of this rotation is guaranteed by \cref{prop:stable-pairs+rotations}\eqref{prop:at-most-1-from-m}.
    
    \item Let $\tau_3(x,y)$ denote the rotation which changes the partner of $y$ from $x$
    to someone who is more preferred than $x$.

    Formally, $\tau_1(y,x) \coloneqq ((u'_0,w'_0), \ldots, (u_{r-1}',w_{r-1}'))$
    such that $y=w'_i$ and $x=u'_{i}$ for some $i\in \{0,\ldots,r-1\}$.
    Note that the uniqueness of this rotation is guaranteed by \cref{prop:stable-pairs+rotations}\eqref{prop:at-most-1-from-m}.

    Moreover, the existence of $\tau_2(y,x)$ precludes the existence of $\tau_1(y,x)$ and $\tau_3(y,x)$
    because $\tau_2(y,x)$ implies that $\{x,y\}$ is not in any stable matching while $\tau_1(y,x)$ or $\tau_3(y,x)$ implies that $\{x,y\}$ is some stable matching.
  \end{compactenum}

  \cref{figure:six-rotations} illustrates the six rotations we have just defined.
  \begin{figure}
       \tikzstyle{matrixstyle} = [matrix of math nodes,,ampersand replacement=\&,column sep=1pt]
\centering
    \begin{tikzpicture}[>=stealth]
    \node[] (ucase1) {};
    \node[right = 0pt of ucase1] (ui1) {$x:$};
    \matrix[right = -3pt of ui1, matrixstyle] (ui1p) 
    {\ldots \& \ldots \& \succ \& y \& \succ \& \ldots \& \ldots \\};

    \draw[->,rounded corners] (ui1p-1-2) -- ($(ui1p-1-2)+(0,-.8)$) -- node[midway,fill=white] {\footnotesize $\sigma_2(x,y)$}  ($(ui1p-1-6)+(0,-.8)$) -| (ui1p-1-6);
  
    \draw[->,rounded corners] ($(ui1p-1-1)+(0,0.2)$) -- ($(ui1p-1-1)+(0,.8)$) -- node[fill=white,pos=0.5] {\footnotesize $\sigma_1(x,y)$}  ($(ui1p-1-4)+(-.1,.8)$) |- ($(ui1p-1-4)+(-.1,0.2)$);
  
    \draw[->,rounded corners] ($(ui1p-1-4)+(0.1,0.2)$) -- ($(ui1p-1-4)+(0.1,.8)$) -- node[fill=white,pos=0.5] {\footnotesize $\sigma_3(x,y)$}  ($(ui1p-1-7)+(-.1,.8)$) |- ($(ui1p-1-7)+(-.1,0.2)$);

    \node[right = 15pt of ui1p] (wi1) {$y:$};
    \matrix[right = -3pt of wi1, matrixstyle] (wi1p) 
    {\ldots \& \ldots \& \succ \& x \& \succ \& \ldots \& \ldots \\};

    \draw[->,rounded corners] (wi1p-1-6) -- ($(wi1p-1-6)+(0,-.8)$) -- node[midway,fill=white] {$\tau_2(y,x)$}  ($(wi1p-1-2)+(0,-.8)$) -| (wi1p-1-2);

    \draw[->,rounded corners] ($(wi1p-1-4)+(-.1,0.2)$) -- ($(wi1p-1-4)+(-.1,.8)$) -- node[fill=white,pos=0.5] {\footnotesize $\tau_3(y,x)$}  ($(wi1p-1-1)+(0,.8)$) |- ($(wi1p-1-1)+(0,0.2)$);
  
    \draw[->,rounded corners] ($(wi1p-1-7)+(0.1,0.2)$) -- ($(wi1p-1-7)+(0.1,.8)$) -- node[fill=white,pos=0.5] {\footnotesize $\tau_1(y,x)$}  ($(wi1p-1-4)+(.1,.8)$) |- ($(wi1p-1-4)+(.1,0.2)$);
  \end{tikzpicture}\caption{Illustration for the six rotations defined in the proof of \cref{claim:stable-quadruples-rotations-runtime}.}\label{figure:six-rotations}
\end{figure}
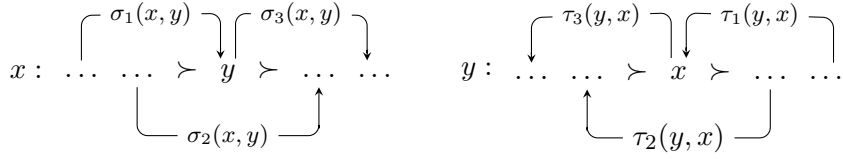

\allowdisplaybreaks[1]
Now, we continue with the determination of $\rho(q)$ and $\pi(q)$.
  Let $q=(u^*,w^*,u,w)$.
  \begin{align*}
    \pi(q) \coloneqq &
    \begin{cases}
      \sigma_2(u^*,w^*) , &\text{ if }w^* \succ_{u^*} w \text{ and }\sigma_2(u^*,w^*) \text{ exists, }\\
      \sigma_3(u^*,w^*) , &\text{ if }w^* \succ_{u^*} w \text{ and } \sigma_3(u^*,w^*) \text{ exists, }\\
      \sigma_2(u^*,w) , &\text{ if }w \succ_{u^*} w^* \text{ and }\sigma_1(u^*,w) \text{ exists, }\\
      \textsf{undefined}, &\text{otherwise.}\\
    \end{cases}\\
     \rho(q) \coloneqq &
    \begin{cases}
      \tau_1(w^*,u^*) , &\text{ if }u^* \succ_{w^*} u \text{ and }\tau_1(w^*,u^*) \text{ exists, }\\
      \tau_2(w^*,u^*) , &\text{ if }u^* \succ_{w^*} u \text{ and } \tau_2(w^*,u^*) \text{ exists, }\\
      \tau_3(w^*,u) , &\text{ if } u\succ_{u^*} u^* \text{ and }\tau_3(w^*,u) \text{ exists, }\\
      \textsf{undefined}, &\text{otherwise.}\\
    \end{cases}
  \end{align*}
  
  One can verify that the above construction corresponds to \cref{def:two-specific-rots}.
  Since there are $O(n^2)$ rotations and $n^2$ ordered pairs,
  the whole table, containing $O(n^2)$ entries, can be determined in $O(n^4)$~time. (Note that, for a given rotation~$\rho$, we can first find the agents who are affected by the rotation in $O(n)$ time, and, for each of the affected agents~$z$, find in $O(n)$~time all the pairs~$(z, w)$ such that $\rho$ needs to be added to the table entry of~$(z, w)$.)
  After computing the table, we can determine in constant time the two rotations~$\pi(q)$ and $\rho(q)$ for each~$q$ from the $O(n^4)$~stable quadruples, by looking up into the table.
  In total, the running time is~$O(n^4)$.
\end{proof}
}

We now prove our main result for the \pRMl{} problem.

\begin{theorem}\label[theorem]{thm:d-robust-poly}
  Given an instance of \pRMl{} with $2n$ agents, in $O(n^4)$~time we can either find a $d$-robust matching or correctly report that no such matching exists.
\end{theorem}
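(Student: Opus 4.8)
The plan is to reduce the search for a $d$-robust matching to the search for a suitable closed subset $S$ of the rotation poset of $P_0$, which by \cref{prop:rotations+stable-matchings}\eqref{rot:closedsubet-stable} corresponds to a stable matching of $P_0$. Since every $d$-robust matching is in particular stable in $P_0$ (the profile $P_0$ has swap distance $0 \le d$ to itself), restricting attention to stable matchings of $P_0$, and hence to closed subsets of rotations, loses nothing. By \cref{lem:d-swaps-profiles}, the stable matching $M$ encoded by such a closed subset $S$ is $d$-robust if and only if $M$ is stable in $P[\shifts(P_0,q)]$ for every stable quadruple $q$ with $|\shifts(P_0,q)| \le d$. \cref{lem:characterization-two-rotations-stable} translates each of these stability requirements into a membership condition on $S$ involving only the two rotations $\pi(q)$ and $\rho(q)$, so the whole task becomes a constraint-satisfaction problem over the rotations.

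First I would compute, using \cref{prop:rotations+stable-matchings}\eqref{rot:runtime} and \cref{claim:stable-quadruples-rotations-runtime}, the rotation digraph $G(P_0)$ together with all stable quadruples $q$ and their rotations $\pi(q)$ and $\rho(q)$, in $O(n^4)$ time. For each stable quadruple $q=(u^*,w^*,u,w)$ I would compute $|\shifts(P_0,q)| = \max(\rank_{u^*}(w^*)-\rank_{u^*}(w),0) + \max(\rank_{w^*}(u^*)-\rank_{w^*}(u),0)$ in $O(1)$ time from precomputed rank tables, discarding $q$ whenever this exceeds $d$. For each remaining $q$ I would read off the constraint on $S$ from \cref{lem:characterization-two-rotations-stable}: if neither $\pi(q)$ nor $\rho(q)$ exists, report that no $d$-robust matching exists (statement~\eqref{lem:rho-nex+pi-nex}); if $\pi(q)$ does not exist but $\rho(q)$ does, record the ``forced-in'' requirement $\rho(q)\in S$ (statements~\eqref{lem:pi-nex+rho-ex-nin->nstable} and~\eqref{lem:rho-exists-in->stable}); if $\pi(q)$ exists but $\rho(q)$ does not, record the ``forced-out'' requirement $\pi(q)\notin S$ (statements~\eqref{lem:pi-ex-nin->stable} and~\eqref{lem:pi-ex-in+rho-nex-or-nin->nstable}); and if both exist, record the implication $\pi(q)\in S \Rightarrow \rho(q)\in S$ (statements~\eqref{lem:rho-exists-in->stable},~\eqref{lem:pi-ex-nin->stable}, and~\eqref{lem:pi-ex-in+rho-nex-or-nin->nstable}).

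The resulting system has a convenient monotone shape. Besides the forced-in and forced-out unit constraints and the implications $\pi(q)\in S \Rightarrow \rho(q)\in S$, the requirement that $S$ be a closed subset contributes, for each arc $\pi\to\rho$ of $G(P_0)$, the further implication $\rho\in S \Rightarrow \pi\in S$. Every constraint except the forced-out ones is thus an implication of the monotone form ``membership entails membership''. I would therefore build a directed implication graph on the set of rotations whose arcs are exactly these implications, and compute the set $S^*$ of rotations reachable from the forced-in rotations by a single graph search. By monotonicity, $S^*$ is contained in every feasible $S$; moreover $S^*$ is itself closed and satisfies every forced-in constraint and every implication by construction. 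Hence a feasible $S$ exists if and only if $S^*$ contains no forced-out rotation, in which case $S^*$ is the inclusion-minimal solution and the stable matching corresponding to $S^*$ is output as a $d$-robust matching; correctness is immediate from \cref{lem:d-swaps-profiles,lem:characterization-two-rotations-stable}.

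For the running time, computing the rotations, quadruples, and their two associated rotations takes $O(n^4)$ time by \cref{claim:stable-quadruples-rotations-runtime}. There are $O(n^4)$ quadruples, each contributing in $O(1)$ time at most one arc or one unit constraint, and $G(P_0)$ contributes $O(n^2)$ closedness arcs, so the implication graph has $O(n^2)$ vertices and $O(n^4)$ arcs; a single traversal computing $S^*$ and the subsequent forced-out check both run in $O(n^4)$ time, giving the claimed bound. The conceptual core of the argument, and the part needing the most care, is marshalling the seven cases of \cref{lem:characterization-two-rotations-stable} into exactly these four constraint types; once every stability requirement is expressed purely through membership of $\pi(q)$ and $\rho(q)$ in $S$, the observation that all nontrivial constraints are monotone implications (so that the minimal closure $S^*$ is the unique candidate worth testing against the forced-out constraints) makes the remaining algorithmic step routine.
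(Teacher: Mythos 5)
Your proposal is correct and follows essentially the same route as the paper's proof: both reduce, via \cref{lem:d-swaps-profiles} and \cref{lem:characterization-two-rotations-stable}, to finding a closed subset of rotations that contains every forced-in $\rho(q)$, avoids every forced-out $\pi(q)$, and respects the implications $\pi(q)\in S\Rightarrow\rho(q)\in S$, solved by a single reachability computation on the rotation digraph augmented with the implication arcs. The only differences from \cref{alg:robust} are cosmetic --- you orient the implication arcs in the opposite direction and compute the minimal closure $S^*$ before checking the forced-out constraints, whereas the algorithm first deletes the forced-out rotations together with everything they would force and then takes the closure of the forced-in set; the two procedures produce the same set and the same $O(n^4)$ bound.
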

\begin{proof}
  Our approach is described in \cref{alg:robust}. 
  Let $P$ be the
  profile in the given instance.
  \columnratio{.58}
  \begin{paracol}{2}
\begin{algorithm}[h]
  \DontPrintSemicolon

  \KwIn{A preference profile~$P$ with agent sets~$U$ and $W$, and an integer~$d\in \mathds{N}$.}
  
  \KwOut{A $d$-robust matching for $P$ or $\bot$ if none exists.}
  
  \BlankLine
  
  Compute the rotation digraph $G(P)$\; \label{alg:graph-start}

  $G_1(P) \leftarrow G(P)$\;  \label{alg:G1-start} 
  \ForEach{stable quadruple $q$ with $|\shifts(P, q)| \leq d$\label{alg:sq-start}}
  {
    Compute $\pi(q)$ and $\rho(q)$ if they exist using Prop.~\ref{claim:stable-quadruples-rotations-runtime}\;\label{alg:sq-end}
    \lIf{$\not \exists\pi(q)$ and $\not \exists\rho(q)$}{\Return $\bot$}\label{alg:no-pi-no-rho-return-bot}
    \lIf{$\exists \pi(q)$ and $\exists \rho(q)$}{$G_1(P)\! \leftarrow\! G_1(P)\!+\! (\rho(q), \pi(q))$}\label{alg:add-arc} %
  }\label{alg:G1-end}

  $D \leftarrow \{\pi(q) \mid q $ is a stable quadruple with $|\shifts(P, q)| \leq d$ \label{alg:pi-no-rho-D}

  \qquad \qquad \qquad s.t.\ $\exists \pi(q)$ but $\not \exists \rho(q)\}$\;

  $G_2(P) \leftarrow G_1(P) - D - \{v \in V(G_1(P)) \mid \exists$\text{ a dipath in }$G_1(P)$\label{alg:G2-pi-no-rho-D}

 \qquad \qquad \qquad \qquad \qquad \text{ from a vertex in }$D$ to $v\}$\;
  
  \label{alg:graph-end} $A \leftarrow \{\rho(q) \mid q$  is a stable quadruple with $|\shifts(P, q)| \leq d$\; \label{alg:A} 

  \qquad  \qquad \qquad s.t.\ $\exists \rho(q)$ but $\not \exists \pi(q)$\}
  
  \lIf{$A \nsubseteq V(G_2(P)) $}{\Return $\bot$}\label{alg:required-vertex-nexists-return-bot}
  $T \leftarrow A \cup \{v \in V(G_2(P)) \mid \exists$ a dipath in $G_2(P)$ from  $v$\label{alg:closure-T}

  \qquad \qquad \qquad \qquad \qquad \quad to some vertex in  $A\}$\;
  \Return the matching corresponding to the closed set~$T$ of rotations\;\label{alg:last}
  \caption{Computing $d$-robust matchings.}
  \label{alg:robust}
\end{algorithm}
\switchcolumn
    \noindent To obtain an $O(n^4)$ algorithm we
  work with the rotation digraph~$G(P)$ (see \cref{def:digraph} and
  \cref{prop:rotations+stable-matchings}).
  Call a vertex subset~$S$ in a directed graph~$G$ \emph{closed}, if there is no arc in~$G$ pointing outwards from~$S$.
  Recall that a stable
  matching for~$P$ corresponds to a closed subset of the rotations in the rotation
  poset, i.e., a closed vertex subset~$S$ of~$G(P)$. 
  Intuitively, \cref{alg:robust}
  first adds arcs to $G(P)$ in lines~\ref{alg:G1-start} to~\ref{alg:G1-end} that model implications between rotations contained in $d$-robust matchings given in \cref{lem:characterization-two-rotations-stable}.
  Then, it removes rotations from~$G(P)$ that cannot occur in $d$-robust matchings according to \cref{lem:characterization-two-rotations-stable} in lines~\ref{alg:pi-no-rho-D} and~\ref{alg:G2-pi-no-rho-D}. Finally, it
  checks in line~\ref{alg:closure-T} whether there is a closed subset of rotations which contains the required
  rotations for $d$-robust matchings according to~\cref{lem:characterization-two-rotations-stable}.
\end{paracol}
  We now prove the correctness and then analyze the running time.
Below, when referring to $G_1(P)$, we
mean the graph~$G_1(P)$ obtained from $G(P)$ after
line~\ref{alg:G1-end} and by $G_2(P)$ we mean the graph obtained after
line~\ref{alg:graph-end}.

  \looseness=-1 We claim that, if \cref{alg:robust} returns something different
  from~$\bot$, then it is a $d$-robust matching. We first
  show that the set~$T$ computed in line~\ref{alg:closure-T} is a closed subset of
  rotations in the rotation poset: Clearly, $T$ is closed in $G_2(P)$.
  Since $G_2(P)$ is obtained from $G_1(P)$ by removing vertices
  together with all of their successors, $T$ is closed in $G_1(P)$ as
  well. Since $G_1(P)$ is obtained from $G(P)$ by adding arcs, $T$ is closed in~$G(P)$, implying the claim. By \cref{prop:rotations+stable-matchings}, there is a stable matching~$M$ for~$P$ associated with~$T$.

  Since $T$ is a closed subset,
  \cref{lem:characterization-two-rotations-stable}~\eqref{lem:pi-nex+rho-ex-nin->nstable}
  to~\eqref{lem:pi-ex-in+rho-nex-or-nin->nstable} apply. We now verify
  that, for each stable quadruple~$q$ with $|\shifts(P, q)| \leq d$,
  we have $M \in \sm(P[\shifts(P, q)])$ (recall that
  $P[\shifts(P, q)]$ is the profile obtained from $P$ by performing the swaps in
  $\shifts(P, q)$). By \cref{lem:d-swaps-profiles} it then follows
  that~$M$ is $d$-robust.

  Let $q$ be a stable quadruple of $P$ such that
  $|\shifts(P, q)| \leq d$. Suppose that $\rho(q)$ exists. If $\pi(q)$
  does not exist, then $\rho(q) \in T$ by line~\ref{alg:A} and line~\ref{alg:closure-T}, and thus
  $M \in \sm(P[\shifts(P, q)])$ by
  \cref{lem:characterization-two-rotations-stable}~\eqref{lem:rho-exists-in->stable}.
  If $\pi(q)$ exists, then, since $T$ is closed and by line~\ref{alg:add-arc}, either
  $\pi(q) \notin T$, giving $M \in \sm(P[\shifts(P, q)])$ by
  \cref{lem:characterization-two-rotations-stable}~\eqref{lem:pi-ex-nin->stable},
  or $\rho(q) \in T$, giving $M \in \sm(P[\shifts(P, q)])$ by
  \cref{lem:characterization-two-rotations-stable}~\eqref{lem:rho-exists-in->stable}. 
  Now suppose that $\rho(q)$ does not exist. Then, $\pi(q)$ exists because otherwise we would have returned $\bot$ in line~\ref{alg:no-pi-no-rho-return-bot}.
  By lines \ref{alg:pi-no-rho-D} and \ref{alg:G2-pi-no-rho-D}, $\pi(q) \notin T$, giving
  $M \in \sm(P[\shifts(P, q)])$ by
  \cref{lem:characterization-two-rotations-stable}~\eqref{lem:pi-ex-nin->stable}.
  Thus, indeed the returned matching is $d$-robust.

  It remains to show that a $d$-robust matching is returned if there
  is a $d$-robust matching~$M$ for~$P$. By the above, it suffices to
  show that $\bot$ is not returned in lines~\ref{alg:no-pi-no-rho-return-bot} and~\ref{alg:required-vertex-nexists-return-bot}.  By
  \cref{lem:d-swaps-profiles}, $M \in \sm(P[\shifts(P, q)])$ for each
  stable quadruple~$q$ with $|\shifts(P, q)| \leq d$. Thus, by
  \cref{lem:characterization-two-rotations-stable}~\eqref{lem:rho-nex+pi-nex}
  at least one of $\rho(q)$ and $\pi(q)$ exists, meaning that
  $\bot$ cannot be returned in line~\ref{alg:no-pi-no-rho-return-bot}. If $\bot$ was
  returned due to line~\ref{alg:required-vertex-nexists-return-bot}, then there is a stable quadruple~$q$ with
  $|\shifts(P, q)| \leq d$ such that $\pi(q)$ does not exist and
  $\rho(q)$ exists and, furthermore,
  $\rho(q) \in V(G_1(P)) \setminus V(G_2(P))$. Let $S$ be the closed
  subset of rotations in~$G(P)$ associated with~$M$. By
  \cref{lem:characterization-two-rotations-stable}~\eqref{lem:pi-nex+rho-ex-nin->nstable},
  $\rho(q) \in S$.
  Since $\rho(q) \notin V(G_2(P))$, by lines~\ref{alg:pi-no-rho-D} and \ref{alg:G2-pi-no-rho-D}, there is a stable quadruple~$q'$ with
  $|\shifts(P, q')| \leq d$ such that $\pi(q') \in D$ and there is a path (possibly of length zero) from
  $\pi(q')$ to $\rho(q)$ in $G_1(P)$. Since $\rho(q) \in S$,
  thus also $\pi(q') \in S$. By line~\ref{alg:G2-pi-no-rho-D}, $\rho(q')$ does not exist.
  Thus, by
  \cref{lem:characterization-two-rotations-stable}~\eqref{lem:pi-ex-in+rho-nex-or-nin->nstable}
  $M \notin \sm(P[\shifts(P, q')])$, a contradiction to~$M$ being
  $d$-robust. Thus, indeed a $d$-robust matching is returned if there
  is one.
 
  The running time of $O(n^4)$ can be obtained as follows. By
  \cref{prop:rotations+stable-matchings}, the rotation digraph in
  line~\ref{alg:graph-start} can be computed in $O(n^2)$~time. Lines~\ref{alg:sq-start} and \ref{alg:sq-end} can be
  carried out in $O(n^4)$~time by
  \cref{claim:stable-quadruples-rotations-runtime}. Thus, clearly,
  lines~\ref{alg:G1-start} to~\ref{alg:G1-end} can be carried out in $O(n^4)$~time. Lines~\ref{alg:pi-no-rho-D}--\ref{alg:G2-pi-no-rho-D} can
  be done in~$O(n^4)$  %
  because $G(P)$ contains $O(n^2)$ vertices. %
  Analogously,  lines \ref{alg:A}--\ref{alg:last} take $O(n^4)$ time.
\end{proof}

\begin{example}\label{ex:alg}
  To illustrate \cref{alg:robust}, consider the profile~$P$ given in \cref{ex:intro_example} and let $d=1$.
  $P$ admits three rotations, $\pi_1=((u_1,w_2)$, $(u_2, w_3)$, $(u_3, w_1)$, $(u_4, w_1))$, $\pi_2=((u_1, w_3), (u_3, w_1))$, and $\pi_3=((u_2, w_4), (u_4, w_2))$.
  There are ten stable quadruples for $d=1$.
  The corresponding $\pi(q)$ and $\rho(q)$ are summarized in the lower left table.

  \columnratio{.45}

  \begin{paracol}{2}

  {\centering
  \begin{tabular}{@{}c|cc@{}}
    \toprule
    Stable quadruple~$q$  & & \\
    with $|\shifts(P, q)| \le 1$ & $\pi(q)$ & $\rho(q)$ \\
    \midrule
    $(u_1, w_1, u_3, w_3)$ & $\pi_1$ & $\pi_2$\\
    $(u_1, w_2, u_4, w_1)$ & $\pi_1$ & $\pi_3$\\
    $(u_1, w_2, u_4, w_3)$ & $\pi_1$ & $\pi_3$\\
    $(u_1, w_3, u_2, w_2)$ & no & $\pi_1$\\
    $(u_1, w_4, u_2, w_1)$ & $\pi_2$ & $\pi_3$\\
    $(u_2, w_1, u_3, w_2)$ & $\pi_3$ & $\pi_2$\\
    $(u_2, w_2, u_4, w_4)$ & $\pi_1$ & $\pi_3$\\
    $(u_2, w_4, u_3, w_3)$ & no & $\pi_1$\\
    $(u_3, w_1, u_4, w_4)$ & no  & $\pi_1$\\
    $(u_3, w_2, u_4, w_3)$ & $\pi_2$ & $\pi_3$\\
    \bottomrule
  \end{tabular}\par}

\switchcolumn

  \noindent The digraphs~$G(P)$ and $G_1(P) = G_2(P)$ constructed in \cref{alg:robust} are depicted in the lower right figure.

{\centering  \begin{tikzpicture}[>=stealth]
    \node[draw, inner sep=1pt, circle] (p1) {$\pi_1$};
    \node[draw, inner sep=1pt,  circle, below  = 10pt of p1, xshift=-4ex] (p2) {$\pi_2$};
    \node[draw, inner sep=1pt,  circle, below = 10pt of p1, xshift=4ex] (p3) {$\pi_3$};
    \node[above = .5ex of p1] {$G(P):$};
    \draw[bend right=20] (p1) edge[->] (p2);
    \draw[bend left=20] (p1) edge[->] (p3);

    \begin{scope}[xshift=20ex]
      \node[draw, inner sep=1pt, circle] (p1) {$\pi_1$};
    \node[draw, inner sep=1pt,  circle, below  = 10pt of p1, xshift=-4ex] (p2) {$\pi_2$};
    \node[draw, inner sep=1pt,  circle, below = 10pt of p1, xshift=4ex] (p3) {$\pi_3$};
    \node[above = .5ex of p1] {$G_1(P)=G_2(P):$};
    \draw[bend right=20] (p1) edge[->] (p2);
    \draw[bend left=20] (p1) edge[->] (p3);

    \draw[bend right=20] (p2) edge[->, winered] (p1);
    \draw[bend right=20] (p2) edge[->, winered] (p3);
    \draw[bend right=20] (p3) edge[->, winered] (p2);
    \draw[bend left=20] (p3) edge[->, winered] (p1);
  \end{scope}
\end{tikzpicture}
\par}

\noindent One can verify that $A=\{\pi_1\}$ (see rows 4, 8, 9 in the table).
  $T=\{\pi_1, \pi_2, \pi_3\}$ is the only closed set in $G_2$ that includes $\pi_1$, which corresponds to $M_2$.
  Indeed our algorithm will return $M_2$ (see \cref{ex:intro_example}) as the only $1$-robust matching. 

\hfill$\diamond$\end{paracol}
\end{example}

\looseness=-1 Now we turn to the problem variants where we look for a perfect $d$-robust matching or one with minimum egalitarian cost. Our polynomial-time algorithm for these variants builds on a Linear Programming (LP) formulation which finds a stable matching.
This LP formulation in turn is based on the one-to-one correspondence between the stable matchings and the closed subsets of the rotation poset~\cite[Chapter~3.7]{GusfieldIrving1989}. %
A crucial property of this formulation is that its constraint matrix is totally unimodular.
Hence, each extreme point of the polytope defined by this formulation is integral.

The LP formulation is as follows.
Let $P_0$ be a preference profile with two disjoint sets, $U$ and $W$, each containing $n$~agents.
Let $R(P)$ be the set of rotations for $P_0$ and let $G(P_0)$ with arc set~$E(P_0)$ be the rotation digraph of ${P_0}$ regarding the precedence relation~$\pred^{P_0}$; by \cref{prop:rotations+stable-matchings}\eqref{rot:runtime}, both the rotation set~$R({P_0})$ and the rotation digraph~$G({P_0})$ can be computed in $O(n^2)$~time.
For each rotation~$\rho\in R({P_0})$, we introduce a variable~$x_{\rho}$ with box constraints~$0\le x_{\rho} \le 1$,
where $x_{\rho}=1$ will correspond to adding $\rho$ to the solution subset while $x_{\rho}=0$ means that $\rho$ will not be taken into the subset.
By \citeauthor{GusfieldIrving1989}~\cite[Chapter~3.7]{GusfieldIrving1989}, the constraint matrix of the constraints
\begin{alignat}{2}
  \tag{LP1} x_{\rho} - x_{\pi} &\le 0, &\qquad&  \forall \pi, \rho \in R({P_0}) \text{ with } (\pi,\rho) \in E({P_0}),\label{eq:closed-subset-cons}\\
  \tag{LP2}  \label{eq:box-cons}0\le x_{\rho} &\le 1,                             &&  \forall \rho \in R({P_0}), 
\end{alignat}
\noindent is totally unimodular and thus there is a solution in which each variable takes either value zero or one.
In this way, the set $S=\{\rho \mid x_{\rho}=1\}$, defined by including exactly those rotations whose variable values are set to one is closed under the rotation poset and thus defines a stable matching.

\toappendix{
Before we state our main result for the \pRMl{} problem,
we recall a condition which ensures that an LP formulation gives an integral solution.

\begin{proposition}[\cite{Camion1965}]\label[proposition]{prop:TUM}
  If $A\in \{-1,0,+1\}^{\hat{n}\times \hat{m}}$ and $b\in \mathds{Z}^{\hat{m}}$
  such that each row in $A$ has at most one $+1$ and at most one $-1$,
  then $A$ is totally unimodular,
  and every extreme point of the system $A x \le b$, $x\in \mathds{N}_{0}^{\hat{m}}$ is integral.
\end{proposition}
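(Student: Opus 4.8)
The plan is to establish the two assertions in turn: first that $A$ is totally unimodular, and then to deduce integrality of the extreme points from total unimodularity together with $b$ being integral, via the standard Cramer's-rule argument. The single structural observation that drives both parts is that the hypothesis ``each row has at most one $+1$ and at most one $-1$'' is inherited by every submatrix of $A$ (deleting rows and columns can only remove nonzero entries) and is moreover still satisfied by the matrix $A$ augmented below by $-I$, since each of the new rows has exactly one $-1$ and no $+1$. This lets me reuse the same determinant argument for the augmented system without invoking a separate preservation lemma.

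For total unimodularity I would prove by induction on $k$ that every $k\times k$ submatrix $B$ of $A$ satisfies $\det B\in\{-1,0,+1\}$. The base case $k=1$ is immediate, as every entry lies in $\{-1,0,+1\}$. For the inductive step I split into three cases according to the rows of $B$, which by the observation above again carry at most one $+1$ and one $-1$ each. If some row of $B$ is all zero, then $\det B=0$. If some row has exactly one nonzero entry $B_{ij}=\pm1$, I expand the determinant along that row, obtaining $\det B=\pm B_{ij}\,\det B'$ for a $(k-1)\times(k-1)$ submatrix $B'$; the inductive hypothesis gives $\det B'\in\{-1,0,+1\}$, hence $\det B\in\{-1,0,+1\}$. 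In the remaining case every row of $B$ has exactly two nonzero entries, necessarily one $+1$ and one $-1$, so every row sums to zero; equivalently $B\mathbf{1}=\mathbf{0}$ for the all-ones vector $\mathbf{1}$, so the columns of $B$ are linearly dependent and $\det B=0$. This completes the induction and shows $A$ is totally unimodular.

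For the second assertion I would set $\tilde A=\begin{pmatrix}A\\-I\end{pmatrix}$ and $\tilde b=\begin{pmatrix}b\\\mathbf{0}\end{pmatrix}$, both integral, so that the feasible region $\{x\in\mathds{R}^{\hat m}:Ax\le b,\ x\ge\mathbf{0}\}$ equals $\{x:\tilde A x\le\tilde b\}$. By the observation above, $\tilde A$ still meets the row hypothesis, so the induction of the previous paragraph applies verbatim and shows $\tilde A$ is totally unimodular as well. Now let $x^*$ be an extreme point of this polyhedron. Then $\hat m$ linearly independent constraints are tight at $x^*$; collecting the corresponding rows of $\tilde A$ into a nonsingular $\hat m\times\hat m$ matrix $\tilde A'$ and the matching entries of $\tilde b$ into $\tilde b'$, we have $\tilde A' x^*=\tilde b'$, that is, $x^*=(\tilde A')^{-1}\tilde b'$. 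Since $\tilde A'$ is a nonsingular square submatrix of a totally unimodular matrix, $\det\tilde A'=\pm1$; and since $(\tilde A')^{-1}=\operatorname{adj}(\tilde A')/\det\tilde A'$ is an integer adjugate (cofactors of an integer matrix) divided by $\pm1$, the inverse is integral. As $\tilde b'$ is integral, $x^*$ is integral, which is what we needed.

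The only genuinely delicate point is the third case of the induction: one must notice that the row condition forces the ``exactly two nonzeros'' pattern to be one $+1$ and one $-1$, whence the rows sum to zero and $B$ is singular. The cofactor expansion and the Cramer/adjugate steps are routine once total unimodularity is in hand. A minor bookkeeping point is the passage from ``extreme point'' to ``unique solution of a nonsingular tight subsystem'' and the remark that augmenting with $-I$ preserves the hypothesis, so that total unimodularity of the combined constraint matrix comes for free from the same induction.
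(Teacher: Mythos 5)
Your proof is correct. Note, however, that the paper does not prove this statement at all: Proposition~\ref{prop:TUM} is imported verbatim from \citet{Camion1965} and used as a black box in the proof of Theorem~\ref{cor:d-robust-perfect-egal-poly}, so there is no in-paper argument to compare against. What you have supplied is the standard textbook derivation, and every step checks out: the row condition passes to submatrices and to the augmentation by $-I$; the three-way case split in the induction is exhaustive because a row with at most one $+1$ and at most one $-1$ has zero, one, or exactly two nonzeros, and in the last case the two entries must be $+1$ and $-1$, forcing the row sums of $B$ to vanish and hence $\det B = 0$; and the passage from total unimodularity of $\bigl(\begin{smallmatrix}A\\-I\end{smallmatrix}\bigr)$ to integrality of extreme points via a nonsingular tight subsystem and the adjugate formula is the usual Hoffman--Kruskal-style argument. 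The only cosmetic point is that the proposition as printed has $b\in\mathds{Z}^{\hat m}$ where it should be $\mathds{Z}^{\hat n}$; your reading of the feasible region as $\{x\ge 0 : Ax\le b\}$ is the intended one.
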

}

\begin{theorem}\label{cor:d-robust-perfect-egal-poly}
  Finding a $d$-robust perfect matching and finding a $d$-robust
  matching with minimum egalitarian cost, if they exist, can both be
  done in polynomial time.
\end{theorem}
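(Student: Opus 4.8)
The plan is to reuse the totally unimodular rotation LP set up above and to intersect its feasible region with linear constraints that carve out exactly the $d$-robust matchings. Recall that the integral points of \eqref{eq:closed-subset-cons}--\eqref{eq:box-cons} are in bijection with the closed subsets of the rotation poset, i.e.\ with the stable matchings of $P_0$ (\cref{prop:rotations+stable-matchings}). I would iterate over all stable quadruples $q$ with $|\shifts(P_0,q)|\le d$, compute $\pi(q)$ and $\rho(q)$ by \cref{claim:stable-quadruples-rotations-runtime}, and turn the case analysis of \cref{lem:characterization-two-rotations-stable} into constraints on the rotation indicators $x_\rho$: if neither $\pi(q)$ nor $\rho(q)$ exists, then no $d$-robust matching exists at all (statement~(iii)); if only $\pi(q)$ exists, add $x_{\pi(q)}=0$ (statements~(vi) and~(vii)); if only $\rho(q)$ exists, add $x_{\rho(q)}=1$ (statements~(iv) and~(v)); and if both exist, add the implication $x_{\pi(q)}\le x_{\rho(q)}$ (statements~(v), (vi), and~(vii)). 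By \cref{lem:d-swaps-profiles} a stable matching is $d$-robust iff it is stable in every $P[\shifts(P_0,q)]$, so the integral feasible points of this augmented system are precisely the $d$-robust matchings.

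The key point is that every added row still has the shape demanded by \cref{prop:TUM}: the implication $x_{\pi(q)}-x_{\rho(q)}\le 0$ carries a single $+1$ and a single $-1$, while the fixings $x_\rho\le 0$ and $-x_\rho\le -1$ have one nonzero entry and an integer right-hand side. Hence the enlarged constraint matrix remains totally unimodular, so the augmented polytope is integral and nonempty exactly when a $d$-robust matching exists. Note also that the closed-subset constraints \eqref{eq:closed-subset-cons} automatically propagate a forbidden rotation to all of its successors and a required rotation to all of its predecessors, so the explicit reachability closures of \cref{alg:robust} are subsumed by the LP.

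For the minimum-egalitarian-cost variant I would minimise a linear objective over this polytope. The structural fact I rely on (from the rotation theory of \citet{GusfieldIrving1989}) is that the egalitarian cost is \emph{additive} over eliminated rotations: if $M_0$ is the $U$-optimal matching and $\Delta(\rho)$ is the fixed change in $\egalcost$ caused by eliminating $\rho$, then $\egalcost(M_S)=\egalcost(M_0)+\sum_{\rho\in S}\Delta(\rho)$ for every closed subset $S$. Minimising $\sum_\rho\Delta(\rho)\,x_\rho$ subject to the augmented constraints is therefore a polynomial-size LP over an integral polytope, so any optimal vertex---computable in polynomial time---yields a $d$-robust matching of minimum egalitarian cost. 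The perfect variant is even simpler: a $d$-robust matching is in particular stable, and by \cref{prop:SMI-matched-agents-the-same} all stable matchings of $P_0$ match the same set of agents; hence it suffices to compute any $d$-robust matching via \cref{thm:d-robust-poly} and test whether it is perfect.

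I expect the main obstacle to be, first, checking that the three constraint types faithfully and exhaustively encode \cref{lem:characterization-two-rotations-stable} for all quadruples simultaneously, so that integral feasibility is equivalent to (not merely implied by) $d$-robustness; and second, justifying the additivity of $\egalcost$ over rotations carefully enough to treat it as a genuine linear objective rather than an order-dependent quantity. Once these two points are established, the total unimodularity argument and the polynomial-time LP solve are routine.
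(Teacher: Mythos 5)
Your proposal is correct and follows essentially the same route as the paper's proof: the same augmented rotation LP with the three constraint types derived from \cref{lem:characterization-two-rotations-stable}, the same total-unimodularity argument via \cref{prop:TUM}, the additivity of the egalitarian cost over eliminated rotations as the linear objective, and \cref{prop:SMI-matched-agents-the-same} for the perfect variant. The two caveats you flag are exactly the points the paper also leans on, so nothing further is needed.
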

\begin{proof}
  Following \cref{lem:characterization-two-rotations-stable},
  we will add some additional constraints to the LP given by (\ref{eq:closed-subset-cons}) and (\ref{eq:box-cons}),
  which results in an LP whose constraint matrix remains totally unimodular
  \ifshort (see~\cite{Camion1965}).
  \else
  (see~\cref{prop:TUM}).
  \fi  
  To determine whether there is a $d$-robust matching for our instance, we need to consider every possible profile that differs from the original profile by at most $d$~swaps.
  For $2n$~agents, there are, however, $(2n)^{O(d)}$ such profiles.
  To avoid this, we characterize these profiles by stable quadruples, using \cref{lem:d-swaps-profiles}.
  To achieve this, we compute for each stable quadruple~$q$ with $|\shifts(P_0,q)|\le d$ the two specific rotations~$\pi(q)$ and $\rho(q)$ as defined in \cref{def:two-specific-rots}.

As already discussed, $\pi(q)$ and $\rho(q)$ may not exist.
If they exist, then by \cref{lem:the-two-unique} they are unique.
Moreover, by \cref{lem:characterization-two-rotations-stable}\eqref{lem:rho-nex+pi-nex}, we may assume that at least one of $\pi(q)$ and $\rho(q)$ exist
as otherwise $\sm(P)\cap \sm(P[\shifts(P,q)])=\emptyset$,
implying that $P$ does not admit a $d$-robust matching.
We distinguish between three cases, in each case describing how to add some constraints to the LP defined above.
\begin{flalign}
  \textbf{Case (1): Both \boldmath$\pi(q)$ and $\rho(q)$ exist.}
 \text{ Add the constraint }
     \tag{LP3.1} x_{\pi(q)} - x_{\rho(q)} \le 0.&& \label{eq:both-ex}
\end{flalign}
\noindent  
  By \cref{lem:characterization-two-rotations-stable}, statements \eqref{lem:rho-exists-in->stable}, \eqref{lem:pi-ex-nin->stable}, and \eqref{lem:pi-ex-in+rho-nex-or-nin->nstable},
  the stable matching defined according to a closed subset is stable in $P[\shifts(P_0,q)]$ if and only if $x_{\rho(q)}=1$ or $x_{\pi(q)}=0$.
  \begin{flalign}
    \textbf{Case (2): \boldmath$\pi(q)$ exists but $\rho(q)$ does not.}
    \text{ Add the constraint }
  \tag{LP3.2}
    x_{\pi(q)} =0 &&\label{eq:pi-ex-rho-not}
  \end{flalign}
  \noindent
  The above constraint is justified by  \cref{lem:characterization-two-rotations-stable}\eqref{lem:pi-ex-in+rho-nex-or-nin->nstable}.
  \begin{flalign}
    \textbf{Case (3): \boldmath$\pi(q)$ does not exist but $\rho(q)$ exists.}
   \text{ Add the constraint }
   \tag{LP3.3}
    x_{\rho(q)} =1 && \label{eq:pi-ex-pi-not}
  \end{flalign}
  This constraint is justified by \cref{lem:characterization-two-rotations-stable}, statements~\eqref{lem:pi-nex+rho-ex-nin->nstable} and \eqref{lem:rho-exists-in->stable}.

Note that in each of the three cases, we add to the constraint matrix a row which has at most one $+1$, at most one $-1$ and the remaining values are all $0$s.
Thus, we can infer by
\ifshort \citet{Camion1965}
\else \cref{prop:TUM}
\fi that the resulting constraint matrix is still totally unimodular and all primal solutions to our problem are integral.
Since the matrix has $O(n^4)$ rows and $O(n^2)$ columns,
solving the thus constructed LP can be done in polynomial time.

Since all stable matchings match the same set of agents
(\cref{prop:SMI-matched-agents-the-same}), it is apparent from the
above LP that finding a $d$-robust and perfect matching if it exists
can be done in polynomial time. Finding a $d$-robust matching, if one
exists, with minimum egalitarian cost can also be done in polynomial
time by the following: For each rotation $\rho$ we can compute how
adding $\rho$ to a stable matching changes its egalitarian score.
Then, it is sufficient to add an appropriate optimization objective to
the LP constructed above.
\end{proof}

\section{Robustness and Preferences with Ties: NP-hardness}\label{sec:Robust+Ties}
\appendixsection{sec:Robust+Ties}
When the input preferences may contain ties, we consider a swap to be a pair of two agents that belong to two neighboring tied classes.
\toappendix{
  For the case with ties, the preference list~$\succeq$ of each agent may be expressed as a transitive and complete binary relation on the set of the agents who she finds acceptable.
  The expression~``$x \succeq_i y$'' means that $i$ weakly prefers $x$ over $y$ (\emph{i.e.}\ $x$ is better or as good as $y$). We use $\succ_i$ to denote the asymmetric part (\emph{i.e.}\ $x\succeq_i y$ and $\neg (y\succeq_i x)$)
and $\sim_i$ to denote the symmetric part of $\succeq_i$ (\emph{i.e.}\ $x\succeq_i y$ and $y \succeq_i x$).

Formally, we define the swap distance~$\tau(\succeq_i, \succeq_{i'})$ between two preference lists with ties as follows.
\begin{align*}
  \delta(\succeq_i, \succeq_{i'}) \coloneqq
  \begin{cases}
    \infty, & \text{ if } \succeq_i \text{ and } \succeq_{i'} \text{ have }\\
    & \text{\emph{different} acceptable sets,}\\
    |\{ (x,y)\in \succ_i  \mid (y,x) \in \succeq_{i'}\}| +
    |\{(x,y) \in \sim_{i} \mid (x,y) \notin \sim_{i}\}| , & \text{otherwise.}\\
  \end{cases}
\end{align*}

In particular, if an agent has a preference list $(a, b,c)$, meaning that
all three agents are tied on the first position,
then moving to the list $c \succ (a, b)$ requires two swaps (swapping $a$ with $c$ and $b$ with $c$).
}
Finding a stable matching can be done in $O(n^2)$ time even when ties are present~\cite{Irving1994}. In contrast, presence of ties makes \pRMl{} NP-hard:

\newcommand{\robustties}{
\pRMl{} with ties is NP-hard even when the number~$d$ of swaps allowed is one.
}
\begin{theorem}%
  \label{thm:robust-ties-np-hard-d-unbounded}
  \robustties
  \end{theorem}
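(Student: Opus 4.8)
The plan is to first determine exactly what a single swap can do against a weakly stable matching (weak stability being the relevant notion with ties, cf.~\cite{Irving1994}), and then to reduce from \pSAT. Fix a weakly stable $M$ for a profile $P$ with ties, and write $\sim$ for indifference. A single swap modifies only one agent's list and flips the order of exactly one pair there, so it can turn an unmatched pair $\{u,w\}$ into a blocking pair only if one side already strictly prefers the other and the swap promotes the second side's preference to strict. Concretely, $M$ \emph{fails} to be $1$-robust iff there is an unmatched pair $\{u,w\}$ of one of two types: \textbf{(a)} $w\succ_u M(u)$ and $u\sim_w M(w)$, so one swap breaks the tie in $w$'s list (or the symmetric condition); or \textbf{(b)} $w\succ_u M(u)$ while $M(w)\succ_w u$ with $u$ lying in the tied class immediately below that of $M(w)$ in $w$'s list, so one swap reverses these two adjacent classes (or the symmetric condition). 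Hence $M$ is $1$-robust iff it is weakly stable and admits no pair of type~(a) or~(b). I record one design constraint: forbidding only type-(a) pairs turns weak stability into \emph{strong} stability, and strongly stable matchings can be found in polynomial time~\cite{Irving1994}; therefore the reduction cannot rely on ties alone but must couple mechanism~(a) with the adjacency mechanism~(b), so that the freedom left by weak stability is genuinely hard to navigate.

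Given a formula $\varphi$ with variable set $X$ and clause set $C$, I build a profile $P_\varphi$ with ties in which a $1$-robust matching exists iff $\varphi$ is satisfiable. For each variable $x$ I install a gadget possessing exactly two weakly stable local configurations $M_x^{\mathsf{T}}$ and $M_x^{\mathsf{F}}$, differing only in the resolution of one designated $2$-element tie; every other preference gap inside the gadget is padded by filler agents (each matched to a private dummy that only it finds acceptable) so that no stray swap can create a near-block internal to the gadget, making each truth setting internally $1$-robust. For each clause $c=\ell_1\vee\ell_2\vee\ell_3$ I install a clause gadget containing a \emph{check pair} $\{a_c,b_c\}$ positioned so that, precisely in the configuration where all three literals are false, $\{a_c,b_c\}$ is a near-block of type~(b): $a_c$ strictly prefers $b_c$ to its assigned partner, while $b_c$'s partner sits in the tied class just above $a_c$. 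Each literal is wired to the (up to three) clauses in which it occurs so that setting it \emph{true} reroutes a shared agent, which either pushes $b_c$'s partner more than one class above $a_c$ or hands $a_c$ a partner it weakly prefers to $b_c$; in either case the type-(b) near-block is destroyed. The designated variable ties feed mechanism~(a), ensuring that the two truth settings are both individually admissible yet cannot be silently pre-resolved, so that every globally consistent $1$-robust resolution reads off as a truth assignment.

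For correctness, the forward direction starts from a satisfying assignment, resolves each variable tie according to its truth value, and completes every clause gadget in a satisfied configuration; one then verifies that the resulting $M$ is weakly stable and, by the characterization, has neither type-(a) nor type-(b) near-blocks, hence is $1$-robust. Conversely, from a $1$-robust $M$ one reads the truth value of each variable off the resolution of its designated tie; since $M$ has no type-(b) near-block, no clause can sit in its all-false configuration, so each clause has a satisfied literal and $\varphi$ is satisfiable. Since $d=1$ is fixed throughout, this shows \pRMl{} with ties is NP-hard already for $d=1$.

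The main obstacle is the global interaction of the gadgets. I must guarantee that the \emph{only} unmatched pairs that can ever become type-(a) or type-(b) near-blocks are the designated variable ties and the clause check pairs, that a true literal in \emph{any} of its three clause positions suffices to cure that clause, and that both configurations of every variable gadget stay weakly stable regardless of how neighbouring gadgets are resolved. Keeping the padding and the cross-gadget preference links from introducing spurious near-blocks, while deliberately staying outside the polynomially solvable strong-stability regime identified in the characterization, is the delicate part of the argument.
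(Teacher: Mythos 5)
Your opening ``characterization lemma'' is flawed under the paper's actual swap-distance for ties, and this undermines the framing of your reduction. The paper defines the distance so that a swap exchanges two agents in \emph{neighboring tied classes}, and the displayed formula for $\delta$ counts every ordered pair that changes status; the paper explicitly notes that going from the tie $(a,b,c)$ to $c \succ (a,b)$ costs \emph{two} swaps. Consequently, a type-(a) pair (one side strict, the other indifferent) is one swap away from blocking only when the relevant tied class has exactly two members; if $u$ sits in a tie of size $\geq 3$ with $M(w)$, no single swap can make $w$ strictly prefer $u$, because any weak order with $u \succ_w M(w)$ differs from the original in at least two pairs (transitivity of $\sim$ forces the third tied agent to move as well). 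The same caveat applies to type-(b) when the adjacent classes are not singletons. So your ``iff'' fails in one direction, and your corollary that $1$-robust matchings are strongly stable is false --- in fact the paper's construction exploits exactly this: it uses \emph{large} ties ($(V)$ in the lists of $T\cup S$, $(T)$ and $(S)$ in the vertex agents' lists, and the triple tie in each $e_\ell$'s list) precisely because they survive any single swap, so the $1$-robust matching it builds is riddled with pairs you would classify as type (a). The only single-swap-vulnerable spots are the strictly ordered adjacent pairs $f_\ell \succ v_i$ in the lists of the agents $e_\ell^{v_i}$, i.e., your mechanism (b) with singleton classes. Your ``design constraint'' paragraph (that one must stay outside the strong-stability regime) is therefore a red herring: the correct cost model already keeps big-tie type-(a) pairs harmless, and the valid residue of your lemma is only the sufficient direction (no type-(a)/(b) pairs with two-element classes $\Rightarrow$ $1$-robust).

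The larger gap is that the reduction itself is never constructed. Your variable and clause gadgets are specified purely by desiderata --- ``exactly two weakly stable local configurations,'' check pairs that become type-(b) near-blocks ``precisely'' when all literals are false, literal-to-clause wiring that ``reroutes a shared agent'' --- with no preference lists, no proof that the two configurations are the only admissible ones, and no verification that padding and cross-gadget links create no spurious near-blocks. You flag this as ``the delicate part,'' but it is the entire content of an NP-hardness proof; as it stands nothing is verifiable, and it is unclear how a $1$-robust matching can tolerate your designated two-element ties at all, since any type-(a) situation at a two-element tie is genuinely fatal under one swap. For comparison, the paper reduces from \textsc{Independent Set} rather than \pSAT{}: large ties create an exponential family of stable matchings (free assignment of selector agents $T\cup S$ to vertex agents), the matching of $v_i$ to an $S$-agent encodes selecting $v_i$, and for each edge $e_\ell=\{v_i,v_j\}$ the adjacent strict pair in $e_\ell^{v_i}$'s (and $e_\ell^{v_j}$'s) list guarantees that a single swap exposes a blocking pair exactly when both endpoints are selected --- forcing the selected set to be independent. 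If you want to salvage your route, you must (i) restate the characterization with the class-size qualifications (and handle unmatched agents in the blocking conditions), and (ii) actually exhibit the gadgets with explicit lists and carry out the case analysis you deferred.
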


\appendixproofwithstatement{thm:robust-ties-np-hard-d-unbounded}{\robustties}{
\begin{proof}%
  We reduce from \textsc{Independent Set}.
  Let $I=(G, k)$ be an instance of \textsc{Independent Set}.
  Further, let $V(G)=\{v_1,\ldots, v_n\}$ and $E(G)=\{e_1,\ldots, e_m\}$ denote the set of vertices and the set of edges in $G$ respectively.
  Without loss of generality, we assume that the number~$n$ of vertices in $V$ is at least three and the solution size~$k$ is at least two.
  We construct an instance of \pRMl{} with two sets of agents,~$U$ and $W$, and with the number of allowed swaps equal to $d=1$. %

  \myparagraph{Agent set\boldmath~$U$.} This set consists of the following $n+2m+2$~agents.
  \begin{align*}
    U & = V \cup E \cup F \cup A, \text{ where}\\
    V & = \{v_1,\ldots, v_n\}, \\
    E & = \{e_1,\ldots, e_m\}, \\
    F & = \{f_1,\ldots, f_m\}, \\
    A & = \{a_0,a_{1}\}.
  \end{align*}

  \myparagraph{Agent set\boldmath~$W$.} This set consists of the following $n+2m+2$~agents.
  \begin{align*}
    W & = T \cup S \cup E_V \cup B, \text{ where}\\
    T & = \{t_1,\ldots,t_{n-k}\}, \\
    S & = \{s_1,\ldots, s_k\}, \\
    E_V & = \{e^{v_i}_{\ell}, e^{v_j}_{\ell} \mid e_{\ell} \in E(G) \text { with } e_{\ell}=\{v_i,v_j\}\}, \\
    B & = \{b_0, b_{1}\}.
  \end{align*}
  Note that we use $v_i$ (resp.\ $e_\ell$) for both a vertex and its corresponding vertex agent (resp.\ an edge and its corresponding edge agent).
  It will, however, be clear from the context what we are referring to.
  
  The preference lists of these agents are defined as follows, where $[\star]$ means that the elements in~$\star$ are ranked in an arbitrary but fixed order, while~$(\star)$ means that the elements in~$\star$ are tied.
  The symbol~$\ldots$ at the end of each preference list denotes an arbitrary but fixed order of the remaining not mentioned agents.

  \myparagraph{Preference lists of the agents from \boldmath$U$.}
  \begin{alignat*}{3}
    \forall i \in [n] & \quad  &v_i &\colon (T) \succ b_0 \succ [\{e_{\ell}^{v_i} \mid e_\ell \in E(G) \text{ such that } v_i \in e_{\ell} \}] \succ (S)  \succ \ldots, \\
    \forall \ell \in [m] \text{ with } e_\ell=\{v_i,v_j\} & &e_{\ell} &\colon (e^{v_i}_{\ell}, e^{v_j}_{\ell}, b_0) \succ \ldots , \\
    \forall \ell \in [m] \text{ with } e_\ell=\{v_i,v_j\} \text{ and } i < j& &f_{\ell} &\colon e^{v_j}_{\ell}  \succ e^{v_i}_{\ell} \succ b_0 \succ \ldots,  \\
    & &a_0 &\colon b_0 \succ b_{1}  \succ \ldots,\\
    & &a_1 &\colon b_1 \succ b_{0}  \succ \ldots. 
  \end{alignat*}

  \myparagraph{Preference lists of the agents in \boldmath$W$.}  
  \begin{alignat*}{3}
    \forall i \in [n-k] & \quad  & t_i  &\colon (V) \succ a_0 \succ a_1 \succ \ldots, \\
    \forall i \in [k] & & s_i & \colon (V) \succ a_0 \succ a_1 \succ \ldots, \\
    \forall \ell \in [m] \text{ with } e_\ell=\{v_i,v_j\} & &e^{v_i}_{\ell} &\colon e_{\ell} \succ a_0 \succ f_\ell \succ v_{i} \succ \ldots ,\\
    & &e^{v_j}_{\ell} &\colon e_{\ell} \succ a_0 \succ f_\ell \succ v_{j} \succ \ldots , \\
      & &b_0 &\colon a_0 \succ a_1 \succ \ldots, \\
      & &b_1 &\colon a_1 \succ a_{0} \succ \ldots.
   \end{alignat*}

   We use $P$ to denote the above preference profile.
   Before we prove the correctness of our construction, we observe some properties which every stable matching must satisfy.

   \begin{claim}\label[claim]{claim:prop-robust-ties}
     Every stable matching~$M$ for $U$ and $W$ with respect to the initial preferences must satisfy the following.
     \begin{compactenum}
     \item Each agent~$t_i \in T$ must be matched with an agent from $V$.
     \item Each agent~$a_j \in A$ must be matched with $b_j$. 
     \item For each edge~$e_\ell \in E(G)$ with $e_\ell =\{v_i, v_j\}$, the agents~$e_\ell$ and $f_\ell$ must have $\{M(e_\ell), M(f_\ell)\} = \{e_\ell^{v_i}, e_{\ell}^{v_j}\}$.
     \item Each agent~$s_i \in S$ must be matched with an agent from $V$.
   \end{compactenum}
   \end{claim}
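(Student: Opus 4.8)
The plan is to establish the four properties in an order that respects their logical dependencies, repeatedly invoking that under the weak-stability notion for ties a pair $\{x,y\}$ blocks exactly when both $x$ and $y$ \emph{strictly} prefer each other to their assigned partners (counting ``unmatched'' as worst). I would prove them in the order (2), (3), then (1) and (4) together, since (3) needs (2), and (1),(4) need both.

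First I would prove property (2). Since $a_0$ ranks $b_0$ strictly first and $b_0$ ranks $a_0$ strictly first, the pair $\{a_0,b_0\}$ is a mutual strict top choice: if it were absent from $M$ then $a_0$ and $b_0$ would each strictly prefer the other to their partner (or to being unmatched), so $\{a_0,b_0\}$ would block; hence $M(a_0)=b_0$. The identical argument for $a_1,b_1$ gives $M(a_1)=b_1$. Consequently $a_0,a_1,b_0,b_1$ are unavailable to every other agent, a fact I will reuse throughout.

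Next I would prove property (3) for a fixed edge $e_\ell=\{v_i,v_j\}$; I expect this to be the main obstacle because of the three-way tie $(e_\ell^{v_i},e_\ell^{v_j},b_0)$ at the top of $e_\ell$'s list, which forces one to trace blocking pairs carefully. Since $b_0$ is already matched to $a_0$ by (2), I first argue $M(e_\ell)\in\{e_\ell^{v_i},e_\ell^{v_j}\}$: if $M(e_\ell)$ lay strictly below this top tie (or $e_\ell$ were unmatched), then, as $e_\ell^{v_i}$ (say) ranks $e_\ell$ strictly first and is not matched to it, the pair $\{e_\ell,e_\ell^{v_i}\}$ would block. By symmetry assume $M(e_\ell)=e_\ell^{v_i}$. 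Then $e_\ell^{v_j}$ has lost its top choice $e_\ell$, and its only strictly-better-than-$f_\ell$ options, namely $e_\ell$ and $a_0$, are both taken; thus its best remaining partner is $f_\ell$, whose own strict top choice is $e_\ell^{v_j}$. Hence $\{f_\ell,e_\ell^{v_j}\}$ would block unless $M(f_\ell)=e_\ell^{v_j}$, giving $\{M(e_\ell),M(f_\ell)\}=\{e_\ell^{v_i},e_\ell^{v_j}\}$. A consequence I will immediately reuse is that every agent of $E_V$ is matched inside $E\cup F$, so no $V$-agent is matched to an $E_V$-agent, and conversely $E$ and $F$ are unavailable to the agents of $V\cup T\cup S$.

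Finally I would prove (1) and (4) together. Combining (2) (so $A,B$ are mutually used up) and (3) (so $E,E_V,F$ are used up), the only partners still available to a $V$-agent lie in $T\cup S$, and the only partners still available to a $T$- or $S$-agent lie in $V$. Thus the restriction of $M$ to $V\cup T\cup S$ is an isolated, balanced bipartite sub-instance, with $|V|=n=|T\cup S|$ and complete mutual acceptability. Under weak stability such a sub-instance must be perfectly matched: every matched pair joins a $V$-agent to a $T\cup S$-agent, so equally many agents are unmatched on each side, and if a $V$-agent $v$ and a $(T\cup S)$-agent $x$ were both unmatched they would block. Hence every $t_i$ and every $s_i$ is matched to a $V$-agent, establishing (1) and (4). (Alternatively, (1) alone admits a one-line pigeonhole proof: if some $t_i$ were not matched into $V$, then avoiding a block with each $V$-agent would force an injection $V\hookrightarrow T$, impossible since $|V|=n>n-k=|T|$.)
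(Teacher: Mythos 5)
Your proof is correct and follows essentially the same route as the paper's: force $\{a_j,b_j\}$ via mutual strict top choices, use the unavailability of $b_0$ and $a_0$ to pin the $E$- and $F$-agents to $E_V$, and then a counting/pigeonhole argument over the remaining $V$ versus $T\cup S$ agents (the paper treats (1) and (4) separately where you merge them into one balanced sub-instance, but the idea is identical). The only nitpick is that your ``by symmetry'' step in (3) is not literally symmetric because $f_\ell$ strictly prefers $e_\ell^{v_j}$ to $e_\ell^{v_i}$; in the case $M(e_\ell)=e_\ell^{v_j}$ you additionally need to note that $e_\ell^{v_j}$, being matched to its top choice, cannot block with $f_\ell$, after which $\{f_\ell,e_\ell^{v_i}\}$ forces the remaining pair---a trivial fix.
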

   \begin{proof}
     \renewcommand{\qedsymbol}{(of
       \cref{claim:prop-robust-ties})~$\diamond$}

     The first statement is straight-forward because every vertex agent~$v_i \in V$ and every selector agent~$t_j\in T$ rank each other at the first position.

     Analogously, we obtain the second statement.

     Now, consider an arbitrary edge~$e_{\ell}\in E(G)$ and let $v_i$ and $v_j$ denote the endpoints of edge~$e_{\ell}$.
     Since agent~$a_0$ is already matched with $b_0$
     we can neglect them from the preference lists of $e_{\ell}, f_{\ell}, e_{\ell}^{v_i}$, and $e_{\ell}^{v_j}$.
     Consequently, one can verify that the partners of $e_{\ell}$ and $f_{\ell}$ must be from $\{e_{\ell}^{v_i}, e_{\ell}^{v_j}\}$.

    By the first three statement, there are $k$ agents left from $V$ who each must be matched with some agent from $S$ because every agent from $S$ ranks every agent from $V$ at the first position.
   \end{proof}

   We show that $G$ admits an independent set of size $k$ if and only if the profile~$P$ has a stable matching~$M$ that remains stable in each profile~$P'$ that differs from $P$ by at most one swap.

   For the ``only if'' direction, let $V'=\{v_{i_1},\ldots, v_{i_k}\}$ be an independent set of $k$ vertices with $i_1 < i_2 < \ldots < i_{k}$.
   For the sake of convenience, let $V\setminus V' = \{v_{j_1}, \ldots, v_{j_{n-k}}\}$ with
   $j_1 < j_2 < \ldots < j_{n-k}$.
   
   We claim that the following perfect matching~$M$ is stable in every profile which differs from the original one by at most one swap.
   \begin{align*}
     M = {}& \{\{a_0,b_0\}, \{a_1,b_1\}\} \cup \{\{\{s_{r}, v_{i_r}\} \mid v_{i_r} \in V'\}\} \cup{}\\
         & \{\{t_{r}, v_{j_r}\} \mid v_{j_r} \in V\setminus V'\} \cup{}\\
     & \{\{e_\ell, e^{v_i}_\ell\}, \{f_\ell, e^{v_j}\} \mid  e_\ell =\{v_i,v_j\} \text{ for some } e_\ell \in E(G) \text{ and } v_i \in V'\}\cup{}\\
     & \{\{e_\ell, e^{v_i}_\ell\}, \{f_\ell, e^{v_j}\} \mid  e_\ell =\{v_i,v_j\} \text{ for some } e_\ell \in E(G) \text{ and } \{v_i, v_j\} \cap V' = \emptyset \text{ and } i < j\}.
   \end{align*}
   Note that the partners of $S$ and $T$ can be of arbitrary order, and that the partners of the agents~$e_\ell$ and $f_\ell$ for which none of the endpoints of the corresponding edge~$e_\ell$ are in the independent set~$S$ can also be swapped.
   We fix this order for the sake of the simplicity of the reasoning.

   One can verify that $M$ is stable in the original profile~$P$.
   To see why it remains stable in every profile, denoted as $P'$, that differs from the original one by at most one swap,
   we observe the following.
   \begin{compactenum}
     \item No agent~$a_j$ from $A$ is involved in a blocking pair because for each agent~$y\neq b_j = M(a_j)$  other than $a_j$'s partner~$b_j$ it holds that
     if $a_j$ shall prefer $y$ to $b_j$ in $P'$,
     then this agent~$y$ must be $b_{1-j}$.
     However, agent~$b_{1-j}$ will still prefer her partner~$a_{1-j}$ to $a_j$ because $P$ and $P'$ differ by only one swap.

     \item Analogously, no agent~$b_j$ from $B$ is involved in a blocking pair.

     \item No agent~$z$ from $S\cup T$ is involved in a blocking pair because of the following.
     The partner of $z$ is an agent from $V$. 
     For each agent~$y \neq M(z)$ other than $z$'s partner~$M(z)$
     if agents~$z$ and $y$ would form a blocking pair in $P'$ then $z$ would have preferred $y$ to an agent from $M(z)$.
     To achieve this, however, 
     we need at least $|V|-1$ swaps, which is more than two.

     \item Analogously, no agent~$e_{\ell}$ from $E$ is involved in a blocking pair because of the following.
     The partner~$M(e_{\ell})$ of $e_{\ell}$ is an agent from $\{e_\ell^{v_i}, e_{\ell}^{v_{j}}\}$ with $e_{\ell}=\{v_i,v_j\}$. 
     For each agent~$y \neq M(e_{\ell})$ other than $e_{\ell}$'s partner~$M(e_{\ell})$
     if $e_{\ell}$ would form with $y$ a blocking pair in $P'$ then she must have preferred $y$ to
     her partner~$M(e_{\ell})$.
     To achieve this, we would need at least $2$ swaps.

     \item No agent~$f_{\ell}$ from $F$ is involved in a blocking pair because of the following.
     The partner of $f_{\ell}$ is an agent from $\{e_{\ell}^{v_i}, e_{\ell}^{v_j}\}$ with $e=\{v_i, v_j\}$.
     For each agent~$y\neq M(f_{\ell})$ other than $f_\ell$'s partner~$M(f_{\ell})$
     if agent~$f_{\ell}$ would form with $y$ a blocking pair of $M$,
     then she must have preferred $y$ to $M(f_{\ell})$.
     By the preference list of $f_{\ell}$ and by the definition of $M$ this agent~$y$ must be an agent~$e_\ell^{v}$ with $v\in e$ and $M(e_{\ell}^v)=e_{\ell}$.
     However, agent~$y$ will still prefer her partner~$e_\ell$ to $f_{\ell}$
     as the swap distance between $e_{\ell}$ and $f_{\ell}$ in the initial preference list of $e_{\ell}^{v}$ is two.
     
     \item Agents~$v_i$ and $e_{\ell}^{v_j}$ with $v_i \neq v_j$ \emph{cannot} form a blocking pair of $M$ in $P'$ because
     the swap distance between $e_{\ell}^{v_j}$ and $M(v_i)\in S\cup T$ in the initial preference list of $v_i$ is at least $|S|$, which is more than one.
   \end{compactenum}    
 
   Thus, the only possible blocking pairs would involve $v_i$ and $e_{\ell}^{v_i}$ with
   $v_i \in e_\ell$ for some $e_{\ell}\in E(G)$.
   Suppose, for the sake of contradiction, that $\{v_i, e_{\ell}^{v_i}\}$ is blocking $M$ in $P'$.
   This implies that $M(v_i)\in S$ and $M(e^{v_i}_\ell) = f_{\ell}$.
   However, by construction, $M(v_i)\in S$ implies that $v_i \in V'$ and $M(e^{v_i}_\ell) = f_{\ell}$ implies that $v_i \notin V'$---a contradiction.

   For the ``if'' direction, let $M$ be a stable matching that is stable in every profile that differs from the original one by one swap.
   We claim that $V'=\{v_i \mid M(v_i)\in S\}$ is an independent set of size $k$.
   Obviously $V'$ has $k$ vertices by our observation above that every agent~$v_i$ is matched to an agent that is either from $T$ or from $S$,
   and $|S|=k$.
   To show that $V'$ is an independent set, suppose for the sake of contradiction
   that $V'$ contains two adjacent vertices~$v_i$ and $v_j$ and let $e_{\ell}=\{v_i, v_j\}$ be the incident edge.
   This means that $M(v_i),M(v_j)\in S$.
   Then, consider the profile~$P'$ that differs from $P$ by one swap in the preference list of $e_\ell^{v_i}$, depicted as follows:
   \begin{align*}
     e_{\ell}^{v_i}\colon e_{\ell} \succ a_0 \succ v_i \succ f_\ell.
   \end{align*}
   Since $v_i$ prefers agent~$e_{\ell}^{v_i}$ to its partner which is from $S$,
   the stability of $M$ implies that $M(e^{v_i}_\ell)= e_\ell$; recall that $M(e^{v_i}_\ell)$ cannot be matched to $a_0$ as reasoned before.
   Consequently, $M(e^{v_j}_{\ell})= f_\ell$.
   However, consider the profile~$P''$ that differs from $P$ by one swap in the preference list of $e_\ell^{v_j}$, depicted as follows:
   \begin{align*}
     e_{\ell}^{v_j}\colon e_{\ell} \succ a_0 \succ v_j \succ f_\ell.
   \end{align*}
   Since $v_i$ prefers agent~$e_{\ell}^{v_j}$ to its partner which is from $S$,
   it follows that $M$ is not stable in $P''$ as we have just reasoned that $M(e_{\ell}^{v_j})=f_\ell$ but $v_j$ prefers $v_j$ to $f_\ell$ in $P''$--a contradiction.
 \end{proof}
 }

\section{\NStable Matchings}
We now present our results on the complexity of finding \nstable matchings which are perfect or within a given \egalcostn{} bound.
We start in \cref{sec:nstable-approx} by observing that all four problems variants of \nstability{} are NP-hard.
Indeed, we provide a stronger result, which says that under the standard complexity theory
assumption P${}\neq{}$NP the minimization variants of all considered problems do not admit a polynomial-time polynomial-factor approximation algorithm. In \cref{sec:nstable-param} we study the influence of the number of allowed swaps on the complexity of the problem variants.

\subsection{Classical and Approximation Hardness}\label{sec:nstable-approx}
\looseness=-1
\appendixsection{sec:nstable-approx}
To show hardness, we will focus on the so-called gap variants of our problems, and prove that these gap variants are NP-hard.
These gap problems can be solved by the corresponding approximation algorithms %
so that their NP-hardness will rule out polynomial-time approximation algorithms for our problem.
Loosely speaking, an $\alpha$-gap variant of some minimization problem~$\mathcal{Q}$
has, as input, a specific instance~$I$ of $\mathcal{Q}$ and a cost upper bound~$q \in \mathbb{N}$ %
and asks whether \begin{inparaenum}
  \item $I$ admits a solution of cost at most~$q$, or
  \item each solution for $I$ has cost at least $\alpha\! \cdot\! q$
  \end{inparaenum}%
  (without requirement on the answer when the optimum solution is in the ``gap'' interval $(q, \alpha\! \cdot\! q)$).
Note that, to decide between these two options we can use a factor-$\alpha$ approximation algorithm (if it exists), an algorithm that is guaranteed to find a solution of cost at most~$\alpha \!\cdot\! \opt$ where $\opt$ is the minimum cost. Hence, if the $\alpha$-gap problem is NP-hard, a polynomial-time factor-$\alpha$ approximation algorithm implies P${}={}$NP. 

\toappendix{
To make the presentation easier, we use the following decision-focused definition of approximation algorithms, which only solve the gap variant of an optimization problem. By the reasoning \ifshort in the main part, \else above, \fi
ruling out the existence of such approximation algorithms also rules out the existence the standard form approximation algorithms which produce solutions.
\begin{definition}\label[definition]{def:p-approx}
  Let $\poly\colon \mathds{N} \to \mathds{N}$ be a polynomial whose domains and co-domains are on the positive integers.
  An algorithm $\mathcal{A}$ is a \myemph{polynomial-time and $\poly$-approximation algorithm for \pGNSPMl~(\pGNSPM)}
  if
  for each preference profile~$P$ and each positive integer~$\globald \in \mathds{N}$, the algorithm~$\mathcal{A}$ runs in time~$|P|^{O(1)}$ and satisfies the following:
  \begin{inparaenum}[(1)]
    \item  if $P$ admits a \gns{$\globald$} and perfect matching,
    then $\mathcal{A}$ returns ``yes'',
    and
    \item if $P$ admits no \gns{$\poly(\globald) \cdot \globald$} and perfect matching,
    then $\mathcal{A}$ returns~``no''.
  \end{inparaenum}
  
  If such an algorithm exists,
  then we also say that \pGNSPM{} admits a \myemph{polynomial-time and polynomial-factor approximation algorithm}.
\end{definition}

An approximation algorithm for \pLNSPMl~(\pLNSPM) is defined analogously. For the variant where an additional objective is to achieve a given \egalcostn, we use an even weaker notion that allows for bi-criteria approximation.

\begin{definition}\label[definition]{def:p1-p2-approx}
  Let $\polyone, \polytwo\colon \mathds{N} \to \mathds{N}$ be two polynomials whose domains and co-domains are on the positive integers.
  An algorithm $\mathcal{A}$ is a \myemph{polynomial-time and $(\polyone,\polytwo)$-approximation algorithm for \pGNSEMl~(\pGNSEM)}
  if for each preference profile~$P$ and each positive integer~$\globald \in \mathds{N}$, the algorithm~$\mathcal{A}$ runs in time~$|P|^{O(1)}$ and satisfies the following:
  \begin{inparaenum}[(1)]
    \item  if $P$ admits a \gns{$\globald$} matching with \egalcostn{} at most $\egalcost$,
    then $\mathcal{A}$ returns ``yes'',
    and
    \item if $P$ admits no \gns{$\polyone(\globald) \cdot \globald$} matching with \egalcostn{} at most $\polytwo(\egalcost)\cdot \egalcost$,
    then $\mathcal{A}$ returns ``no''.
    \end{inparaenum}

    If such an algorithm exists,
    then we also say that \pGNSEM{} admits a \myemph{polynomial-time and polynomial-factor approximation algorithm}.
\end{definition}

Here, again an approximation algorithm for \pLNSEMl~(\pLNSEM) is defined analogously.
}

\newcommand{\nstableinapprox}{%
  For each $\Pi \in \{$\pGNSEM, \pGNSPM, \pLNSPM, \pLNSEM$\}$,
  $\Pi$ is NP-hard, and does not admit a polynomial-time polynomial-factor approximation algorithms, unless P${}={}$NP.
  For \pLNSPM{} and \pLNSEM{}, the statement holds even if $\locald = 1$.
}
\begin{theorem}\label[theorem]{thm:nearly-stable-inapproximable}
  \nstableinapprox
\end{theorem}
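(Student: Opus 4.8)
The plan is to prove all four statements with a single gap-producing reduction, most naturally from \pSAT{} (equivalently \pNAESAT{}), and to build the required gap \emph{into} the construction: since \pSAT{} carries no inherent inapproximability gap, the amplification must come from the rank distances we place in the preference lists. The guiding principle is that a satisfying assignment should correspond to a \emph{perfect}, low-\egalcostn{} matching that is repairable with very few swaps, whereas the absence of a satisfying assignment should force \emph{every} perfect (respectively, every low-cost) matching to contain a blocking pair whose repair is a polynomial factor more expensive. Throughout, I would use \cref{prop:r-maximal-d-bp-local-d} to translate \lnstabilitynopa{} into the clean rank-gap condition that for each blocking pair $\{u,v\}$ of the matching~$M$ one has $\min\{\rank_u(M(u))-\rank_u(v),\, \rank_v(M(v))-\rank_v(u)\}\le \locald$; this makes the soundness analysis a purely combinatorial statement about rank gaps.

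First I would treat \pLNSPM{} with $\locald=1$ as the core case. For each variable I would install a small gadget with exactly two ``cheap'' configurations, encoding the truth value, and for each clause a gadget that can be matched with no costly blocking pair exactly when at least one of its literals is set to true; the acceptability graph is made rich enough that a perfect matching always exists combinatorially, so that perfectness by itself is never the obstruction and only near stability encodes satisfiability. In the \textbf{completeness} direction, a satisfying assignment yields a specific perfect matching in which every unmatched pair either is non-blocking or has rank gap at most one on one side, i.e.\ the matching is \lns{1}; hence the ``yes'' answer holds for $\locald=1$ (and, reading off the partner ranks, with small \egalcostn).

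For \textbf{soundness} and the amplification, I would design the clause gadget so that an \emph{unsatisfied} clause forces, in every perfect matching, a blocking pair $\{u,v\}$ with \emph{both} $\rank_u(M(u))-\rank_u(v)$ and $\rank_v(M(v))-\rank_v(u)$ larger than a chosen threshold~$c$, obtained by inserting $\Theta(c)$ ``filler'' agents between the relevant positions in both lists (this is exactly the leverage illustrated by \cref{ex:egalitarian_cost,ex:perfectness}, where a single swap can move an agent past $\Omega(n)$ positions). By \cref{prop:r-maximal-d-bp-local-d} such a matching is then not \lns{c}, so the absence of a satisfying assignment implies that the instance admits no \lns{$\poly(1)$} perfect matching for the constant $\poly(1)=c$; choosing $c$ as the relevant polynomial factor rules out any polynomial-factor approximation already at $\locald=1$. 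For the \emph{global} variants I would instead replicate the penalised gadget $\Theta(n)$ times (or scale the filler blocks) so that an unsatisfied instance forces a total of $\gg\poly(\globald)\cdot\globald$ swaps across all gadgets, while a satisfying assignment still needs only $\globald$ swaps in total; this yields NP-hardness and inapproximability of \pGNSPM{} with a gap that grows with the instance.

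Finally, I would extend from the perfect variants to the egalitarian-cost variants \pLNSEM{} and \pGNSEM{} by attaching, for each agent, a private ``default'' partner that is acceptable but ranked far down: any matching that fails to realise the intended structure then incurs an \egalcostn{} penalty that is a polynomial factor above the optimum, so that the same completeness/soundness dichotomy now separates low-cost from high-cost matchings and rules out the bi-criteria approximation of \cref{def:p1-p2-approx}. The four cases thus share one construction, specialised by (i) which social objective (perfectness or \egalcostn) is measured and (ii) whether the swap budget is per-agent ($\locald$) or global ($\globald$). I expect the main obstacle to be the simultaneous calibration of the clause gadget: it must admit a \lns{1}, perfect, low-cost matching when a literal is true, yet provably force large rank gaps \emph{on both sides} of an induced blocking pair when the clause is unsatisfied, and these two requirements pull the rank distances in opposite directions; getting the filler-block sizes and the interface between variable and clause gadgets right---so that no unintended blocking pair of small rank gap can ``rescue'' an unsatisfied clause---will be the delicate, reduction-specific part of the argument.
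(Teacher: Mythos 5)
Your plan matches the paper's proof in every essential mechanism: the paper likewise uses a single padded construction to cover all four problems, likewise builds the approximation gap directly into the rank distances rather than inheriting it from the source problem, likewise forces perfectness in the egalitarian variants by appending a long block of low-ranked private partners to every preference list (its ``type-two'' auxiliary agents; see \cref{claim:NSPM<->NSEM}), and its soundness argument is exactly the rank-gap reasoning you extract from \cref{prop:r-maximal-d-bp-local-d}. The differences are two. First, the paper reduces from \textsc{Independent Set} rather than from \pSAT{}: selector agents force the choice of $k$ vertex agents and per-edge gadgets verify independence; your variable/clause gadgets would play the same roles, and either source problem works. Second---and this is where your proposal stops short---the calibration tension you flag at the end is resolved in the paper not by balancing opposing rank distances inside a single gadget, but by a two-tier padding scheme: blocks of $\globaldapprox = \globaldapproxbound+\localdapproxbound+1$ ``type-one'' auxiliary agents are inserted between every pair of consecutive relevant agents in the lists of \emph{all} agents except those (the sets $W$ and $E$ in the paper's notation) that must perform the one cheap swap in the yes-case. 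This makes every unwanted swap cost more than the entire budget on its own, so the soundness analysis never needs to argue that \emph{both} sides of a forced blocking pair have large gaps simultaneously; it only observes that the repair swap is unavailable within the per-agent or total budget. In particular, your replication idea for the global variant is unnecessary under this scheme, since a single over-budget padding block already suffices. The concrete gadget construction you defer is the genuinely laborious part of the proof, but nothing in your outline points in a direction that would fail.
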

\begin{proof}
  The NP-hardness will follow from the inapproximability results by setting the corresponding approximation factors to 1.
  Thus, we only need to show the inapproximability results,
  which are based on the same basic construction.
   We first give the details of the construction. Then, we prove that, on the instances resulting from the construction, approximability of \pGNSPM, \pGNSEM, \pLNSPM, or \pLNSEM\ implies polynomial-time solvability of all NP-complete problems.

  Let $\polyone,\polytwo\colon \mathds{N} \to \mathds{N}$ be two arbitrary polynomials.
  We will show non-existence of any polynomial-time and $\poly$\nobreakdash-approximation algorithm, using a reduction which introduces a gap in the near stability between an optimally nearly stable solution and any other nearly stable~solution.

  We reduce from \textsc{Independent Set}, which has, as input,
  an undirected graph~$G$ with vertex set~$V(G)$ and edge set~$E(G)$ and a positive integer~$k \in \mathds{N}$,
  and asks whether $G$ contains an \emph{independent set} of size~$k$, i.e.\ a $k$-vertex subset of~$V'\subseteq V(G)$ of pairwise non-adjacent vertices.
  Let $I=(G, k)$ be an instance of \textsc{Independent Set}.
  Let $V(G)=\{v_1,\ldots, v_n\}$ and $E(G)=\{e_1,\ldots, e_m\}$ denote the set of vertices and the set of edges in $G$, respectively.
  We interpret the edges as two-element subsets of $V(G)$.
  For each vertex $v_i \in V$, by $E(v_i)$ we denote the set of edges incident with vertex~$v_i$ in $G$.

  From $G$ we will construct a preference profile~$P$. The lower thresholds for the gap problems we are constructing are as follows.
  We define threshold for the number of swaps for a \gnsnopa{} matching as $\globald\coloneqq m+n$,
  the threshold for the number of swaps per agent of a \lnsnopa{} matching as $\locald\coloneqq 1$, 
  and the threshold for the egalitarian cost of a \gns{$\globald$} matching as $\egalcost \coloneqq k+ (\globaldapproxbound + \localdapproxbound+ 2)\cdot (3m+(2n+k)\cdot k + (2n-k)\cdot(n-k))$.
  For ease of notation, let 
 $\globaldapprox\coloneqq\globaldapproxbound+\localdapproxbound+1$, and $\egalcostapprox\coloneqq\egalcostapproxform$.

 \looseness=-1

  \proofparagraph{Construction.} We construct a profile $P$ as follows. We introduce the following disjoint sets of agents: $V, T, E, F$
  (men); $W, S, R, E_V$ (women); and two disjoint sets~$A\cup B$ and $C \cup D$ of auxiliary agents.
  Sets $V$ and $W$ will represent the vertices
  of~$G$, sets $R$, $S$, and~$T$ will force a selection of $k$
  vertices, and sets $E$, $E_V$, and~$F$ will ensure that the selected
  vertices are pairwise nonadjacent.  
  The auxiliary agents from $A \cup B$ enforce that only swaps of some specific agents are relevant
  while the auxiliary agents from $C \cup D$ require that each matching within some appropriate \egalcostn\ must be perfect.

  \myparagraph{The non-auxiliary agents.}
  Specifically, the non-auxiliary sets contain the following agents: 
  \ifshort
  $V $~$\coloneqq$ $\{v_i \mid v_i \in V(G)\}$,
  $T$  $\coloneqq$ $\{t_i \mid v_i \in V(G)\}$,
  $E$  $\coloneqq$  $\{e_{\ell} \mid e_\ell \in E(G)\}$, 
  $F$  $\coloneqq$  $\{f_{\ell} \mid e_\ell \in E(G)\}$,  
  $W$ $\coloneqq$  $\{w_i \mid v_i \in V(G)\}$,
  $S$  $\coloneqq$  $\{s_i \mid i \in [k]\}$, 
  $R$  $ \coloneqq$  $ \{r_i \mid i \in [n - k]\}$, and
  $E_V$  $\coloneqq$  $\{e_\ell^{v_i}, e_{\ell}^{v_j} \mid e_\ell = \{v_i,v_j\} \text{ for some edge~}e_\ell \in E(G)\}$.
 \else
 
  {\centering
    \begin{tabular}{l@{\,}l@{\,}ll@{\,}l@{\,}l}
      $V $ & $\coloneqq$ & $\{v_i \mid v_i \in V(G)\}$, \quad & $W$ & $\coloneqq$ & $\{w_i \mid v_i \in V(G)\}$, \\
      $T$ & $\coloneqq$ & $\{t_i \mid v_i \in V(G)\}$, & $S$ & $\coloneqq$ & $\{s_i \mid i \in [k]\}$, \\
           &&       & $R$ & $ \coloneqq$ & $ \{r_i \mid i \in [n - k]\}$, \\
      $E$ & $\coloneqq$ & $\{e_{\ell} \mid e_\ell \in E(G)\}$, &  $E_V$ & $\coloneqq$ & $\{e_\ell^{v_i}, e_{\ell}^{v_j} \mid e_\ell = \{v_i,v_j\} \text{ for some edge~}e_\ell \in E(G)\}$, and \\
      $F$ & $\coloneqq$ & $\{f_{\ell} \mid e_\ell \in E(G)\}$.
    \end{tabular}
   \par }

 \fi 
 \noindent Note that we use $v_i$ (resp.\ $e_\ell$) for both a vertex and its corresponding vertex agent (resp.\ an edge and its corresponding edge agent).
  It will, however, be clear from the context what we are referring to.
  The preference lists of the above agents are defined as follows (men are placed on the left and women on the right). For the sake of readability the non-auxiliary agents are omitted in each list, and we will describe them in detail later on.

  { \begin{tabular}{l@{\quad}l@{\quad}l}
      $\forall v_i \in V(G) \colon$ &
             $v_i \colon w_i \succ [\{e_\ell^{v_i} \mid e_\ell \in E(v_i)\}] \succ s_1  \succ \ldots \succ s_k$,
              & $w_i \colon v_i \succ t_i$, \\
       & $t_i \colon w_i \succ r_1 \succ \ldots \succ r_{n - k}$, & \\
      $\forall j \in [n - k] \colon $ & & $r_j \colon t_1 \succ \ldots \succ t_n$, \\
      $\forall j \in [k] \colon $ & & $s_j \colon v_1 \succ \ldots \succ v_n$, \\
                  \end{tabular}
                  
                  \begin{tabular}{l@{\quad}ll}
                    {$\forall e_\ell = \{v_i, v_j\} \in E(G) $ with $i < j\colon$}  %
       & $e_{\ell} \colon e_{\ell}^{v_i} \succ e_{\ell}^{v_j}$, \qquad & $e_{\ell}^{v_i} \colon e_{\ell} \succ v_i \succ f_{\ell}$,\\
       & $f_{\ell} \colon e_{\ell}^{v_i} \succ e_{\ell}^{v_j}$, & $e_{\ell}^{v_j} \colon e_{\ell} \succ v_j \succ f_{\ell}$.
    \end{tabular}
  }

  \myparagraph{The type-one auxiliary agents\boldmath~$A\cup B$.}  
  These agents ensure that only the swaps from the agents of $W\cup E$ are relevant.  We
  say that the auxiliary agents in $A \cup B$ are of \emph{type one}.
  For each agent~$x$ from $V\cup T \cup F \cup S \cup R \cup E_V$ and for each two consecutive agents~$y_1$ and $y_2$ in $x$'s preference list (as described above),
  we introduce $\globaldapprox$~auxiliary agents~$a_x^1(y_1,y_2),\ldots, a_x^{\globaldapprox}(y_1,y_2)$ to $A$
  and $\globaldapprox$~auxiliary agents~$b_x^1(y_1,y_2),\ldots, b_x^{\globaldapprox}(y_1,y_2)$ to $B$ with the following preference lists:
  \begin{inparaenum}[(i)]
    \item If $x\in V\cup T \cup F$, then for all $i \in [\globaldapprox]$ let the preference lists of $a^i_x(y_1,y_2)$ and $b^i_x(y_1,y_2)$ be
    $a^{i}_x(y_1,y_2)$ and    
    $a_x^i(y_1,y_2) \succ x$, respectively,
      and add all $\globaldapprox$~auxiliary agents~$b_x^{i}(y_1,y_2)$ between agents~$y_1$ and $y_2$ in the preference list of $x$.
      \item Otherwise, meaning that $x\in R\cup S \cup E_V$, then for all $i \in [\globaldapprox]$  let the preference lists of $a^i_x(y_1,y_2)$ and $b^i_x(y_1,y_2)$ be
      $ a^{i}_x(y_1,y_2) \succ x$ and    
      $ a_x^i(y_1,y_2)$, respectively,
      and add all $\globaldapprox$~auxiliary agents~$a_x^{i}(y_1,y_2)$ between agents~$y_1$ and $y_2$ in the preference list of~$x$.
  \end{inparaenum}
  In total, we have

  {\centering
    $|A|=|B|=\globaldapprox\cdot \big(\sum_{i=1}^{n}(|E(v_i)|+k)+n\cdot (n-k) + m + (n-k)\cdot (n-1)+k\cdot (n-1)+2\cdot 2m\big)= \globaldapprox\cdot (2n^2-n+7m).$\par
  }

  \myparagraph{Type-two auxiliary agents\boldmath~$C \cup D$.}
  To enforce that every matching within \egalcostn~$\egalcostapproxbound$ %
  must be perfect, we introduce type-two auxiliary agents and append them to the preference list of each
  non-auxiliary agent and each type-one auxiliary agent.
  Formally, for each agent~$x \in V\cup T \cup E \cup F \cup A \cup W \cup S \cup R \cup E_V \cup B$,
  we introduce $\egalcostapprox$~auxiliary agents~$C_x\coloneqq \{c_{x}^1$, $\ldots, c_x^{\egalcostapprox}\}$ and add them to $C$, and $\egalcostapprox$~auxiliary agents~$D_x\{d_{x}^1$, $\ldots, d_x^{\egalcostapprox}\}$ and add them to $D$; recall that $\egalcostapprox = \egalcostapproxform$.
  The preference lists of these agents are as follows:
   \begin{inparaenum}[(i)] %
    \item If $x\in V \cup T \cup E \cup F \cup A$, then for all $i \in [\egalcostapprox]$ let the preference lists of $c^i_x$ and $d^i_x$ be
    $ d^{i}_x \succ [D_x\setminus \{d^i_x\}]$ and    
    $ c^{i}_x \succ [C_x\setminus \{c^i_x\}] \succ x$, respectively,
      and append all $\egalcostapprox$~auxiliary agents~$d_x^{i}$ to the end of the preference list of $x$.
      \item   Otherwise, that is, $x\in W \cup S \cup R \cup E_V \cup B$, then for all $i \in [\egalcostapprox]$ let the preference lists of $c^i_x$ and $d^i_x$ be
      $ d^{i}_x \succ [D_x\setminus \{d^i_x\}]\succ x$ and    
      $ c^{i}_x \succ [C_x\setminus \{c^i_x\}]$, respectively,
      and append all $\egalcostapprox$~auxiliary agents~$c_x^{i}$ to the end of the preference list of $x$.
  \end{inparaenum}
  In total, we have $|C|=|D|=\egalcostapprox\cdot (|V|+|T|+|E|+|F|+|A|+|W|+|R|+|S|+|E_V|+|B|)$.
  Observe that every matching with \egalcostn{} at most $\egalcostapproxbound$ must assign to every type-two auxiliary agent a partner that is also of type two, as otherwise the \egalcostn{} induced by such two agents would be at least $\egalcostapprox-1=\egalcostapproxbound+1 > \egalcostapproxbound$.

  This completes the construction of the profile~$P$. Clearly, it can
  be constructed in polynomial~time. 
  \appendixcorrectnessproofwithstatement%
  {The correctness proof can be found in \cref{proof:thm:nearly-stable-inapproximable}}%
  {thm:nearly-stable-inapproximable}{\nstableinapprox}{
  \proofparagraph{Correctness of the construction.}
  In the following, we show that the existence of any polynomial-time $\polyone$-factor approximation algorithm for \pGNSPM\ or \pLNSPM\
  or  any polynomial-time $(\polyone,\polytwo)$-factor approximation algorithm for \pGNSEM\ or \pLNSEM\  implies P${}={}$NP.
  More precisely, we claim the following.
  \begin{claim}\label[claim]{claim:IS<->GNSPM}
    \begin{compactenum}[(1)]
      \item\label{IS->GLNSPEM} If $G$ admits a $k$-vertex independent set, then $P$ admits a \gns{$\globald$} and perfect matching, which is also \lns{$\locald$} and has \egalcostn{} at most $\egalcost$.
      \item\label{GNSPM->IS} If $P$ admits a \gns{$\globaldapproxbound$} and perfect matching,
      then $G$ admits a $k$-vertex independent set.
      \item\label{LNSPM->IS}  If $P$ admits a \lns{$\localdapproxbound$} and perfect matching,
      then $G$ admits a $k$-vertex independent set.
    \end{compactenum}
  \end{claim}

  \begin{proof}\renewcommand{\qedsymbol}{(of
      \cref{claim:IS<->GNSPM})~$\diamond$}
    To show the first statement, assume that $V^\star \subseteq V$ is a
  $k$-vertex independent set of $G$. Construct a perfect matching~$M$ for $P$ as
  follows. Let $i_1 < i_2 < \ldots < i_k$ be the indices of the
  vertices in~$V^\star$, that is, for each $z \in [k]$ we have
  $v_{i_z} \in V^\star$. Similarly, let
  $j_1 < j_2 < \ldots < j_{n - k}$ be the indices of the vertices
  in~$V \setminus V^\star$. Matching~$M$ contains the following pairs.
  \begin{compactenum}[(i)]
  \item For each $z \in [k]$,
  match $\{v_{i_z}, s_z\} \in M$ and $\{t_{i_z}, w_{i_z}\} \in M$.

  By the construction of the preference lists,  the \egalcostn{} of the pair~$\{v_{i_z},s_z\}$ is at most $(\globaldapprox+1)\cdot (n+k+k)$; recall that, for each agent~$x$ who is not an auxiliary
    agent and who is not in $W \cup E$, we have placed exactly $(\globaldapprox + 1)$
    type-one auxiliary agents between each pair of non-auxiliary agents in $x$'s preference list.
    The \egalcostn{} of the pair $\{t_{i_z}, w_{i_z}\}$ is one.
  In total, these $2\cdot k$~pairs contribute at most $k+(\globaldapprox+1)\cdot (n+2\cdot k)\cdot k$~units to the \egalcostn{}.
  \item For each $z \in [n - k]$, match $\{v_{j_z}, w_{j_z}\} \in M$
    and $\{t_{j_z}, r_{z}\} \in M$.

    The \egalcostn{} of the pair~$\{v_{j_z}, w_{j_z}\}$ is zero,
    while the \egalcostn{} of the pair~$\{t_{j_z}, r_z\}$ is at most $(\globaldapprox+1)\cdot (n-k+n)$.
 In total, these $2\cdot (n-k)$~pairs contribute at most $(\globaldapprox+1)\cdot (n-k+n)\cdot (n-k)$~units to the \egalcostn{}.
    \item Further, for each edge~$e_\ell \in E(G)$, choose an endpoint from $e_\ell\cap V^\star$ or an arbitrary endpoint of $e_{\ell}$ if
    $e_\ell \cap V^\star = \emptyset$. Say we have picked an endpoint
    with index~$i$. Then, match
    $\{e_{\ell}, e_{\ell}^{v_i}\}, \{f_{\ell}, e_{\ell}^{v_j}\} \in M$, where
    $j$ is the index of the other endpoint of~$e_\ell$, different
    from~$i$.

    The \egalcostn{} of these two pairs is at most $(\globaldapprox+1)\cdot 3$.
    \item Finally, for each type-one auxiliary agent~$a^z_x(y_1,y_2)\in A$,
    match her with its counter-part from~$B$, that is, match $\{a_x^z(y_1,y_2), b_x^z(y_1,y_2)\} \in M$. For each type-two auxiliary agent~$c^z_x \in C$, match it with its counter-part from $D$, that is, match her $\{c_x^z, d_x^z\} \in M$. 

  There is no \egalcostn{} for these pairs.
  \end{compactenum}
  This concludes the definition of~$M$, which is clearly a perfect matching.
  One can verify that the \egalcostn{} of $M$ is at most $k+(\globaldapprox+1)\cdot \big((n+2\cdot k)\cdot k + (2n-k)\cdot (n-k) + 3m\big)=\egalcost$

  It remains to show that $M$ is \gns{$\globald$} and \lns{$\locald$}. We claim
  that after performing the following swaps, indeed, $M$ is stable:
  For each edge agent~$e_\ell \in E$ let $e_{\ell}^{v_i}$ and $e_{\ell}^{v_j}$ be the two vertex agents in its preference lists with $i < j$, 
  if $M(e_\ell)=e_\ell^{v_j}$, then swap the order of these two agents~$e_\ell^{v_i}$ and $e_{\ell}^{v_j}$ in $e_{\ell}$'s preference list; recall that there are no other agents between these two agents.
  For each agent~$w_i$ which is not matched to $v_i$, swap $v_i$ with $t_i$ in $w_i$'s preference list; note that there are exactly $k$ such agents.
  Clearly, each agent has performed at most $1=\locald$ swap and the total number of performed swaps is at most $m + k \le \globald$.
  Denote by~$P'$ the profile that results from these swaps.

  Next, we show that $M$ is stable for~$P'$.
  Clearly, each auxiliary agent receives its most preferred agent, and, hence, no auxiliary agent can be involved in a blocking pair.
  Similarly, each agent~$w_i\in W$ receives its most preferred agent (after the $k$~swaps performed on $W$), and, hence, no agent from $W$ can be involved in a blocking pair.
  Since each edge agent~$e_\ell$ is matched to its most preferred agent, no blocking pair can involve any edge agent~$e_\ell$.
  Since the partner~$M(e_\ell)$ of each edge agent~$e_\ell$ already obtains its most preferred agent, no blocking pair can involve any agent~$M(e_\ell)$.
  Further, since these agents, $M(e_\ell)$, are the only agents
  which may be preferred by any agent~$f_\ell$ to $M(f_\ell)$, no blocking
  pair can involve any~$f_\ell$; recall that we have just reasoned that no auxiliary agent is involved in a blocking pair.
  Furthermore, each vertex agent~$v_i$ whose corresponding vertex does not belong to the independent set, i.e.\ $v_i \in V \setminus V^\star$ is matched to its most preferred agent.
  Similarly, each agent~$t_{z}$ with $M(t_{z})= w_{z}$ cannot be involved in a blocking pair because she already obtains its most preferred agent.

  A potential blocking pair must hence involve an agent from 
  $\{t_{j_z}\mid v_{j_z}\in V\setminus V^{\star}\} \cup V^{\star}$.

  Consider an agent~$t_{j_z}$ with $v_{j_z}\in V \setminus V^{\star}$.
  By the construction of matching~$M$, it follows that $M(t_{j_z})=r_z$.
  Since neither $w_{j_z}$ nor any auxiliary agent can be involved in a blocking pair,
  by the preference list of $t_{j_z}$ it follows that $t_{j_z}$ could only form a blocking pair with an agent~$r_{z'}$ such that $z< z'$.
  However, this agent~$r_{z'}$ prefers its partner~$M(r_{z'})=t_{j_{z'}}$ to $t_{j_z}$.
  Hence, no agent~$t_{j_z}$ can be involved in a blocking pair.
  
  Consider an agent~$v_{i_z}$ which corresponds to a vertex from the independent set~$V^{\star}$.
  By the construction of matching~$M$, it follows that $M(v_{i_z})=s_z$.
  By our reasoning above, $w_{i_z}$ will not form a blocking pair with $v_{i_z}$ as it already obtains its most preferred partner.
  Agent~$v_{i_z}$ prefers agent~$s_{z'}$ to its partner~$s_{z}$ only if $z'< z$.
  However, for each $z' < z$, agent~$s_{z'}$ prefers its partner~$M(s_{z'})=v_{i_{z'}}$ to agent~$v_{i_z}$ (recall that $i_{z'}<i_{z}$).
  Thus, no agent from $S$ will form with $v_{i_z}$ a blocking pair.
  Any blocking pair must thus be of the form
  $\{v_i, e^{v_i}_\ell\}$ where $v_i \in V^{\star}$ 
  and $e^{v_i}_\ell \in E(v_i)$. However, since $v_i \in V^\star$, for each
  of its incident edges, say $e_\ell$, we have matched $e_\ell^{v_i}$ to
  its most preferred agent~$e_\ell$. Thus, indeed, there is no blocking pair, showing that $M$ is \gns{$\globald$} and \lns{$\locald$}.

  For the second statement of \cref{claim:IS<->GNSPM}, assume that $M$ is a \gns{$\globaldapproxbound$} and perfect matching for $P$
  and let $P'$ be a profile that results from~$P$ by making at most $\globaldapproxbound$~swaps such that~$M$ is stable in~$P'$.
  
  Recall that for each agent~$y$ which is either non-auxiliary or an auxiliary agent of type one we have introduced $2\cdot \egalcostapprox$ type-two auxiliary agents, contained in $C_y$ and in $D_y$.
  Observe that either $C_y$ or $D_y$ finds only the agents from the other set acceptable.
  Hence, by the perfectness of $M$, the partners of all agents from $C_y$ are exactly the agents from $D_y$.
  Consequently, we can ignore all type-two auxiliary agents in the preference lists of the remaining agents.
  In particular,  for each pair of  type-one auxiliary agents~$a_x^{z}(y_1,y_2)$ and $b^z_x(y_1,y_2)$, one of them finds only the other agent acceptable (ignoring the type-two auxiliary agents).
  Again, by the perfectness of $M$, we infer that each~$a_x^{z}(y_1,y_2)$ is matched to its counter-part~$b_x^z(y_1,y_2)$.
  Hence, from now on, when discussing the partners of a non-auxiliary agent, we only need to consider the non-auxiliary agents in its preference~list.
  
  Recall that there are at least~$\globaldapprox$ type-one auxiliary agents between each non-auxiliary agents in the preference list of each agent from $V\cup R\cup S\cup T\cup E_v\cup F$.
  By the fact that $\globaldapprox=\globaldapproxbound+\localdapproxbound+1$, it is impossible to perform $\globaldapproxbound$~swaps so as to switch the positions of two non-auxiliary agents in the preference list of an agent from~$V \cup R \cup S \cup T \cup E_V \cup F$.
  Thus, the only swaps performed to obtain $P'$ are without loss of generality in the preference lists of agents in~$E \cup W$ and are only within the non-auxiliary agents.

  Let
  $V' = \{v_i \in V(G) \mid M(v_i)  \in S\}$.
  We claim that $V'$ is a $k$-vertex independent set in~$G$. First,~$V'$ has cardinality~$k$ because $M$ is perfect and
  the only remaining acceptable partners to every agent in~$S$ are those in~$V$.
  Suppose, for the sake of contradiction, that $V'$ contains two adjacent vertices~$v_i$ and $v_j$ and let $e_{\ell}=\{v_i,v_j\}$ be their incident edge.
  By the definition of $V'$, it follows that both $v_i$ and $v_j$ are assigned partners from $S$.
  Since $v_i$ prefers $e_{\ell}^{v_i}$ to every agent from $S$
  and since $e_\ell^{v_i}$ prefers only $e_\ell$ to $v_i$,
  by the stability of $M$ in $P'$, it follows that $M(e_{\ell}^{v_i})=e_{\ell}$; recall that no swaps are performed in between any two non-auxiliary agents in the preference lists of the agents from $V\cup E_V$.
  Analogously, it must hold that $M(e_{\ell}^{v_j})=e_{\ell}$---a contradiction to $M$ being a matching.

  The reasoning for the third statement is analogous to what we have done for the second statement.
  Instead of arguing about the total number of swaps, we only need to argue that the number of swaps changed per agent is $\localdapproxbound$ which is
  strictly smaller than by~$\globaldapprox$. 
  Thus, it is still impossible to change the positions of two non-auxiliary agents in any the preference list of any non-auxiliary agent from $V\cup R\cup S\cup T\cup E_v\cup F$.
  \end{proof}

  The next claim establishes a close connection between the \egalcostn{} and the perfectness of a matching.

  \begin{claim}\label[claim]{claim:NSPM<->NSEM}
      If $M$ is a matching with \egalcostn{} of at most $\egalcostapproxbound$, then this matching must be perfect.
  \end{claim}

  \begin{proof}\renewcommand{\qedsymbol}{(of
      \cref{claim:NSPM<->NSEM})~$\diamond$}
    Assume that $M$ has \egalcostn{} at most $\egalcostapproxbound$. %
    It is straight-forward to see that this $M$ must be perfect as otherwise the cost of leaving one agent unmatched is equal to the length of this agent's preference list,
    exceeding the budget~$\egalcostapproxbound$ because the length of each agent's preference list is at least $\egalcostapprox>\egalcostapproxbound+1$.
    \end{proof}

  Now, we continue with our correctness proof.
  \myparagraph{Inapproximability of \pGNSPM.} Suppose, for the sake of contradiction, that there exists a $\polyone$-approximation algorithm~$\mathcal{A}$ for \pGNSPM, running in polynomial-time.
  Then, we can use algorithm~$\mathcal{A}$ to decide \textsc{Independent Set} in polynomial time, showing P${}={}$NP.
  Given an arbitrary instance~$I=(G,k)$ of \textsc{Independent Set},
  we construct profile~$P$ and define $\globald$ as described above and let $\mathcal{A}$ run on $(P,\globald)$. 
  If $I$ is a yes-instance, then by the first implication of \cref{claim:IS<->GNSPM}, it follows that $P$ admits a \gns{$\globald$} and perfect matching, and
  by \cref{def:p-approx}, $\mathcal{A}$ return ``yes''.
  If $I$ is a no-instance, then by the contrapositive of the second implication of \cref{claim:IS<->GNSPM},
  it follows that $P$ does not admit a \gns{$\globaldapproxbound$} and perfect matching.
  By \cref{def:p-approx},~$\mathcal{A}$ returns ``no''.

  \myparagraph{Inapproximability of {\pGNSEM}.} Again, suppose, for the sake of contradiction, that there exists a $(\polyone,\polytwo)$-approximation algorithm~$\mathcal{A}$ for \pGNSEM, running in polynomial-time.
  Then, we can use algorithm~$\mathcal{A}$ to decide \textsc{Independent Set} in polynomial time, showing P${}={}$NP, as follows.
  Given an arbitrary instance~$I=(G,k)$ of \textsc{Independent Set},
  we construct profile~$P$ and define $\globald$, $\locald$, and $\egalcost$ as described above and let $\mathcal{A}$ run on $(P,\globald,\egalcost)$. 
  If $I$ is a yes-instance, then by the first implication of \cref{claim:IS<->GNSPM} it follows that $P$ admits a \gns{$\globald$} matching with \egalcostn{} at most $\egalcost$, and
  by \cref{def:p1-p2-approx}, $\mathcal{A}$ return ``yes''.
  If $I$ is a no-instance, then by the contrapositive of the second implication of \cref{claim:IS<->GNSPM} 
  it follows that $P$ does not admit a \gns{$\globaldapproxbound$} perfect matching.
  This implies that $P$ does not admit a \gns{$\globaldapproxbound$} matching with \egalcostn{} at most $\egalcostapproxbound$
  as otherwise, by \cref{claim:NSPM<->NSEM}, we will have a \gns{$\globaldapproxbound$} and perfect matching for $P$---a contradiction. 
  By \cref{def:p1-p2-approx},~$\mathcal{A}$ returns ``no''.
  
  \myparagraph{Inapproximability of \pLNSPM.} Suppose, for the sake of contradiction, that there exists a $\polyone$-approximation algorithm~$\mathcal{A}$ for \pLNSPM, running in polynomial-time.
  Then, we can use algorithm~$\mathcal{A}$ to decide \textsc{Independent Set} in polynomial time, showing P${}={}$NP, as follows.
  Given an arbitrary instance~$I=(G,k)$ of \textsc{Independent Set},
  we construct profile~$P$ and define $\globald$, $\locald$, and $\egalcost$ as described above and let $\mathcal{A}$ run on $(P,\locald)$. 
  If $I$ is a yes-instance, then by the first implication of \cref{claim:IS<->GNSPM},~$P$ also admits a \lns{$\locald$} and perfect matching.
  Thus, by \cref{def:p-approx}, $\mathcal{A}$ return ``yes''.
  If $I$ is a no-instance, then by the contrapositive of the third implication from \cref{claim:IS<->GNSPM},~$P$ does not admit a \lns{$\localdapproxbound$} perfect matching.
  By \cref{def:p-approx},~$\mathcal{A}$ returns~``no''.

  \myparagraph{Inapproximability of \pLNSEM.} Suppose, for the sake of contradiction, that there exists a $(\polyone,\polytwo)$-approximation algorithm~$\mathcal{A}$ for \pLNSEM, running in polynomial-time.
  Then, we can use algorithm~$\mathcal{A}$ to decide \textsc{Independent Set} in polynomial time, showing P${}={}$NP as follows.
  Given an arbitrary instance~$I=(G,k)$ of \textsc{Independent Set},
  we construct profile~$P$ and define $\globald$, $\locald$, and $\egalcost$ as described above and let $\mathcal{A}$ run on $(P, \locald, \egalcost)$. 
  If $I$ is a yes-instance,
  then by the first implication of \cref{claim:IS<->GNSPM} it follows that $P$ admits a \lns{$\locald$} matching with \egalcostn{} at most $\egalcost$, and
  by \cref{def:p1-p2-approx}, $\mathcal{A}$ return ``yes''.
  If $I$ is a no-instance, then by the contrapositive of the third implication from \cref{claim:IS<->GNSPM},~$P$ does not admit a \lns{$\localdapproxbound$} perfect matching.
  By the contrapositive of \cref{claim:NSPM<->NSEM},~$P$ does not admit a \lns{$\localdapproxbound$} matching with \egalcostn{} at most $\egalcost$.
  By \cref{def:p1-p2-approx}, $\mathcal{A}$ returns~``no''.}
\end{proof}

\subsection{Parameterized Complexity}\label{sec:nstable-param}
\appendixsection{sec:nstable-param}
We now investigate the influence of three natural parameters on the complexity
of obtaining nearly stable matchings: ``total number~$\globald$ of swaps'',  ``number~$\unmatched$ of initially unmatched agents'', and ``number~$\matched$ of initially matched agents''; the latter two will be defined below. Note that
\cref{thm:nearly-stable-inapproximable} implies that even only one
swap leaves the problems pertaining to \lnsnopa{}
matchings NP-hard. This is different from the globally nearly stable
variants, for which simple polynomial-time algorithms for a
constant number of swaps exist. However, we show that removing the dependence on
the number of swaps in the exponent in the running time is impossible
unless FPT${}={}$W[1].

\newcommand{\xpswaps}{%
  \pGNSPM\ and \pGNSEM\ are solvable in $n^{O(\globald)}$ time. 
}
\begin{proposition}%
  \label[proposition]{prop:xp-swaps}
  \xpswaps
\end{proposition}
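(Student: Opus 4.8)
The plan is to enumerate all profiles $P'$ that lie within swap distance $\globald$ of the input profile $P$, and for each such $P'$ to check in polynomial time whether it admits a stable matching with the desired extra property (perfectness for \pGNSPM, small \egalcostn\ for \pGNSEM). This is sound because, by \cref{def:near_stability}, a matching is \gns{\globald} exactly when it is stable for some $P'$ with $\tau(P,P')\le \globald$; thus, as $P'$ ranges over all profiles within distance $\globald$ and $M$ ranges over the stable matchings of each such $P'$, we visit precisely the family of \gns{\globald} matchings, and it suffices to optimize within each $P'$ separately.

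First I would bound the number of relevant profiles. Every $P'$ with $\tau(P,P')\le \globald$ is reachable from $P$ by a sequence of at most $\globald$ adjacent swaps (the minimum number of adjacent swaps equals the swap distance). At each step an adjacent swap is specified by choosing one of the $2n$ agents and one of at most $n-1$ adjacent positions in its current list, giving $O(n^2)$ choices. Hence the number of such swap sequences, and a fortiori the number of distinct reachable profiles, is at most $\sum_{i=0}^{\globald}(O(n^2))^i = n^{O(\globald)}$, and these can be generated by a depth-first exploration over swap sequences. Note that all such $P'$ share the acceptable sets of $P$ (otherwise the swap distance is infinite), so ranks, matchings, and the partition of \cref{prop:SMI-matched-agents-the-same} behave uniformly.

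Next, for a fixed $P'$ I would solve the inner optimization. For \pGNSPM\ I would compute one stable matching of $P'$ by the Gale/Shapley algorithm in $O(n^2)$ time; by \cref{prop:SMI-matched-agents-the-same} all stable matchings of $P'$ match the same agents, so $P'$ yields a perfect \gns{\globald} matching if and only if this single matching is perfect. For \pGNSEM\ the subtlety is that stability is with respect to $P'$ while the \egalcostn\ is measured in the \emph{original} profile $P$. I would handle this through the rotation poset of $P'$: the stable matchings of $P'$ correspond to its closed subsets of rotations (\cref{prop:rotations+stable-matchings}), and eliminating a rotation $\rho=((u_0,w_0),\ldots,(u_{r-1},w_{r-1}))$ always replaces the fixed pairs $\{u_i,w_i\}$ by $\{u_i,w_{i+1}\}$, so the induced change $\Delta_\rho$ in the \egalcostn\ \emph{measured in $P$} is a constant depending only on $\rho$. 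Since the set of matched agents is common to all stable matchings of $P'$, the unmatched agents' contribution is also constant, so the $P$-\egalcostn\ of the matching given by a closed subset $S$ equals a fixed base value plus $\sum_{\rho\in S}\Delta_\rho$, a modular function over the poset. Minimizing it over closed subsets is a minimum-weight closed-set problem, which I would solve using the totally unimodular LP given by the closed-subset constraints \eqref{eq:closed-subset-cons}--\eqref{eq:box-cons} together with the objective $\sum_\rho \Delta_\rho x_\rho$, exactly as in \cref{cor:d-robust-perfect-egal-poly}, in polynomial time, and then compare the optimum to $\egalcost$.

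Combining the two steps yields total running time $n^{O(\globald)}\cdot \poly(n)=n^{O(\globald)}$, as claimed. The bound on the number of candidate profiles is routine; the main obstacle is the \egalcostn\ variant, where one must resist computing a minimum-\egalcostn\ stable matching of $P'$ in its own preferences and instead verify that the \emph{$P$-}\egalcostn\ remains modular over the rotation poset of the \emph{shifted} profile $P'$, so that the standard minimum-weight-closed-set/TU-LP machinery applies verbatim.
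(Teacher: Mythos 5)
Your proposal is correct and follows essentially the same route as the paper, whose proof sketch likewise enumerates all $n^{O(\globald)}$ profiles within swap distance $\globald$ and runs the standard polynomial-time subroutines (Gale/Shapley for perfectness, the rotation-poset/minimum-weight-closed-subset machinery of Irving, Leather, and Gusfield for the egalitarian variant) on each. Your explicit treatment of the fact that the \egalcostn{} is measured in $P$ rather than in $P'$ — observing that it is still a modular function over the rotation poset of $P'$, so the same closed-subset optimization applies — is a careful elaboration of a point the paper's one-line sketch leaves implicit, but it does not change the underlying argument.
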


\appendixproofwithstatement{prop:xp-swaps}{\xpswaps}{
\begin{proof}[Proof sketch]
  Iterate over all $\binom{n^2}{\globald}$ possibilities for making
  $\globald$ swaps and check for each of the resulting profiles using
  the well-known polynomial-time
  algorithms~\cite{GaleShapley1962,IrLeGu1987} whether it admits a
  stable matching which is perfect, or satisfies the required bound on
  the \egalcostn.
\end{proof}
}

A substantial improvement on the above rather trivial
$n^{O(\globald)}$-time algorithm would imply a major breakthrough, as the
following theorem shows.

\newcommand{\globalwonehardswaps}{%
  \pGNSPM{} and \pGNSEM{} are W[1]-hard with respect to the number~$\globald$ of swaps.
  Moreover, they both do not admit any $n^{o(\globald)}$-time algorithm unless the Exponential Time Hypothesis fails.
}
\begin{theorem}%
  \label{thm:global-w1hard-swaps}
  \globalwonehardswaps
\end{theorem}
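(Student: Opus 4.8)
The plan is to give a parameterized reduction from \textsc{Multicolored Clique} (equivalently \textsc{Multicolored Independent Set}), which is W[1]-hard with respect to the number~$k$ of color classes and admits no $n^{o(k)}$-time algorithm unless the \ETH{} fails. The design goal that simultaneously yields both claims of the theorem is to produce an instance whose swap budget satisfies $\globald = \Theta(k)$, i.e.\ is \emph{linear} in the source parameter; then a hypothetical $n^{o(\globald)}$-time algorithm for \pGNSPM{} or \pGNSEM{} would decide $k$-clique in $n^{o(k)}$~time, contradicting the \ETH, while the bound $\globald\le c\cdot k$ makes the reduction a parameterized reduction and hence establishes W[1]-hardness.

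First I would build, for each color class~$V_i$, a \emph{selection gadget} admitting a canonical perfect matching that is \gns{$0$} (i.e.\ stable) except for exactly one blocking pair, such that a single swap repairs this blocking pair and the choice of \emph{which} swap to perform encodes the choice of a vertex $v_i\in V_i$. The $k$ selection gadgets then consume exactly $k$ swaps in total and commit the matching to a $k$-tuple of vertices, one per class. The combinatorial constraint (pairwise adjacency) is enforced \emph{structurally}, through perfectness rather than through further swaps, in the spirit of the edge gadgets of \cref{thm:nearly-stable-inapproximable}: for every pair of classes $(i,j)$ I would attach interface agents whose only perfect completion is consistent with the selected vertices precisely when $v_i$ and $v_j$ are adjacent. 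If the selection is a clique, the matching can be made stable using only the $k$ selection swaps; otherwise either no perfect matching exists or an unavoidable blocking pair appears whose repair would require moving two agents past $\Omega(n)$ others, exceeding the remaining budget of~$0$. This reduces "$P$ has a \gns{$\globald$} perfect matching" to "$G$ has a multicolored clique".

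To obtain the egalitarian-cost variant \pGNSEM{} from the same construction, I would append to every agent the \emph{type-two auxiliary agents} used in \cref{thm:nearly-stable-inapproximable}, so that any matching of egalitarian cost below a suitable threshold~$\egalcost$ must be perfect; setting $\egalcost$ to the cost of the canonical matching then renders the \pGNSEM{} instance equivalent to the \pGNSPM{} instance. Since the construction is polynomial and $\globald = \Theta(k)$, both the W[1]-hardness and, via the $n^{o(k)}$ clique bound, the $n^{o(\globald)}$ running-time lower bound follow for both problems at once.

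The main obstacle I anticipate is the interface-gadget design: making each pairwise-adjacency check cost \emph{zero} swaps while remaining self-stabilizing. In \cref{thm:nearly-stable-inapproximable} each edge gadget cost one swap per edge to fix an orientation-induced blocking pair, which would inflate the budget to $\Theta(m)$ and destroy the linear dependence on~$k$ that the \ETH{} bound needs. The delicate point is therefore to insert buffer agents so that, in the correct (clique) orientation, every interface gadget is already stable with no swaps, whereas any inconsistent orientation is either infeasible for a perfect matching or produces a blocking pair provably irreparable within the $\Theta(k)$ budget. Establishing the two matching implications — completeness (a clique yields a \gns{$\globald$} perfect matching) and soundness (a \gns{$\globald$} perfect matching yields a clique) — while keeping each selection at unit swap cost is where the technical weight of the proof lies.
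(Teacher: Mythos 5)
Your plan has the same architecture as the paper's proof: a reduction from a $k$-parameterized independent-set-type problem with swap budget linear in $k$ (the paper uses plain \textsc{Independent Set} with $\globald=2k$; your choice of \textsc{Multicolored Clique} is immaterial for both the W[1]-hardness and the \ETH{} bound), selection gadgets in which rematching the initially unmatched selector agents costs a fixed number of swaps per chosen vertex, adjacency constraints enforced for free through perfectness, type-two auxiliary agents to transfer the result to \pGNSEM, and the running-time lower bound read off from the linear parameter transfer. You have also correctly isolated the one place where this differs from \cref{thm:nearly-stable-inapproximable}: there the edge gadgets each consume a swap, which is fatal here.

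The gap is that the two gadgets carrying the entire technical content are not constructed, and your own text defers exactly the steps where the proof could fail. For the record, the paper resolves them as follows. The selection gadget for $v_i$ is a path $v_i$--$x_i$--$w_i$--$y_i$ of mutual preferences together with selector agents $s_z$ (acceptable to $v_i$) and $t_z$ (acceptable to $y_i$) and two global buffer agents $h_1,h_2$ inserted into the lists of $y_i$ and $v_i$; every stable matching of the initial profile pairs $\{v_i,x_i\}$ and $\{w_i,y_i\}$ and leaves $S\cup T$ unmatched, and the only way to absorb $s_z$ and $t_z$ is to rematch along the path, breaking \emph{two} mutual-first-choice pairs and hence costing exactly two swaps (the buffers rule out one-swap repairs), so the budget $2k$ is consumed with zero slack --- this tight accounting is what makes the soundness direction work, and a one-swap-per-selection gadget would need an equally airtight counting argument that you have not supplied. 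The zero-swap edge gadget for $e_\ell=\{v_i,v_j\}$ consists of four agents $e_\ell,f_\ell,e_\ell^{y_j},f_\ell^{v_i}$ with two perfect completions, one ``covering'' each endpoint; each completion is stable with no swaps unless the covered endpoint is selected, and if both endpoints are selected then (since the exhausted budget forbids any further swaps in the lists of $V\cup Y\cup E_Y\cup F_V$) an irreparable blocking pair such as $\{e_\ell^{y_j},y_j\}$ is forced. Until you exhibit concrete preference lists with these properties and verify both implications, the argument is a correct outline of the paper's proof rather than a proof.
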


  \appendixproofwithstatement{thm:global-w1hard-swaps}{\globalwonehardswaps}{
  \begin{proof}
  We first show that \pGNSPM{} is W[1]-hard for $\globald$ and refutes $n^{o(\globald)}$-running time algorithms.
  Then, we show how to adapt the proof to show an analogous result for \pGNSEM.
  To show the results for \pGNSPM{}, we provide a polynomial-time
  reduction from the W[1]-complete \textsc{Independent Set} problem,
  parameterized by the solution size~$k$~\cite{CyFoKoLoMaPiPiSa2015},
  and set the parameter to $\globald=2k$.
  Let
  $(G, k)$ be an instance of \textsc{Independent Set} where we seek
  for an independent set of size~$k$ in the $n$-vertex, $m$-edge graph
  $G$, with $V(G) = \{v_1, \ldots, v_n\}$ and
  $E(G) = \{e_1, \ldots, e_m\}$.
  We construct a preference profile~$P$ with two disjoint sets of agents,
  $A$ and $B$, each consisting of five groups and a dummy agent:
  $A \coloneqq T \cup V \cup W \cup E \cup E_Y \cup \{h_1\}$
  and 
  $B \coloneqq S \cup X \cup Y \cup F \cup F_V \cup \{h_2\}$.
  Note that we use the symbols~$v_i$ (resp.\ $e_\ell$) for both vertices and agents (resp.\ for both edges and agents).
  It will, however, be clear from the context what we are referring to when we use them.  
  The two dummy agents~$h_1$ and $h_2$ are used to make performing some swaps non-beneficial.
  
  \myparagraph{Agent sets\boldmath~$T$ and $S$.}
  For each $z \in [k]$, introduce two \emph{selection agents~$t_z$ and $s_z$} add them to
  $T$ and $S$, respectively.
  These agents will be unmatched in every stable matching of~$P$ and matching them will
  force a selection of $k$~vertices from~$G$ into an independent set.
  Their acceptable agents are a subset of the vertex agents that we
  introduce as follows.

  \myparagraph{Agent sets\boldmath~$V$, $W$, $X$, and $Y$.}
  For each vertex $v_i \in V(G)$, introduce four agents~$v_i$, $w_i$, $x_i$, and $y_i$,
  and add them to the sets~$V$, $W$, $X$, and $Y$, respectively.
  For each $i \in [n]$, these agents will form a path
  $v_i, x_i, w_i, y_i$ in the acceptability graph.
  The basic idea is that, in the initial profile, \emph{every} stable matching must match agent~$v_i$ to agent~$x_i$ and agent~$w_i$ to $y_i$.
  As we will see, such matchings are imperfect since the agents from~$S$ are unmatched in every stable matching. 
  To obtain a perfect matching, we must match some selection agent~$s_z \in S$ to some vertex agent~$v_i$, selecting the corresponding vertex~$v_i$ into a solution for the input graph.
  This will incur two swaps to make the resulting matching stable.
  Below, we introduce edge agents and add them to the preference lists of agents~$v_i$ and $y_i$, in order to ensure that the selected vertices induce an
  independent set.

  \myparagraph{Agent sets\boldmath~$E$, $E_{Y}$, $F$, and $F_V$.}
  For each edge~$e_\ell \in E(G)$, denote the endpoints of
  $e_\ell$ by $v_{i}$ and $v_{j}$ such that $i < j$.
  Introduce four \emph{edge agents}~$e_\ell$, $e_{\ell}^{y_j}$, $f_{\ell}$, and $f_{\ell}^{v_i}$,
  and add them to $E$, $E_{Y}$, $F$, and $F_{V}$, respectively.

  The preference lists of the agents are constructed as follows.
  Here, for some set~$Z$, the notation~$[Z]$ means an arbitrary but fixed linear order of~$Z$.
  
  \myparagraph{Preference lists of the agents.}

  \noindent \begin{tabular}{@{}r@{}l@{}r@{}l@{}ll}
    \quad $h_1\colon$& $h_2 \succ [Y]$, & $h_2 \colon$ & $h_1 \succ [V]$,\\[1ex]
    \multicolumn{2}{l}{$\forall z \in [k]$,}\\
     \quad $t_z \colon$ & $y_1 \succ \!\ldots\! \succ y_n$,  &  $s_z \colon$& $v_1  \succ\!\ldots\! \succ v_n$.\\[1ex]
    \multicolumn{2}{l}{$\forall i \in [n]$,}\\
    \quad $v_i \colon$& \multicolumn{3}{@{}l}{${x_i \succ h_2 \succ [\{f^{v_i}_\ell\! \mid \!e_\ell \!=\! \{v_i, v_j\}\!\in\! E(G) \text{ with } i \!<\! j\}] \succ s_1 \succ \!\ldots\! \succ s_k}$,}\\
     \quad $w_i \colon$& $y_i \succ x_i$,   &  $x_i \colon$ & $v_i \succ w_i$,\\[1ex]
    & & $y_i  \colon$& $\mathmakebox[0pt][l]{w_i \succ h_1 \succ [\{e^{y_i}_\ell \!\mid\! e_\ell \!=\! \{v_r, v_i\}\!\in \!E(G) \text{ with } r \!<\! i\}] \succ t_1 \succ \!\ldots\! \succ t_k,}$ \\
     \multicolumn{2}{l}{$\forall \ell \in [m] \text{ with }e_{\ell} = \{v_i, v_j\}\text{ and } i < j$,} \\
     \quad  $e_\ell   \colon$& $f_\ell^{v_i} \succ f_{\ell}$,& $f_{\ell} \colon$ & $e_{\ell} \succ e_\ell^{y_j}$,\\ 
    \quad  $e_\ell^{y_j}   \colon$ &$f_\ell \succ y_j \succ f^{v_i}_{\ell}$, & $f^{v_i}_{\ell} \colon$ & $e^{y_j}_{\ell} \succ v_i \succ e_\ell$.\\
            \end{tabular}

  \myparagraph{Agent sets\boldmath~$T$ and $S$.}
  These agents will be unmatched in every stable matching of~$P$ and matching them will
  force a selection of $k$~vertices from~$G$ into an independent set.
  Their acceptable agents are a subset of the vertex agents that we
  introduce as follows.
  
    \myparagraph{Agent sets\boldmath~$V$, $W$, $X$, and $Y$.}
    For each vertex $v_i \in V(G)$, introduce four agents~$v_i$, $w_i$, $x_i$, and $y_i$,
  and add them to the sets~$V$, $W$, $X$, and $Y$, respectively.
  For each $i \in [n]$, these agents will form a path
  $v_i, x_i, w_i, y_i$ in the acceptability graph.
  The basic idea is that, in the initial profile, \emph{every} stable matching must match agent~$v_i$ to agent~$x_i$ and agent~$w_i$ to $y_i$.
  As we will see, such matchings are imperfect since the agents from~$S$ are unmatched in every stable matching. 
  To obtain a perfect matching, we must match some selection agent~$s_z \in S$ to some vertex agent~$v_i$, selecting the corresponding vertex~$v_i$ into a solution for the input graph.
  This will incur two swaps to make the resulting matching stable.
  Below, we introduce edge agents and add them to the preference lists of agents~$v_i$ and $y_i$, in order to ensure that the selected vertices induce an
  independent set.
  \smallskip
  
\noindent  \cref{fig:global-w1hard-swaps} depicts the crucial part of the induced acceptability graph for an edge~$e_{\ell}=\{v_i, v_j\}\in E(G)$ with $i<j$.
  The weights at both sides of the edges denote the ranks of the respective endpoint towards the other endpoint.

  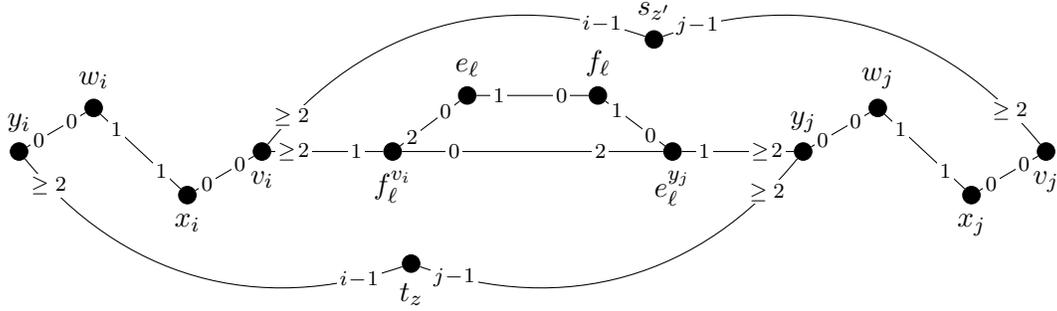
\begin{figure}[t]
    \centering
    \begin{tikzpicture}
      \def \xss {9ex}
      \def \xs {4.5ex}
      \def \ss {3.5ex}
      \node[agent] (yi) {};
      \node[agent, right = 2*\xss  of yi] (vi) {};
      \node[agent, right = \xs  of yi, yshift=\ss] (wi) {};
      \node[agent, left = \xs  of vi, yshift=-\ss] (xi) {};

      \node[agent, right=\xss of vi] (fvi) {};
      \node[agent, right=\xs of fvi, yshift=\xs] (e) {};
      \node[agent, right=\xss of e] (f) {};
      \node[agent, right=\xs of f,yshift=-\xs] (eyj) {};

      \node[agent, right = \xss  of eyj] (yj) {};
      \node[agent, right = 2*\xss  of yj] (vj) {};
      \node[agent, right = \xs  of yj, yshift=\ss] (wj) {};
      \node[agent, left = \xs of vj, yshift=-\ss] (xj) {};

      \foreach \x  in {i, j} {
        \foreach \y / \pos in {y/above,w/above,x/below,v/below} {
          \node[\pos = 0pt of \y\x] {$\y_{\x}$};
        }
      }
      \foreach \x / \y / \pos in {fvi/{f_{\ell}^{v_i}}/below,eyj/{e_{\ell}^{y_j}}/below,e/{e_{\ell}}/above, f/{f_{\ell}}/above} {
          \node[\pos = 0pt of \x] {$\y$};        
        }

        \foreach \i in {i, j} {
          \foreach \s / \t  / \x / \y in {y/w/0/0,w/x/1/1,x/v/0/0}{ 
            \draw (\s\i) edge node[pos=0.2, fill=white, inner sep=1pt] {\scriptsize $\x$}node[pos=0.76,fill=white,inner sep=1pt] {\scriptsize $\y$} (\t\i);
          }
        }

          \foreach \s / \t  / \x / \y in {vi/fvi/{\ge\!2}/1,fvi/e/2/0,fvi/eyj/0/2,e/f/1/0,f/eyj/1/0,eyj/yj/1/{\ge\!2}}{ 
            \draw (\s) edge node[pos=0.2, fill=white, inner sep=1pt] {\scriptsize $\x$}node[pos=0.76,fill=white,inner sep=1pt] {\scriptsize $\y$} (\t);
          }

          \gettikzxy{(yi)}{\yix}{\yiy};
          \gettikzxy{(yj)}{\yjx}{\yjy};
          \node[agent] at (\yix*0.5 + \yjx*0.5, \yiy-\xss) (T) {};
          \node[below = 0pt of T] {$t_z$};
          \draw (yi) edge[bend right = 35] node[pos=0.08, fill=white, inner sep=1pt]  {\scriptsize $\ge 2$} node[pos=0.9, fill=white, inner sep=1pt]  {\scriptsize $i\!-\!1$}(T);
       \draw (T) edge[bend right = 35] node[pos=0.9, fill=white, inner sep=1pt]  {\scriptsize $\ge 2$} node[pos=0.08, fill=white, inner sep=1pt]  {\scriptsize $j\!-\!1$}(yj);

          \gettikzxy{(vi)}{\vix}{\viy};
          \gettikzxy{(vj)}{\vjx}{\vjy};
          \node[agent] at (\vix*0.5 + \vjx*0.5, \viy+\xss) (S) {};
          \node[above = 0pt of S] {$s_{z'}$};
          \draw (vi) edge[bend left = 35] node[pos=0.08, fill=white, inner sep=1pt]  {\scriptsize $\ge 2$} node[pos=0.9, fill=white, inner sep=1pt]  {\scriptsize $i\!-\!1$}(S);
       \draw (S) edge[bend left = 35] node[pos=0.9, fill=white, inner sep=1pt]  {\scriptsize $\ge 2$} node[pos=0.08, fill=white, inner sep=1pt]  {\scriptsize $j\!-\!1$}(vj);

     \end{tikzpicture}
     \caption{Part of the acceptability graph of the profile constructed in the proof of \cref{thm:global-w1hard-swaps}.}
     \label[figure]{fig:global-w1hard-swaps}
  \end{figure}

  To complete the construction, define the total number of swap as $\globald = 2k$.
  Clearly, the construction can be done in polynomial time.
  Now observe that every stable matching from the constructed profile does not match exactly $2k$ agents, namely those from $T\cup S$.
  
  To show that our construction is indeed a parameterized reduction it remains to show that 
  $G$ admits a $k$-vertex independent set if and only if there exists a preference profile~$P'$ with $\tau(P, P')\le d =2k$
  which admits a perfect stable matching~$M$.

  For the ``only if'' part, assume that there exists a $k$-vertex independent
  set~$V' \subseteq V$ in~$G$.
  We define the preference profile~$P'$ by performing the following $\globald=2k$~swaps which involve the agents from $W \cup X$ that correspond to the vertices from the independent set.
  For each $z\in [k]$, swap the two agents~$y_i$ and $x_i$ in the preference list of agent~$w_i$, and
    swap the agents~$v_i$ and $w_i$ in the preference list of agent~$x_i$.

  Now, we construct the following perfect matching~$M$ for $P'$.
  Let $V' = \{v_{i_1}, v_{i_2}, \ldots, v_{i_k}\}$, where
  $i_1 < i_2 < \ldots < i_k$.
  \begin{enumerate}
    \item Put $\{h_1,h_2\}\in M$.
    \item For each~$z \in [k]$, put
  $\{v_{i_z},s_z\}, \{w_{i_z}, x_{i_z}\}, \{t_z, y_{i_z}\} \in M$.
  \item   For each~$i \in [n] \setminus \{i_1, i_2, \ldots, i_k\}$, put
  $\{v_i, x_i\}, \{w_i, y_i\} \in M$.
  \item For each edge~$e_{\ell}\in E(G)$, let $v_i$ and $v_j$ be the two endpoints of edge~$e_\ell$ with $i < j$, and do the following.
  Recall that we have created two agents with the names~$f_\ell^{v_i}$ and $e_{\ell}^{v_j}$.
  If $v_i\in V'$ belongs to the independent set, implying that $v_j \in V\setminus V'$,
  then put $\{e_\ell, f_\ell\}, \{e_{\ell}^{y_i}, f_{\ell}^{v_j}\} \in M$.
  Otherwise, that is, if $v_i \notin V'$, then put $\{e_{\ell}, f_{\ell}^{v_i}\}, \{e_\ell^{y_j}, f_\ell\}\in M$.
\end{enumerate}
  Clearly, $M$ is perfect. We claim that $M$ is also stable for $P'$.

  Suppose, for the sake of contradiction, that $p$ is a blocking pair of $M$ for the profile~$P'$.
  First, observe that $p$ cannot involve $h_1$, $h_2$, $w_{i}$, or $x_{i}$ for any $i \in [n]$ as these agents already obtain their most preferred agents in $P'$.
  For the same reason, $p$ cannot involve any agent~$v_{i}$ for some~$i \in [n] \setminus \{i_1, i_2, \ldots,  i_k\}$. 
  Further,~$p$ cannot involve an agent~$s_z$ or an agent~$t_z$ for any $z \in [k]$ because of the following.
  For each agent~$c$ such that agent~$s_z$ prefers $c$ to~$M(s_z) = v_{i_z}$ we have that $c$ equals either some agent~$v_i$ with $i \in [n] \setminus \{i_1, i_2, \ldots, i_k\}$, 
  which already obtains her most preferred partner~$M(v_i)=x_i$, or $c$ equals some agent~$v_{i_r}$ with $r < z$, which prefers
  her partner $M(v_{i_r})=s_{i_r}$ to agent~$s_{i_z}$.
  Using a similar reasoning, we thus obtain that $p$ can neither involve an agent~$t_z$, $z\in [k]$.
  Moreover, $p$ cannot involve two edge agents which correspond to two different edges or one vertex agent and one edge agent such that the corresponding vertex and edge are not incident to each other.
  Combining all of the above observations, we infer that $p$ either (a) involves two edge agents which correspond to the same edge or (b) a vertex agent and an edge agent such that the corresponding vertex and edge are not incident to each other.

  For Case (a), observe that in each pair of mutually acceptable edge
  agents that correspond to the same edge, one receives its
  most-preferred partner in $M$ with respect to~$P'$. Hence, Case (b) must hold.

  Let $e_{\ell}$ be the edge corresponding to the edge agent involved
  in $p$ and let $v_i$ and $v_j$ be the two endpoints of
  edge~$e_{\ell}$ with $i < j$. Thus, $p=\{e_{\ell}^{y_j}, y_j\}$ or
  $p=\{v_i, f_{\ell}^{v_i}\}$.
  
  If $p=\{e_{\ell}^{y_j}, y_j\}$, by the definition of blocking pairs,
  it follows that $M(e_{\ell}^{y_j})= f^{v_i}_{\ell}$ and
  $M(y_j) = t_z$ for some $z \in [k]$. However, by the definition of
  $M$, from $M(y_j) = t_z$ we infer that $v_j \in V'$ and from $M(e_{\ell}^{y_j})= f^{v_i}_{\ell}$ we infer $v_j \in V \setminus V'$, a contradiction. 

  Analogously, if $p=\{v_i, f_{\ell}^{v_i}\}$, by the definition of
  blocking pairs, it follows that $M(v_i)=s_{z}$ for some~$z\in [k]$
  and that $M(f_{\ell}^{v_i}) = e_\ell$. However, by our definition of
  $M$, from $M(v_i)=s_{z}$ we infer $v_i \in V'$ and from
  $M(f_{\ell}^{v_i}) = e_\ell$ we infer $v_i \in V \setminus V'$, a
  contradiction.

  Hence, indeed, $M$ is stable in $P'$.

  For the ``if'' part, assume that there exists a perfect matching~$M$ for
  profile $P$ and there exists a preference profile~$P'$ with $\tau(P,P')\le d=2k$ such that $M$
  is stable for $P'$.
  By the perfectness of $M$ there are $i_1,\ldots, i_k \in [n]$ such that for each $z\in [k]$
  we have that $M(v_{i_z}) \in S$. 
  We show that the vertex subset~$V' \coloneqq \{v_{i_1}, \ldots, v_{i_k}\}$ is a
  $k$-vertex independent set in~$G$.

  First of all, we claim the following.
  \begin{claim}\label{claim:S-T-same-vertex-subset}
    For each agent~$v_i\in V$ it holds that $M(v_{i}) \in S$ if and only if $M(y_i) \in T$.
    Moreover, no agent in $V\cup Y \cup E_Y \cup F_V$ changes her preference list in $P'$.
  \end{claim}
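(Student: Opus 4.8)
The plan is to show that perfectness rigidly constrains how each ``path gadget'' $v_i,x_i,w_i,y_i$ can be matched, and then to run a tight swap-counting argument against the budget $\globald=2k$ that forces every swap to sit in the lists of $X\cup W$. First I would record the rigidity: since $x_i$ finds only $v_i,w_i$ acceptable and $w_i$ finds only $x_i,y_i$ acceptable, perfectness of $M$ yields for each $i\in[n]$ a dichotomy. Either $M(x_i)=v_i$, in which case $w_i$ is forced to $y_i$, so $M(v_i)=x_i$ and $M(y_i)=w_i$ (call $i$ \emph{internal}); or $M(x_i)=w_i$, in which case neither $v_i$ nor $y_i$ can use $x_i$ or $w_i$, so $M(v_i)\in\{h_2\}\cup F_V\cup S$ and $M(y_i)\in\{h_1\}\cup E_Y\cup T$ (call $i$ \emph{external}). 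In particular $M(v_i)\neq x_i$ holds exactly when $M(y_i)\neq w_i$, so $v_i$ is matched off its path precisely when $y_i$ is.

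Let $\mathrm{Ext}$ be the set of external indices. Because $s_z$ (resp.\ $t_z$) only ranks agents of $V$ (resp.\ of $Y$), perfectness matches all of $S$ to $k$ distinct agents of $V$ and all of $T$ to $k$ distinct agents of $Y$; hence the index sets $\{i:M(v_i)\in S\}$ and $\{i:M(y_i)\in T\}$ each have size exactly $k$ and, by the dichotomy, are both contained in $\mathrm{Ext}$, so $|\mathrm{Ext}|\ge k$. Next I would observe that for every external $i$ both pairs $\{v_i,x_i\}$ and $\{w_i,y_i\}$ block $M$ in the original profile $P$ (each of $x_i,w_i$ ranks its internal neighbour first, and each of $v_i,y_i$ ranks its internal neighbour first), so each pair must be repaired by a swap lying in the list of one of its two endpoints. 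These endpoints are disjoint across the two pairs and across distinct external indices, so the repairs consume at least $2|\mathrm{Ext}|$ pairwise-distinct swaps.

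The crux is a cheapest-repair estimate that pins down where these swaps go, and this is exactly where the dummies $h_1,h_2$ earn their keep. Repairing $\{v_i,x_i\}$ means either $M(x_i)=w_i\succ^{P'}_{x_i}v_i$, costing one swap in $x_i$'s length-two list, or $M(v_i)\succ^{P'}_{v_i}x_i$; but in $v_i$'s list the agent $h_2$ sits strictly between the top choice $x_i$ and every member of $F_V\cup S$, so the latter costs at least two swaps. Symmetrically, repairing $\{w_i,y_i\}$ costs one swap in $w_i$'s list or at least two in $y_i$'s list, with $h_1$ providing the padding. Hence a two-swap repair of an external index is possible only by using the single cheap swap in $x_i$ and the single cheap swap in $w_i$. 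Since the budget is exactly $2k$ while the repairs already cost at least $2|\mathrm{Ext}|\ge 2k$, all these inequalities become equalities: $|\mathrm{Ext}|=k$, every external index is repaired with exactly its two cheap swaps in $x_i$ and $w_i$, and no further swaps remain. From $|\mathrm{Ext}|=k=|\{i:M(v_i)\in S\}|$ together with the inclusions we get $\{i:M(v_i)\in S\}=\mathrm{Ext}=\{i:M(y_i)\in T\}$, proving the first statement; and since all $2k$ swaps lie in lists of $X\cup W$, no agent of $V\cup Y\cup E_Y\cup F_V$ changes its list, proving the second.

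One loose end I would have to close explicitly is the possibility that a dummy is matched to a path, which would offer a cheap single-swap repair on the $v_i$-side: if $M(h_2)=v_{i^\ast}$ (equivalently $M(h_1)\neq h_2$), then $i^\ast$ is external yet $M(v_{i^\ast})=h_2\notin S$, so $\mathrm{Ext}$ strictly contains $\{i:M(v_i)\in S\}$ and $|\mathrm{Ext}|\ge k+1$, forcing the repairs to cost at least $2k+2>2k$, a contradiction. Thus $M(h_1)=h_2$ is forced and the ``$M(v_i)=h_2$'' escape never occurs. I expect this counting-against-the-dummies step to be the main obstacle, since it is what rules out spending swaps on the $V$- or $Y$-side and is therefore precisely what the second statement requires.
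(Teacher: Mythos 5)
Your proof is correct and follows essentially the same route as the paper's: perfectness forces the path gadgets to break in pairs, each broken gadget requires one swap near $v_i/x_i$ and one near $w_i/y_i$, and the dummies $h_1,h_2$ make the $v_i$- and $y_i$-side repairs cost at least two swaps, so the budget $2k$ pins all swaps into the lists of $X\cup W$ and forces the two index sets to coincide. Your bookkeeping via the set $\mathrm{Ext}$ of broken gadgets is a slightly cleaner way to dispose of the $M(h_2)=v_{i^\ast}$ escape than the paper's inline argument, but the underlying counting is identical.
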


  \begin{proof}\renewcommand{\qedsymbol}{(of
      \cref{claim:S-T-same-vertex-subset})~$\diamond$}
    We define two subsets~$I_1 \coloneqq\{{i} \in [n] \mid M(v_{i}) \in S\}$
    and $I_2\coloneqq\{{i'}\in[n] \mid M(y_{i'}) \in T\}$.
    To show the first statement, it suffices to show that $I_1=I_2$.
    Clearly, $|I_1|=|I_2|=k$ because $|S|=|T|=k$ and $M$ is a perfect matching.

    If we can show that for each $i\in I_1\cup I_2$ there are exactly two distinct swaps  in
    $P'$ in comparison to~$P$, exactly one swap in $x_i$'s preference list and exactly one swap in $w_i$'s preference list,
    then by the swap budget $\globald=2k$ and by the cardinalities of $I_1$ and $I_2$,
    it follows that $I_1 = I_2$.

    Now, consider an index~$i \in I_1$; the case when $i \in I_2$ is symmetric and omitted.
    By our definition of $I_1$, we have that $M(v_i)\in S$.
    Since $M$ is perfect, it follows that $\{x_i, w_i\} \in M$.
    Since $x_i$ and $v_i$ are each other's most preferred agent in profile~$P$, at least one swap occurs in the preference lists of $x_i$ and $v_i$ to make $M$ stable for $P'$; otherwise they would form a  blocking pair.
    Similarly, since $w_i$ and $y_i$ are each other's most preferred agent in $P'$,
    at least one swap occurs in the preference lists of $w_i$ or $y_i$ to make $M$ stable for $P'$.
    Since $|I_1|=k$, meaning that there are at least $k$ distinct agents in $V$ matched to agents in $S$, the preference lists of $x_i$ and $v_i$ are affected indeed by exactly one swap (otherwise there would be more than $2k$ swaps in total).
    Analogously, exactly one swap occurs in the preference lists of $w_i$ and $y_i$.
    By our swap budget, it follows that only agents in $V \cup X \cup W \cup Y$ may have a different preference list in $P'$ compared to $P$.
    This, in particular, implies that $M(h_1)=M(h_2)$ as otherwise we need at least one more swap to make $M$ stable in $P'$.
    
    Observe that at least one agent, namely~$h_2$, is between $v_i$'s partner~$M(v_i)\in S$ and $x_i$.
    Thus, it takes more than one swap to make $M$ stable in $P'$ if we change the preference list of~$v_i$ and not the preference list of $x_i$.
    Analogously, at least one agent, namely~$h_1$, is between $y_i$'s partner~$M(y_i)$ and $w_i$; recall that we have just reasoned that $M(h_1)=h_2$.
   Thus, it also takes more than one swap to make $M$ stable in $P'$ if we change the preference list of $y_i$ and not that of $w_i$
   Summarizing, to make $M$ stable for $P'$, there is exactly one swap in the preference list of $x_i$
   and there is exactly one swap in the preference list of~$w_i$.

   The second statement follows directly from our swap budget and from the above reasoning that for each $i \in I_1$, there is exactly one swap in the preference list of $x_i$
   and there is exactly one swap in the preference list of~$w_i$ that are performed for $P$ to obtain $P'$.
  \end{proof}

  To show that $V'$ is indeed an independent set, suppose, towards a contradiction,
  that $V'$ contain two adjacent vertices~$v_{i}$ and $v_j$ with $i < j$.
  Let $e_{\ell} = \{v_{i}, v_{j}\}$ be the incident edge. %

  By the first statement in \cref{claim:S-T-same-vertex-subset},
  it follows that $M(y_j)\in T$.
  
  By the second statement in \cref{claim:S-T-same-vertex-subset},
  agent~$v_i$ does not change her preference list in $P'$, meaning that agent~$v_i$ prefers $f^{v_{i}}_\ell$ to her partner~$M(v_i)\in S$ in $P'$.
  By the stability of $M$,
  it follows that $f^{v_i}_{\ell}$ prefers her partner~$M(f^{v_i}_\ell)$ to~$v_i$.
  Again, by the second statement in \cref{claim:S-T-same-vertex-subset},
  agent $f^{v_i}_\ell$ does not change her preference list in $P'$,
  meaning that $f^{v_{i}}_\ell$ prefers only $e^{y_{j}}_\ell$ to $v_{i}$.
  By the stability of $M$, we have
  $M(f^{v_{i}}_\ell)= e^{y_{j}}_\ell$.
  This implies that $\{e^{y_{j}}_\ell, y_{j}\}$ is a blocking pair, because
  $M(y_{j})\in T$ by the above and by the second statement of \cref{claim:S-T-same-vertex-subset}.
  This is a contradiction. Thus, indeed $I'$ is a $k$-vertex independent set.
  The correctness follows.

  The fact that an
  $n^{o(\globald)}$-time algorithm for \pGNSPM\ would contradict the
  Exponential Time Hypothesis follows from this reduction in
  conjunction with the fact that an $n^{o(k)}$-time algorithm for
  \textsc{Independent Set} would contradict the Exponential Time
  Hypothesis~\cite{CyFoKoLoMaPiPiSa2015}.

  \myparagraph{The egalitarian
    case.} %
  To show the desired statements for the egalitarian case, we use the
  same idea of constructing type-two auxiliary agents in the proof
  of \cref{thm:nearly-stable-inapproximable}: We append a sufficiently
  large number~$\Delta$ (to be specified later) of auxiliary agents to
  the end of each preference list which we constructed for the
  perfectness case such that the following conditions are satisfied.
  First, it is possible to match all auxiliary agents in pairs without
  inducing any egalitarian cost. Second, the matching $M$ constructed
  from an independent set as above has egalitarian cost at
  most~$\Delta - 1$. Third, every \gns{$\globald$} matching within
  egalitarian cost at most $\Delta - 1$ must be perfect. As mentioned,
  this can be done using the same construction as for the type-two
  auxiliary agents in \cref{thm:nearly-stable-inapproximable},
  adjusting the number of agents such that we add
  $\Delta \coloneqq (2k+1)\cdot (|T|+|V| + |W|+|E| + |E_Y| + 1) + 2k=
  (2k+1)\cdot (3n+2m+1) + 2k$ agents to the end of each preference
  list. The correctness follows in an analogous way as we prove
  \cref{claim:IS<->GNSPM}\eqref{IS->GLNSPEM}--\eqref{GNSPM->IS} and 
  \cref{claim:NSPM<->NSEM}.
\end{proof}
}

By \cref{prop:SMI-matched-agents-the-same},
the set of unmatched agents is the same across all stable matchings of a given preference profile~$P$. 
We call an agent \emph{initially unmatched} if she is not
contained in any stable matching of the initial profile; otherwise she is \emph{initially matched}.
From
\cref{thm:few-matched-for-swap} it follows that, in order to assign partners to  
the~$\unmatched$ initially unmatched agents, we need to allow for at
least $\globald \geq \unmatched/2$ swaps in \pGNSPM.
The number~$\unmatched$ is thus a smaller parameter than $\globald$, meaning that it could be harder to obtain parameterized tractability result with respect to~$\unmatched$ than to~$\globald$.  Indeed, we obtain intractability for $\unmatched$.

\newcommand{\wonehardunmatched}{%
  \pGNSPM\ and \pGNSEM\ are W[1]-hard with respect to the number of initially unmatched~$\unmatched$ agents. This also holds for \pLNSPM\ and
  \pLNSEM, even if $\locald = 1$.
}
\begin{corollary}%
  \label[corollary]{cor:w1h-unmatched}
  \wonehardunmatched
\end{corollary}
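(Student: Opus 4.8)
The plan is to observe that the two parameterized reductions already built in the paper each produce instances whose number~$\unmatched$ of initially unmatched agents equals exactly~$2k$, where $k$ is the size of the independent set sought in the source \textsc{Independent Set} instance. Since \textsc{Independent Set} parameterized by~$k$ is W[1]-hard and both reductions run in polynomial time, the bound $\unmatched = 2k$ (a function of~$k$) upgrades each of them into a parameterized reduction \emph{with respect to}~$\unmatched$, which is exactly what the corollary asserts. No new gadget is needed; the work is entirely in counting unmatched agents.

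For \pGNSPM\ and \pGNSEM\ I would reuse verbatim the reduction of \cref{thm:global-w1hard-swaps}. There it is already noted that every stable matching of the constructed profile leaves precisely the agents of $T \cup S$ unmatched, and $|T| = |S| = k$, so $\unmatched = 2k$. Together with the equivalence proved there --- a $k$-vertex independent set exists iff the profile admits a \gns{$2k$} perfect matching (resp.\ one of \egalcostn\ at most the prescribed bound) --- and with $\unmatched = 2k$ being linear in~$k$, this is immediately a parameterized reduction establishing W[1]-hardness with respect to~$\unmatched$.

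For \pLNSPM\ and \pLNSEM\ with $\locald = 1$ I would instead reuse the reduction of \cref{thm:nearly-stable-inapproximable}, taking both approximation factors equal to~$1$ so that it degenerates into an exact reduction that already fixes $\locald = 1$. The only new point is to count the unmatched agents of the \emph{initial} profile, which I would do group by group: the pairs $\{v_i, w_i\}$ and, for every edge~$e_\ell = \{v_i, v_j\}$, the pairs $\{e_\ell, e^{v_i}_\ell\}$ and $\{f_\ell, e^{v_j}_\ell\}$ lie in every stable matching; each type-one auxiliary agent of $A \cup B$ and each type-two auxiliary agent of $C \cup D$ is matched to its designated counterpart; hence all of these are matched. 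A surplus arises only between $T$ and $R$: since $v_i \succ^{P}_{w_i} t_i$ pushes every $t_i$ off the set~$W$ and $|T| = n > n - k = |R|$, exactly $k$ agents of $T$ remain unmatched, while all $k$ agents of~$S$ remain unmatched because each~$v_i$ is taken by the preferred~$w_i$. Thus $\unmatched = 2k$ again, and combining this with the correctness equivalence of \cref{thm:nearly-stable-inapproximable} (see \cref{claim:IS<->GNSPM}, restricted to the local $\locald = 1$ case) yields a parameterized reduction, hence W[1]-hardness with respect to~$\unmatched$, even for $\locald = 1$.

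The main obstacle, and the only step that needs genuine care, is exactly this bookkeeping in the heavily padded profile of \cref{thm:nearly-stable-inapproximable}: one must confirm that none of the $\Theta(n^2)$ auxiliary agents in $A \cup B \cup C \cup D$ is ever left unmatched by a stable matching of the initial profile, so that the count cannot secretly exceed~$2k$. Once this is checked, the corollary follows at once from the two existing reductions, with no further construction required.
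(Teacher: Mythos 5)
Your proposal is correct and follows essentially the same approach as the paper: reuse an existing W[1]-hardness reduction from \textsc{Independent Set} and verify that the constructed initial profile leaves exactly $2k$ agents unmatched, so that $\unmatched=2k$ turns it into a parameterized reduction with respect to $\unmatched$. The only (harmless) deviation is that the paper reuses the reduction of \cref{thm:nearly-stable-inapproximable} for all four problems, whereas you invoke the reduction of \cref{thm:global-w1hard-swaps} for the global variants; both choices work, since that profile also leaves precisely the $2k$ agents of $T\cup S$ unmatched, and your bookkeeping for the padded profile (the $\{v_i,w_i\}$ pairs, the edge gadgets, the $T$--$R$ submarket leaving $k$ agents of $T$ and all of $S$ unmatched, and the auxiliary agents being saturated in counterpart pairs) matches the paper's count.
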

\appendixproofwithstatement{cor:w1h-unmatched}{\wonehardunmatched}{
\begin{proof}
  The reduction in the proof of \cref{thm:nearly-stable-inapproximable} indeed is
  a parameterized reduction with respect to the number of unmatched
  agents which shows W[1]-hardness: First, observe that the reduction
  runs in polynomial time. Second, \textsc{Independent Set} is
  W[1]-hard with respect to the number~$k$ of vertices in the sought
  independent set. Third, the number of unmatched agents in any stable matching of the
  initial profile~$P$ constructed by the reduction is at most $2k$. To see
  this, observe the following. Let $M$ be a stable matching for~$P$.
  For each vertex agent~$v_i \in V$, it must hold that $\{v_i, w_i\} \in M$, since this pair would
  otherwise block~$M$. Thus, for each edge
  $e_\ell = \{v_i, v_j\} \in E(G)$, the edge agent~$e_\ell$ is matched by~$M$ to either
  $e^{v_i}_\ell$ or~$e^{v_j}_\ell$, and $f_\ell$ is matched to
  $\{e^{v_i}_\ell, e^{v_j}_\ell\} \setminus \{M(e_\ell)\}$. Furthermore,
  for each $z\in [n - k]$, agent~$t_z\in T$ is matched to agent~$r_z\in R$,
  saturating~$R$. Hence, the only unmatched agents are the $k$ agents
  in~$\{t_{n-k+1},\ldots,t_{n}\}$ and the $k$ agents in~$S$.
\end{proof}
}

On a side note, it is not hard to obtain a fixed-parameter algorithm
for \pGNSPM\ with respect to the number~$\matched$ of \emph{initially
  matched agents}, that is, the number of agents that occur in every
stable matching of the initial profile.

\newcommand{\globalmatchedfpt}{%
  \pGNSPM\ can be solved in $O(\matched^{\matched^2\cdot \log{(\matched)}}\cdot n^2)$~time and admits a problem kernel with $2\matched$ agents that can be computed in
  linear time. %
}
\begin{proposition}%
  \label[proposition]{prop:GNSPM-matched-agents-fpt}
  \globalmatchedfpt
\end{proposition}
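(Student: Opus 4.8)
The plan is to show that the parameter $\matched$ secretly bounds the size of any \emph{relevant} instance, so that after a trivial linear-time reduction we can afford a brute-force search. The whole argument rests on one structural observation about initially unmatched agents, after which both the kernel and the running time fall out with little further work.

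First I would record the key fact: in a profile without ties, \emph{no two initially unmatched agents on opposite sides are mutually acceptable}. Indeed, fix one stable matching $M_0$ of the initial profile $P$; it can be found in linear time by Gale--Shapley, and by \cref{prop:SMI-matched-agents-the-same} its set of unmatched agents is exactly the set $S$ of initially unmatched agents. If some $x \in U \cap S$ and $y \in W \cap S$ were mutually acceptable, then, since both are unmatched in $M_0$, the pair $\{x,y\}$ would block $M_0$, contradicting stability. Since matchings pair only mutually acceptable agents, it follows that every partner an initially unmatched agent can receive in any matching is an initially \emph{matched} agent, i.e.\ an element of the $\matched$-element set $R = (U\cup W)\setminus S$.

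Next I would turn this into the kernelization. A swap never changes an agent's acceptable set (the swap distance is infinite between lists with different acceptable sets), so the acceptability structure, and hence the mere existence of a perfect matching, is identical in $P$ and in every profile $P'$ with $\tau(P,P')\le d$. By the observation, in any perfect matching each of the $n-\matched/2$ initially unmatched agents of $U$ must be sent to a \emph{distinct} partner among the $\matched/2$ initially matched agents of $W$, which forces $n-\matched/2\le \matched/2$, i.e.\ $\matched \ge n$. Thus the reduction is: compute $\matched$ from one stable matching in linear time; if $\matched < n$, output a fixed trivial no-instance (no perfect matching exists in $P$ or in any reachable $P'$); otherwise $2n \le 2\matched$, so the instance already contains at most $2\matched$ agents and is output unchanged together with $d$. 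In both cases the result has at most $2\matched$ agents and is equivalent to the input.

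Finally, on the kernel I would decide \pGNSPM{} by exhaustive search over profiles. Because acceptable sets are fixed and there are at most $2\matched$ agents, the number of candidate profiles $P'$ is at most $((2\matched)!)^{2\matched} = \matched^{O(\matched^2)}$, comfortably within $\matched^{\matched^2\log(\matched)}$; for each I would test $\tau(P_{\mathrm{ker}},P')\le d$ and, using Gale--Shapley together with \cref{prop:SMI-matched-agents-the-same}, whether some stable matching of $P'$ is perfect, each test costing $O(\matched^2)$. Adding the linear-time kernelization and $O(n^2)$ bookkeeping yields the stated $O(\matched^{\matched^2\log(\matched)}\cdot n^2)$ bound (the exponent is deliberately loose, the product form absorbing the additive terms). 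I do not expect a genuine obstacle: the only real content is the mutual-acceptability observation, which collapses the parameter, and the remaining points merely require care with conventions---that matchings pair only mutually acceptable agents (so the blocking-pair argument applies) and that $\matched$ is even with exactly $\matched/2$ matched agents on each side, which is what drives the $\matched\ge n$ dichotomy.
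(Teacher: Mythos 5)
Your proposal is correct and follows essentially the same route as the paper's proof sketch: the same observation that no two initially unmatched agents can be mutually acceptable (else they would block a stable matching of $P$), the same resulting bound forcing $n \le \matched$ in any yes-instance and hence a kernel with at most $2\matched$ agents, and the same brute-force enumeration of reorderings of the (length-at-most-$\matched$) preference lists on the kernel. Your $\matched < n$ test is just a repackaging of the paper's condition $n_2 > n_3$, and your slightly looser count of candidate profiles still lands within the stated running time.
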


\appendixproofwithstatement{prop:GNSPM-matched-agents-fpt}{\globalmatchedfpt}{
  \begin{proof}[Proof Sketch]

  Let $P$ be a preference profile with agents sets~$U$ and $W$ of size $n$ each. 
  Further let $n_1$ (resp.\ $n_2$) denote the number of agents from $U$ that are initially matched (resp.\ unmatched) under any stable matching of $P$.
  Analogously, we define $n_3$ and $n_4$ for the set $W$, i.e.\ $n_3$ (resp.\ $n_4$) denotes the number of from $W$ that are initially matched (resp.\ unmatched) under any stable matching of $P$.
  By the definition of matching, it follows that
  \begin{align*}
    n_1 = n_3  \text{ and }\   n_2 = n_4.
  \end{align*}
  Hence, together with the definition of $\matched$ and $\unmatched$, it follows that
  \begin{align}\label{eq:matched-unmatched-relation}
    n_1+n_3 = 2n_1 = 2 n_3 = \matched\ \text{ and }\ n_2 + n_4 =2n_2 =2n_4 =\unmatched.
  \end{align}
  Observe that no two initially unmatched agents are acceptable to
  each other since, if they were, then they would form a blocking
  pair. Thus, in each matching that represents a solution to \pGNSPM,
  each initially unmatched agent from $U$ (resp.\ $W$) is partnered with an initially
  matched agent from $W$ (resp.\ $U$).
  Thus, if $n_2 > n_3$ or $n_4 > n_1$, we can immediately
  return no (or a trivial no-instance).
  By \eqref{eq:matched-unmatched-relation},
  we obtain that
  \begin{align}\label{eq:matched-unmatched-U-W}
   |U|= n_1 + n_2 \le n_1+n_3 =  \matched\ \text{ and }\ 
   |W|= n_3 + n_4 \le n_3+n_1 =  \matched.
  \end{align}
  In other words, we obtain a problem kernel with at most $2\matched$ agents. This takes $O(n^2)$ time.

  To solve \pGNSPM{} in $O(2^{\matched^2\cdot \log({\matched})} \cdot n^2)$~time,   
  observe that by \eqref{eq:matched-unmatched-U-W} each agent from $U\cup W$ has at most $\matched$ agents in her preference list, 
  and there are $2\matched$~agents from $U \cup W$.
  We hence iterate through all $(\matched!)^{2{\matched}}$ target profile~$P'$ which differs from $P$ by at most $\globald$ swaps.
  For each of these preference profiles, we check in $O(n^2)$ time whether it admits a stable and perfect matching.
\end{proof}
}
We conclude this section by remarking that the kernelization approach for \cref{prop:GNSPM-matched-agents-fpt} cannot be directly adapted to work for the egalitarian case because not every initially unmatched agent needs to be matched in an optimal egalitarian stable matching. %

\section{Conclusion and Open Questions}

In this paper we have introduced and studied a framework describing the strength of the stability of matchings under preferences.
Our framework unifies and extends some of the few approaches that already exist in the literature, such as additive $\alpha$-stability~\cite{pini_stability_2013,anshelevich_anarchy_2013},
$r$-maximal stability~\cite{DruBou13}, and robustness to the errors in inputs~\cite{MaiVaz2018,MaiVaz2018-birkhoff-arxiv}.
We have elaborated that all these approaches can be expressed by the same model,
where the central idea is to investigate the preference profiles which have bounded distance to the input profile.
Thus, we open up a general framework to study questions that have already received attention in the literature. 

From a computational point of view, we have shown a somehow counter-intuitive relation between strength of stability, and other criteria of social optimality. On the one hand, there exists a polynomial-time algorithm for finding robust matchings if they exist (recall that robustness is a stronger concept than classic stability) even if we additionally aim to reach social optimality. On the other hand, if we ask about \nstability{} instead of robustness, the problem becomes computationally hard in many aspects: it is hard to approximate and hard from the point of view of parameterized complexity. Our computational results are summarized in \Cref{tab:summary}.    

We conclude with some challenges for future research.
First of all, for the case where no $d$-robust matchings exist, we may look for a matching that admits the fewest number of blocking pairs~\cite{BiMaMcD2012,CHSYicalp-par-stable2018} in every profile that has swap distance~$d$ to the input profile. 

Second, continuing our research in \cref{sec:Robust+Ties} where we showed that \pRMl{} becomes NP-hard when ties are allowed, our \nstability{} concept can be generalized to the case with ties. Moreover, both robustness and \nstability{}, though introduced for the bipartite variant (\textsc{Stable Marriage}), can be generalized to the non-bipartite variant (\textsc{Stable Roommates}).
It would be interesting to see whether our algorithmic results transfer to these cases.

Regarding preference restrictions~\cite{BreCheFinNie2017}, it would be interesting to know whether assuming a special preference structure can help in finding tractable cases for nearly stable matchings.

\subsubsection*{Acknowledgments} Piotr Skowron was supported by the Foundation for Polish Science within the Homing programme (Project title: "Normative Comparison of Multiwinner Election Rules"). Jiehua Chen and Manuel Sorge were supported by the European Research Council (ERC) under the European Union’s Horizon 2020 research and innovation programme under grant agreement numbers~677651 (JC) and~714704 (MS).\\ \includegraphics[width=50px]{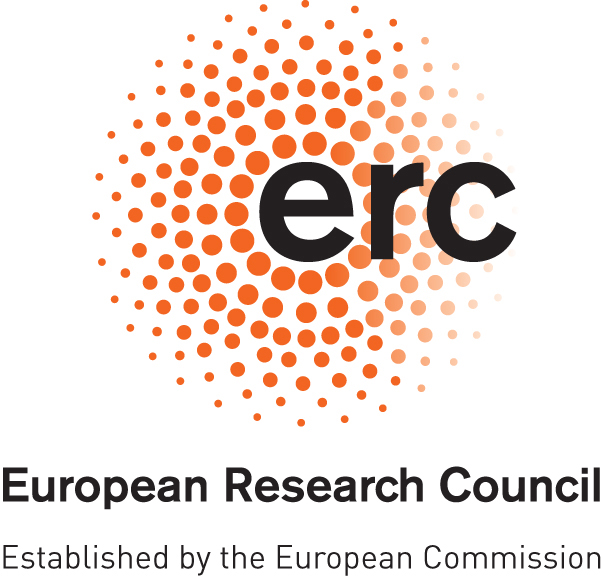}\hspace{.5cm} \includegraphics[width=50px]{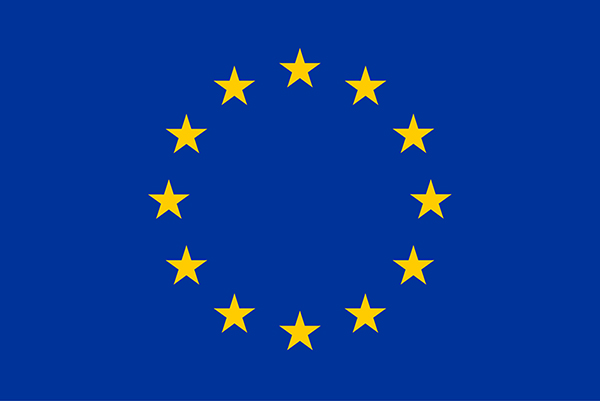}%

\bibliographystyle{abbrvnat}

\bibliography{bib-arxiv}

\begin{thebibliography}{34}
\providecommand{\natexlab}[1]{#1}
\providecommand{\url}[1]{\texttt{#1}}
\expandafter\ifx\csname urlstyle\endcsname\relax
  \providecommand{\doi}[1]{doi: #1}\else
  \providecommand{\doi}{doi: \begingroup \urlstyle{rm}\Url}\fi

\bibitem[Anshelevich et~al.(2013)Anshelevich, Das, and
  Naamad]{anshelevich_anarchy_2013}
E.~Anshelevich, S.~Das, and Y.~Naamad.
\newblock Anarchy, stability, and utopia: Creating better matchings.
\newblock \emph{Autonomous Agents and Multi-Agent Systems}, 26\penalty0
  (1):\penalty0 120--140, 2013.

\bibitem[Aziz et~al.(2016)Aziz, Bir{\'{o}}, Gaspers, de~Haan, Mattei, and
  Rastegari]{AzBiGaHaMaRa2016}
H.~Aziz, P.~Bir{\'{o}}, S.~Gaspers, R.~de~Haan, N.~Mattei, and B.~Rastegari.
\newblock Stable matching with uncertain linear preferences.
\newblock In \emph{Proc. SAGT-16}, pages 195--206, 2016.

\bibitem[Bir{\'{o}} et~al.(2012)Bir{\'{o}}, Manlove, and McDermid]{BiMaMcD2012}
P.~Bir{\'{o}}, D.~Manlove, and E.~McDermid.
\newblock ``{A}lmost stable'' matchings in the {R}oommates problem with bounded
  preference lists.
\newblock \emph{Theor.\ Comput.\ Sci.}, 432:\penalty0 10--20, 2012.

\bibitem[Boudreau and Knoblauch(2013)]{BouKno13}
J.~Boudreau and V.~Knoblauch.
\newblock Preferences and the price of stability in matching markets.
\newblock \emph{Theory and Decision}, 74\penalty0 (4):\penalty0 565--589, 2013.

\bibitem[Bredereck et~al.(2017{\natexlab{a}})Bredereck, Chen, Finnendahl, and
  Niedermeier]{BreCheFinNie2017}
R.~Bredereck, J.~Chen, U.~P. Finnendahl, and R.~Niedermeier.
\newblock Stable roommate with narcissistic, single-peaked, and single-crossing
  preferences.
\newblock In \emph{Proc. ADT-17}, pages 315--330, 2017{\natexlab{a}}.

\bibitem[Bredereck et~al.(2017{\natexlab{b}})Bredereck, Faliszewski,
  Kaczmarczyk, Niedermeier, Skowron, and Talmon]{BreFalKacNieSkoTal18}
R.~Bredereck, P.~Faliszewski, A.~Kaczmarczyk, R.~Niedermeier, P.~Skowron, and
  N.~Talmon.
\newblock Robustness among multiwinner voting rules.
\newblock In \emph{Proc. SAGT-17}, pages 80--92, 2017{\natexlab{b}}.

\bibitem[Camion(1965)]{Camion1965}
P.~Camion.
\newblock Characterization of totally unimodular matrices.
\newblock \emph{Proceedings of the American Mathematical Society}, 16\penalty0
  (5):\penalty0 1068--1073, 1965.

\bibitem[Chen et~al.(2018{\natexlab{a}})Chen, Hermelin, Sorge, and
  Yedidsion]{CHSYicalp-par-stable2018}
J.~Chen, D.~Hermelin, M.~Sorge, and H.~Yedidsion.
\newblock How hard is it to satisfy (almost) all roommates?
\newblock In \emph{Proc. ICALP-18}, pages 35:1--35:15, 2018{\natexlab{a}}.

\bibitem[Chen et~al.(2018{\natexlab{b}})Chen, Niedermeier, and
  Skowron]{CheNieSkoECmstable2018}
J.~Chen, R.~Niedermeier, and P.~Skowron.
\newblock Stable marriage with multi-modal preferences.
\newblock In \emph{Proc. EC-18}, pages 269--286, 2018{\natexlab{b}}.

\bibitem[Cygan et~al.(2015)Cygan, Fomin, Kowalik, Lokshtanov, Marx, Pilipczuk,
  Pilipczuk, and Saurabh]{CyFoKoLoMaPiPiSa2015}
M.~Cygan, F.~V. Fomin, L.~Kowalik, D.~Lokshtanov, D.~Marx, M.~Pilipczuk,
  M.~Pilipczuk, and S.~Saurabh.
\newblock \emph{Parameterized Algorithms}.
\newblock Springer, 2015.

\bibitem[Downey and Fellows(2013)]{DF13}
R.~G. Downey and M.~R. Fellows.
\newblock \emph{Fundamentals of Parameterized Complexity}.
\newblock Springer, 2013.

\bibitem[Drummond and Boutilier(2013)]{DruBou13}
J.~Drummond and C.~Boutilier.
\newblock Elicitation and approximately stable matching with partial
  preferences.
\newblock In \emph{Proc. IJCAI-13}, pages 97--105, 2013.

\bibitem[Faliszewski and Rothe(2016)]{FalRot16}
P.~Faliszewski and J.~Rothe.
\newblock Control and bribery in voting.
\newblock In \emph{Handbook of Computational Social Choice}. Cambridge
  University Press, 2016.

\bibitem[Flum and Grohe(2006)]{FG06}
J.~Flum and M.~Grohe.
\newblock \emph{Parameterized Complexity Theory}.
\newblock Springer, 2006.

\bibitem[Gale and Shapley(1962)]{GaleShapley1962}
D.~Gale and L.~S. Shapley.
\newblock College admissions and the stability of marriage.
\newblock \emph{The American Mathematical Monthly}, 120\penalty0 (5):\penalty0
  386--391, 1962.

\bibitem[Genc et~al.(2017{\natexlab{a}})Genc, Siala, Simonin, and
  O'Sullivan]{GenSiaSimSul17}
B.~Genc, M.~Siala, G.~Simonin, and B.~O'Sullivan.
\newblock On the complexity of robust stable marriage.
\newblock In \emph{Proc. COCOA-17}, pages 441--448, 2017{\natexlab{a}}.

\bibitem[Genc et~al.(2017{\natexlab{b}})Genc, Siala, Simonin, and
  O'Sullivan]{GenSiaSimSul17b}
B.~Genc, M.~Siala, G.~Simonin, and B.~O'Sullivan.
\newblock Robust stable marriage.
\newblock In \emph{Proc. AAAI-17}, pages 4925--4926, 2017{\natexlab{b}}.

\bibitem[Gusfield and Irving(1989)]{GusfieldIrving1989}
D.~Gusfield and R.~W. Irving.
\newblock \emph{The Stable Marriage Problem--{S}tructure and Algorithms}.
\newblock Foundations of Computing Series. {MIT} Press, 1989.

\bibitem[Irving(1994)]{Irving1994}
R.~W. Irving.
\newblock Stable marriage and indifference.
\newblock \emph{Discrete Applied Mathematics}, 48\penalty0 (3):\penalty0
  261--272, 1994.

\bibitem[Irving et~al.(1987)Irving, Leather, and Gusfield]{IrLeGu1987}
R.~W. Irving, P.~Leather, and D.~Gusfield.
\newblock An efficient algorithm for the `optimal' stable marriage.
\newblock \emph{Journal of the ACM}, 34\penalty0 (3):\penalty0 532--543, 1987.

\bibitem[Kendall(1948)]{Kendall1938}
M.~Kendall.
\newblock A new measure of rank correlation.
\newblock \emph{Biometrika}, 30:\penalty0 81--89, 1948.

\bibitem[Knuth(1976)]{Knuth1976}
D.~Knuth.
\newblock \emph{Mariages Stables}.
\newblock Les Presses de L'Universit{\'e} de Montr{\'e}al, 1976.

\bibitem[Mai and Vazirani(2018{\natexlab{a}})]{MaiVaz2018}
T.~Mai and V.~V. Vazirani.
\newblock Finding stable matchings that are robust to errors in the input.
\newblock In \emph{Proc. ESA-18}, pages 60:1--60:11, 2018{\natexlab{a}}.

\bibitem[Mai and Vazirani(2018{\natexlab{b}})]{MaiVaz2018-birkhoff-arxiv}
T.~Mai and V.~V. Vazirani.
\newblock A generalization of {Birkhoff's} theorem for distributive lattices,
  with applications to robust stable matchings.
\newblock Technical report, arXiv:1804.05537 [cs.DM], 2018{\natexlab{b}}.

\bibitem[Manlove et~al.(2002)Manlove, Irving, Iwama, Miyazaki, and
  Morita]{MaIrIwMiMo2002}
D.~Manlove, R.~W. Irving, K.~Iwama, S.~Miyazaki, and Y.~Morita.
\newblock Hard variants of stable marriage.
\newblock \emph{Theor.\ Comput.\ Sci.}, 276\penalty0 (1-2):\penalty0 261--279,
  2002.

\bibitem[Manlove(2013)]{Manlove2013}
D.~F. Manlove.
\newblock \emph{Algorithmics of Matching Under Preferences}, volume~2.
\newblock WorldScientific, 2013.

\bibitem[Menon and Larson(2018)]{MenLar18}
V.~Menon and K.~Larson.
\newblock Robust and approximately stable marriages under partial information.
\newblock In \emph{Proc. WINE-18}, pages 341--355, 2018.

\bibitem[Miller(1956)]{Mil56}
G.~Miller.
\newblock The magical number seven, plus or minus two: Some limits on our
  capacity for processing information.
\newblock \emph{Psychological Review}, 63\penalty0 (2):\penalty0 81--97, 1956.

\bibitem[Miyazaki and Okamoto(2017)]{MiOk2017}
S.~Miyazaki and K.~Okamoto.
\newblock Jointly stable matchings.
\newblock In \emph{Proc. ISAAC-17}, pages 56:1--56:12, 2017.

\bibitem[Niedermeier(2006)]{Nie06}
R.~Niedermeier.
\newblock \emph{Invitation to Fixed-Parameter Algorithms}.
\newblock Oxford University Press, 2006.

\bibitem[Papadimitriou(2007)]{Papadimitriou07}
C.~H. Papadimitriou.
\newblock The complexity of finding nash equilibria.
\newblock In \emph{Algorithmic Game Theory}, pages 29--51. Cambridge University
  Press, 2007.

\bibitem[Pini et~al.(2013)Pini, Rossi, Venable, and Walsh]{pini_stability_2013}
M.~S. Pini, F.~Rossi, K.~B. Venable, and T.~Walsh.
\newblock Stability, {{Optimality}} and {{Manipulation}} in {{Matching
  Problems}} with {{Weighted Preferences}}.
\newblock \emph{Algorithms}, 6\penalty0 (4):\penalty0 782--804, 2013.

\bibitem[Roth and Sotomayor(1992)]{RothSotomayor1992}
A.~Roth and M.~Sotomayor.
\newblock \emph{Two-Sided Matching: {A} Study in Game-Theoretic Modeling and
  Analysis}.
\newblock 1992.
\newblock Part of Econometric Society Monographs.

\bibitem[Shiryaev et~al.(2013)Shiryaev, Yu, and Elkind]{ShiYuElk13}
D.~Shiryaev, L.~Yu, and E.~Elkind.
\newblock Possible and necessary winners of partial tournaments.
\newblock In \emph{Proc. AAMAS-13}, pages 415--422, 2013.

\end{thebibliography}

\end{document}

